\newtheorem{rrule}[theorem]{Reduction Rule}
\crefname{rrule}{Reduction Rule}{Reduction Rules}
\newcommand{\Oh}{\ensuremath{\mathcal{O}}}
\newcommand{\bw}{bw-threshold dominating set}
\newcommand{\BW}{\textsc{BW-Threshold Dominating Set}}
\newcommand{\ramsey}{R_c}
\newcommand{\ramseyb}{Q_c}
\newcommand{\problemdef}[3]{
  \begin{center}
    \begin{minipage}{0.95\textwidth}
      \normalsize\textsc{#1} \smallskip \\
      \begin{tabularx}{\textwidth}{@{}l@{\hspace{3pt}}X}
        \normalsize\textbf{Input:}    & \normalsize#2 \\
        \normalsize\textbf{Question:} & \normalsize#3
      \end{tabularx}
    \end{minipage}
  \end{center}
}
\title{Exploiting \texorpdfstring{$c$}{c}-Closure in Kernelization Algorithms for Graph Problems}
\titlerunning{Exploiting \texorpdfstring{$c$}{c}-Closure in Kernelization Algorithms for Graph Problems}
\author{Tomohiro Koana}{Technische Universität Berlin, Algorithmics and Computational Complexity, Germany}{tomohiro.koana@tu-berlin.de}{https://orcid.org/0000-0002-8684-0611}{Supported by the Deutsche Forschungsgemeinschaft (DFG), project FPTinP, NI 369/19.}
\author{Christian Komusiewicz}{Philipps-Universität Marburg, Fachbereich Mathematik und Informatik,  Marburg, Germany}{komusiewicz@informatik.uni-marburg.de}{https://orcid.org/0000-0003-0829-7032}{}
\author{Frank Sommer}{Philipps-Universität Marburg, Fachbereich Mathematik und Informatik,  Marburg, Germany}{fsommer@informatik.uni-marburg.de}{https://orcid.org/0000-0003-4034-525X}{Supported by the Deutsche Forschungsgemeinschaft (DFG), project MAGZ, KO~3669/4-1.}
\authorrunning{Tomohiro~Koana, Christian~Komusiewicz and Frank~Sommer}
\keywords{Fixed-parameter tractability, kernelization, \texorpdfstring{$c$}{c}-closure, Dominating Set, Induced Matching, Irredundant Set, Ramsey numbers}
\begin{document}

\maketitle
\begin{abstract}
 A graph is~$c$-closed if every pair of vertices with at least~$c$ common neighbors is adjacent. The $c$-closure of a graph~$G$ is the smallest number~$c$ such that~$G$ is~$c$-closed. Fox et al.~[SIAM~J.~Comput.~'20] defined~$c$-closure and investigated it in the context of clique enumeration. We show that~$c$-closure can be applied in kernelization algorithms for several classic graph problems. We show that \textsc{Dominating Set} admits a kernel of size~$k^{\Oh(c)}$, that \textsc{Induced Matching} admits a kernel with~$\Oh(c^7 k^{8})$ vertices, and that \textsc{Irredundant Set} admits a kernel with~$\Oh(c^{5/2} k^3)$ vertices.  Our kernelizations exploit the fact that $c$-closed graphs have polynomially-bounded Ramsey numbers, as we show.
\end{abstract}

\section{Introduction}
The area of Parameterized Complexity~\cite{CFK+15,DF13} aims at understanding which properties of input data can be used in the design of efficient algorithms for problems that are hard in
general. These input properties are encapsulated in the notion of a
\emph{parameter}, a numerical value that can be attributed to each input instance~$I$. For
a given hard problem and parameter~$k$, the first aim is to find a \emph{fixed-parameter algorithm}, an algorithm that solves the problem
in $f(k)\cdot |I|^{\Oh(1)}$~time. Such an algorithm is efficient when~$f$ grows moderately and $k$~takes on small
values. A second aim is to provide a \emph{kernelization}. This is an algorithm that given any instance~$(I,k)$ of a parameterized problem computes in polynomial time an equivalent instance of size~$g(k)$. If~$g$ grows not too much and~$k$ takes on small values, then a kernelization provably shrinks large input instances and thus gives a guarantee for the efficacy of data reduction rules. 
A central part of the design of good parameterized algorithms is thus the identification
of suitable parameters.
\begin{table}[t]
  {\footnotesize\caption{A comparison of the $c$-closure with the number~$n$ of vertices, number~$m$ of edges, and the maximum degree~$\Delta$ in social and biological networks.}}
  \label{tab:c-closure}
  \centering
  \begin{tabular}{lrrrr}
    \toprule
    Instance name & $n$& $m$ & $\Delta$ & $c$\\
    \midrule
    adjnoun-adjacency & 
112 & 425 & 49 & 14 \\ 
arenas-jazz &
198 & 2\,742 & 100 & 42\\ 
ca-netscience & 
379 & 914 & 34 & 5\\ 
bio-celegans &
453 & 2\,025 & 237 & 26 \\
bio-diseasome &
516 & 1\,188 & 50 & 9 \\
soc-wiki-Vote &
889 & 2\,914 & 102 & 18 \\ 
arenas-email &
1\,133 & 5\,451 & 71 & 19 \\
bio-yeast  &
1\,458 & 1\,948 & 56 & 8 \\
ca-CSphd  &
1\,882 & 1\,740 & 46 & 3\\ 
soc-hamsterster  &
2\,426 & 16\,630 & 273 & 77 \\
ca-GrQc  &
4\,158 & 13\,422 & 81 & 43 \\
soc-advogato  &
5\,167 & 39\,432 & 807 & 218 \\
bio-dmela  &
7\,393 & 25\,569 & 190 & 72 \\
ca-HepPh  &
11\,204 & 117\,619 & 491 & 90\\
ca-AstroPh  &
17\,903 & 196\,972 & 504 & 61\\ 
soc-brightkite & 
56\,739 & 212\,945 & 1\,134 & 184\\
    \bottomrule
  \end{tabular}
\end{table}

A good parameter should have the following advantageous
traits. Ideally, it should be easy to understand and compute.\footnote{This cannot always be
  guaranteed. For example, the important parameter treewidth is hard
  to compute and not as easily understood as simpler parameters.} It should take on small
values in real-world input data. It should describe input properties that are not captured
by other parameters. Finally, many problems should be amenable to parameterization using
this parameter. In other words, the parameter should help when designing fixed-parameter
algorithms or kernelizations.  

Fox et al.~\cite{FRSWW20} recently introduced the graph parameter~\emph{$c$-closure} which
describes a structural feature of many real-world graphs: When two vertices have many
common neighbors, it is likely that they are adjacent. More precisely, the $c$-closure of
a graph is defined as follows.
\begin{definition}[\cite{FRSWW20}]
  \label{defi-c-close-weakly-gamma}  
  A graph $G=(V,E)$ is \emph{$c$-closed} if every pair of vertices~$u\in V$ and~$v\in V$ with at
  least~$c$ common neighbors is adjacent. The $c$-closure of a graph is the smallest number~$c$ such that~$G$ is $c$-closed. 
\end{definition}
The parameter has many of the desirable traits mentioned above: it is easy to understand
and easy to compute, in $\Oh(n^\omega)$~time by squaring the adjacency matrix~\cite{FRSWW20}. 
Moreover, social networks are $c$-closed for relatively small values
of~$c$~\cite{FRSWW20}; refer also to \Cref{tab:c-closure}, where we provide the closure number of some further social networks.
In addition, the $c$-closure of
a graph gives a new class of graphs which is not captured by other popular measures. For example, every complete graph is 1-closed. Hence, a graph can have bounded
$c$-closure but unbounded degeneracy and thus unbounded treewidth. Conversely, the graph consisting of two vertices~$u$ and~$v$ and many vertex-disjoint~$u$-$v$-paths of length 2 is 2-degenerate, has treewidth 2, and unbounded~$c$-closure. Generally, one may observe that~$c$-closure is different from many common parameterizations which measure, in various ways, the sparseness of the input graph.  In this sense, the
structure described by the~$c$-closure of graphs is novel. The aim of this work is to show that~$c$-closure also has the final, most important trait: it helps when designing
fixed-parameter algorithms.

Fox et al.~\cite{FRSWW20} applied $c$-closure to the enumeration of maximal cliques, showing that a $c$-closed graph may have at most $3^{(c-1)/3}\cdot n^2$ maximal cliques. 
In combination with known clique enumeration algorithms, this implies that all maximal cliques of a graph can be enumerated in~$\Oh^*(3^{c/3})$ time.\footnote{The~$\Oh^*$ notation hides polynomial factors in the input size.}

How parameterization by $c$-closure helps, can also be seen rather easily for the
\textsc{Independent Set} problem. In \textsc{Independent Set} we are given an
undirected graph~$G=(V,E)$ and an integer~$k$ and want to determine whether~$G$ contains a
set of~$k$ vertices that are pairwise nonadjacent.  \textsc{Independent Set} is W[1]-hard when parameterized
by~$k$~\cite{CFK+15,DF13}. When one uses the maximum degree of~$G$ as an additional
parameter, then \textsc{Independent Set} has a trivial  kernelization: Any graph with at least~$(\Delta+1)k$ vertices has an independent set of size at least~$k$. With the following data reduction rule, we can obtain a kernelization for the combination of~$c$~and~$k$.
\begin{rrule}
  \label{rr:is}
  If~$G$ contains a vertex~$v$ of degree at least~$(c-1)(k-1)+1$, then remove~$v$ from~$G$.
\end{rrule}
To see that \Cref{rr:is} is correct, we need to show that the resulting graph~$G'$ has an independent set~$I$ of size~$k$ if and only if the original graph~$G$ has one. 
The nontrivial direction to show is that if~$G$ has an independent set~$I$ of size~$k$, then so does~$G'$.
Since this clearly holds for~$v \notin I$, we assume that~$v \in I$. 
To replace~$v$ by some other vertex, we make use of the~$c$-closure:
Every vertex~$u$ in~$I\setminus \{v\}$ has at most~$c-1$ neighbors in common with~$v$ since~$u$ and~$v$ are nonadjacent.
Thus, at most~$(c-1)(k-1)$ neighbors of~$v$ are also neighbors of some vertex in~$I\setminus \{v\}$.
Consequently, some neighbor~$w$ of~$v$ has no neighbors in~$I\setminus \{v\}$ and, therefore,~$(I\setminus \{v\})\cup \{w\}$ is an independent set of size~$k$ in~$G'$.

Applying \Cref{rr:is} exhaustively results in an instance with maximum degree less than~$ck$ which, due to the discussion above, directly gives the following. 
\begin{proposition}
  \textsc{Independent Set} admits a kernel with at most~$ck^2$ vertices.
\end{proposition}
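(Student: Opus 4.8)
The plan is to combine Reduction Rule~\ref{rr:is} with the trivial pigeonhole bound mentioned just before it. First I would argue that the reduction rule is safe to apply repeatedly: the rule only deletes vertices and, by the correctness argument already given, preserves the existence of an independent set of size~$k$ in both directions. Since each application removes one vertex, the process terminates in polynomial time, and applying it exhaustively yields an equivalent instance~$(G',k)$ in which no vertex has degree at least~$(c-1)(k-1)+1$; that is, every vertex of~$G'$ has maximum degree~$\Delta \le (c-1)(k-1) < ck$.

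Next I would invoke the pigeonhole observation stated earlier in the excerpt, namely that any graph on at least~$(\Delta+1)k$ vertices contains an independent set of size~$k$ (greedily pick a vertex, delete it and its at most~$\Delta$ neighbors, and repeat; each round removes at most~$\Delta+1$ vertices, so one can perform~$k$ rounds). Applying this to~$G'$ with the degree bound~$\Delta < ck$ gives the desired size bound: if~$G'$ has at least~$ck \cdot k = ck^2$ vertices, we can immediately answer \textbf{yes}. Concretely, the kernelization algorithm either outputs such a trivial \textbf{yes}-instance or certifies that~$G'$ has fewer than~$ck^2$ vertices.

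I would then assemble these pieces into the size bound. After exhaustive reduction, either the number of vertices of~$G'$ is at least~$ck^2$, in which case~$(G',k)$ is a trivial yes-instance (which we can replace by a constant-size canonical yes-instance), or~$G'$ already has fewer than~$ck^2$ vertices. In both cases we obtain an equivalent instance with at most~$ck^2$ vertices, which is the claimed kernel.

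I do not expect any genuine obstacle here, since the essential content, the correctness of Reduction Rule~\ref{rr:is}, has already been established. The only points requiring minor care are verifying that repeated deletion does not interfere with the $c$-closure argument (which holds because removing vertices can only decrease the number of common neighbors, so it cannot create new high-degree violations), and handling the trivial yes-instances cleanly so that the output is a bona fide kernel rather than merely a promise. These are routine and do not affect the stated bound.
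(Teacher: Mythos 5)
Your proposal is correct and matches the paper's proof: exhaustive application of Reduction Rule~\ref{rr:is} yields maximum degree less than~$ck$, and the greedy pigeonhole bound of~$(\Delta+1)k$ vertices then gives the~$ck^2$ kernel, exactly as the paper argues. Your added remarks on preserving $c$-closure under deletion and on replacing trivial yes-instances by a constant-size instance are sound housekeeping that the paper leaves implicit.
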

Motivated by this simple result for a famous graph problem, we study how $c$-closure can
be useful for further classic graph problems when they are parameterized by a combination
of~$c$ and the solution size parameter~$k$. We obtain the following positive results. In
\Cref{sec:ds}, we show that \textsc{Dominating Set} admits a kernel
of size~$k^{\Oh(c)}$ and show that this kernelization is asymptotically optimal with respect to the dependence of the exponent on~$c$. Our results also hold for the more general \textsc{Threshold Dominating Set} problem where each vertex needs to be dominated~$r$ times.
In \Cref{sec:im}, we show that \textsc{Induced Matching} admits a kernel
with~$\Oh(c^{7} k^{8})$~vertices by means of LP relaxation of \textsc{Vertex Cover}.
Finally in \Cref{sec:irs}, we show that \textsc{Irredundant Set} admits a kernel
with~$\Oh(c^{5/2} k^3)$~vertices.
All kernelizations exploit a bound on Ramsey numbers for~$c$-closed graphs, which we prove in \Cref{sec:ramsey}.
This bound is---in contrast to Ramsey numbers of general graphs---polynomial in the size of a sought clique and independent set. We believe that this bound on the Ramsey numbers is of independent interest and that it provides a useful tool in the design of fixed-parameter algorithms for further problems on~$c$-closed graphs.

\paragraph*{Further Related Work} Very recently, there have been further studies of $c$-closure and weak $\gamma$-closure, a closely related parameter.\footnote{Informally,  weak~$\gamma$-closure relates  to $c$-closure in the same way that degeneracy relates to the maximum degree. The weak $\gamma$-closure of a graph is simultaneously upper-bounded by its $c$-closure and by its degeneracy.} 
Lokshtanov and Surianarayanan~\cite{LS21} showed that \textsc{Dominating Set} admits an FPT-algorithm for the smaller parameter~$k+\gamma$.
It was shown, for example, that various problems related to finding dense subgraphs are fixed-parameter tractable with respect to~$c$ or~$\gamma$~\cite{HR20,KKS20a} and that certain hard variants of \textsc{Vertex Cover}, such as \textsc{Capacitated Vertex Cover}, admit kernels of size~$k^{\Oh(\gamma)}$, where~$k$ is the solution size~\cite{KKS21}.
Furthermore, we studied a generic local graph partitioning problem generalizing \textsc{Densest~$k$-Subgraph}, \textsc{Partial Vertex Cover}, and \textsc{Max~$(k,n-k)$-Cut}~\cite{KKNS22}. 
Kanesh et al.~\cite{KMR+22} studied further domination problems like \textsc{Connected Dominating Set}, \textsc{Partial Dominating Set}, and \textsc{Perfect Code} in $c$-closed graphs.
Finally, it has been shown for some subgraphs of constant size that they can be detected efficiently in graphs with small~$c$-closure~\cite{KN21}.



\section{Preliminaries}

For $m \le n \in \mathds{N}$, we write $[m, n]$ for the set $\{ m, m + 1, \dots, n \}$ and $[n]$ for $[1, n]$.
For a graph~$G$, we denote its vertex set and edge set by $V(G)$ and~$E(G)$, respectively.
Let~$X, Y \subseteq V(G)$ be vertex subsets.
We use $G[X]$ to denote the subgraph induced by~$X$.
We also use~$G[X, Y]:=(X \cup Y, \{ xy \in E(G) \mid x \in X, y \in Y \})$ to denote the bipartite subgraph induced by disjoint vertex sets $X$ and~$Y$.
We let~$G - X$ denote the graph obtained by removing vertices in $X$.
We denote by~$N_G(X):=\{ y \in V(G) \setminus X \mid xy \in E(G), x \in X \}$ and~$N_G[X]:=N_G(X) \cup X$, the open and closed neighborhood of $X$, respectively.
For all these notations, when $X$ is a singleton~$\{ x \}$ we may write~$x$ instead of~$\{x\}$.
We drop the subscript~$\cdot_G$ when it is clear from context.
Let~$v \in V(G)$.
We denote the degree of $v$ by $\deg_G(v)$.
We call~$v$ \emph{isolated} if $\deg_G(v) = 0$ and \emph{non-isolated} otherwise.
We also say that $v$ is a \emph{leaf vertex} if~$\deg_G(v) = 1$ and a \emph{non-leaf vertex} if~$\deg_G(v) \ge 2$.
Moreover, we say that $v$ is \emph{simplicial} if~$N_G(v)$ is a clique.
The maximum and minimum degree of~$G$ are~$\Delta_G := \max_{v \in V(G)} \deg_G(v)$ and~$\delta_G := \min_{v \in V(G)} \deg_G(v)$, respectively.
The degeneracy of~$G$ is $d_G := \max_{S \subseteq V(G)} \delta_{G[S]}$.
A graph~$G$ has girth~$g$ if the shortest cycle in~$G$ has length~$g$.

In this paper, we investigate the parameterized complexity of various problems whose input comprises a graph~$G$ and an integer~$k$.
A problem is \emph{fixed-parameter tractable} if it can be solved in $f(k) \cdot n^{\Oh(1)}$ time where $n := |V(G)|$ and $f$~is some computable function.
Two instances $(G, k)$ and $(G', k')$ are \emph{equivalent} if $(G, k)$ is a Yes-instance if and only if $(G', k')$ is a Yes-instance.
A \emph{kernelization algorithm} is a polynomial-time algorithm which transforms an instance~$(G, k)$ into an equivalent instance $(G', k')$ such that $|V(G')| + k' \le g(k)$, where $g$ is some computable function.
It is well-known that a problem is fixed-parameter tractable if and only if it admits a kernelization algorithm. 

Our kernelization algorithms consist of a sequence of \emph{reduction rules}.
Given an instance~$(G, k)$, a reduction rule computes an instance~$(G', k')$.
We will develop kernelization algorithms for $c$-closed graphs.
For our purposes, we say that a reduction rule is \emph{correct} if the input instance $(G, k)$ for a $c$-closed graph $G$ is equivalent to the resulting instance $(G', k')$ and $G'$~is also $c$-closed. 
For more information on parameterized complexity, we refer to the standard monographs~\cite{CFK+15,DF13}.

We will make use of the following observations throughout this work.
\begin{observation}
  \label{obs:removev}
  If~$G$ is~$c$-closed, then~$G - v$ is also~$c$-closed for any~$v \in V(G)$.
\end{observation}

\begin{observation}
  \label{obs:cliqueintersection}
  Let~$C$ be a maximal clique in a~$c$-closed graph~$G$.
  Then~$|C \cap N(v)| < c$ for every~$v \in V(G) \setminus C$.
\end{observation}

\begin{observation}
  \label{obs:addv}
  Let~$G$ be a~$c$-closed graph and let~$C$ be a
  \begin{itemize}
  \item clique of size at most~$c - 1$ in~$G$ or
  \item a maximal clique in~$G$.
  \end{itemize}
  Then, the graph~$G'$ obtained by attaching a simplicial vertex~$v$ to~$C$ (that is,~$N_{G'}(v) = C$) is~$c$-closed.
\end{observation}

\section{On Ramsey Numbers of \texorpdfstring{$\mathbf{c}$}{c}-Closed Graphs}
\label{sec:ramsey}
Ramsey's theorem states that there is a function~$R$ such that any graph~$G$ with at least~$R(q, b)$ vertices contains a clique of size~$q$ or an independent set of size~$b$, for any~$q, b \in \mathds{N}$. 
The numbers~$R(q, b)$ are referred to as Ramsey numbers. 
It is known that~$R(t, t) > 2^{t / 2}$ for any~$t \ge 3$~\cite{Erd47,Juk11} and hence~$R(t, t)$ grows exponentially with~$t$. 
Here, we show that the Ramsey number~$R(q,b)$ is actually polynomial in~$q$ and~$b$ in~$c$-closed graphs.

Our proof relies on the notion of \emph{2-maximal independent sets}:
We say that a maximal independent set $I$ is \emph{$2$-maximal} in $G$, if it holds for every $x, y\in V(G) \setminus I$ and~$v \in I$ that $(I \setminus \{ v \}) \cup \{ x, y \}$ is not an independent set.
First, we prove a property of 2-maximal independent sets that plays an important role in the proof of the Ramsey bound in $c$-closed graphs.

\begin{lemma}
  \label{lemma:privateclique}
  Let $I$ be a 2-maximal independent set and for every $x \in I$, let $I_x := \{ v \in V(G) \mid N_G[v] \cap I = \{ x \} \}$ be the set of vertices adjacent to $x$ but nonadjacent to all other vertices in $I$.
  Then, $I_x$ is a clique.
\end{lemma}
\begin{proof}
  Assume to the contrary that there exist two vertices $u, v \in I_x$ such that $uv \notin E(G)$.
  Then, $(I \setminus \{ x \}) \cup \{ u, v \}$ is also an independent set, a contradiction to the fact that $I$ is 2-maximal.
\end{proof}

Let~$\ramsey(q, b) := (c - 1) \cdot \binom{b - 1}{2} + (q - 1)(b - 1) + 1$.
We show that $\ramsey(q, b)$ is an upper bound on the Ramsey number in $c$-closed graphs.

\begin{lemma}
  \label{lemma:ramsey}
  Let $G$ be a~$c$-closed graph~$G$ on at least~$\ramsey(q, b)$ vertices.
  Then, $G$~contains a clique of size~$q$ or an independent set of size~$b$.
  Moreover, there is an algorithm that finds a clique of size $q$ or an independent set of size $b$ in polynomial time.
\end{lemma}
\begin{proof}
  Assume to the contrary that~$G$ has no clique of size~$q$ and no independent set of size~$b$. 
  Let $I$ be a 2-maximal independent set of $G$. 
  For every~$x \in I$, let~$I_x := \{ v \in V(G) \mid N_G[v] \cap I = \{ x \} \}$ be the set of vertices adjacent to $x$ but nonadjacent to all other vertices in~$I$.
  By \Cref{lemma:privateclique}, $I_x$ is a clique for every $x \in I$ and thus $|I_x| < q$.
  Since $I$ is inclusion-wise maximal, every vertex in $G$ is adjacent to at least one vertex of $I$.
  It follows that
  \begin{align*}
    |V(G)| \le \sum_{x \in [I]} |I_x| + \sum_{x \ne y \in [I]} |N(x) \cap N(y)|.
  \end{align*}
  Note that~$|I_x| \le q - 1$ for each~$x \in I$ and~$|N(x) \cap N(y)| \le c - 1$ for~$x \ne y \in I$ by the~$c$-closure of~$G$.
  Since~$|I|<b$, we have a contradiction on $|V(G)|$.

  Note also that the proof of \Cref{lemma:ramsey} is constructive:
  First, we find a 2-maximal independent set $I$.
  If $I$ has fewer than $b$ vertices, then there is a vertex~$x \in I$ with~$|I_x| \ge q$.
  Thus, one can find in polynomial time a clique of size $q$ or an independent set of size $b$ in $G$, provided that $G$ is a $c$-closed graph on more than~$(c - 1) \cdot \binom{b}{2} + (q - 1)(b - 1)$ vertices.
\end{proof}

The bound in \Cref{lemma:ramsey} is essentially tight:
Consider a graph $G$ consisting of a disjoint union of~$b - 1$ complete graphs, each of order~$q - 1$.
Note that $G$ is $c$-closed for any $c \in \mathds{N}$ and that~$G$ has no clique of size $q$ or independent set of size $b$.
Thus, we have a tight bound for $c = 1$.
This example also suggests that the bound in \Cref{lemma:ramsey} cannot be asymptotically improved for $q \ge cb$.

\section{(Threshold) Dominating Set}
\label{sec:ds}
In this section we show that \textsc{Threshold Dominating Set} admits a kernel with $k^{\Oh(cr)}$ vertices.
The problem is defined as follows.

\problemdef{Threshold Dominating Set}
{A graph~$G$ and $r, k\in\mathds{N}$.}
{Is there a vertex set~$D\subseteq V(G)$ such that~$|D|\le k$ and each vertex~$v\in V(G)$ is dominated by~$D$ at least~$r$ times, that is, $|N[v]\cap D|\geq r$?}

\textsc{Dominating Set} is the special case of \textsc{Threshold Dominating Set} when~$r=1$. 
\textsc{Dominating Set} is W[2]-hard when parameterized by~$k$ even in bipartite or split graphs~\cite{RS08}. Furthermore, \textsc{Dominating Set} was shown to remain NP-hard on graphs with girth at least~$t$ for any constant~$t$~\cite{AKL04}.
Hence, \textsc{Dominating Set} is NP-hard even on~$2$-closed graphs.

There are several fixed-parameter tractability results in restricted graph classes: 
When the graph~$G$ contains no induced~$C_3$ and no induced~$C_4$, \textsc{Dominating Set} admits a kernel with~$\Oh(k^3)$ vertices and \textsc{Threshold Dominating Set} is fixed-parameter tractable~\cite{RS08}.
Furthermore, \textsc{Dominating Set} in~$d$-degenerate graphs can be solved in~$k^{\Oh(dk)}n$~time~\cite{AG09}. 
This result was extended to an algorithm with running time~$k^{\Oh(dkr)}$ for \textsc{Threshold Dominating Set} in~$d$-degenerate graphs~\cite{GV08}. 
When the graph~$G$ does not contain the complete bipartite graph~$K_{i,j}$ for fixed~$j\le i$ as a (not necessarily induced) subgraph, \textsc{Dominating Set} admits a kernel of~$\Oh((j+1)^{i+1}k^{i^2})$ vertices which can be computed in~$\Oh(n^i)$~time~\cite{PRS12}. 
Since~$d$-degenerate graphs do not contain a~$K_{d+1,d+1}$ as a subgraph, \textsc{Dominating Set} admits a kernelization of~$\Oh(k^{(d+1)^2})$ vertices computable in polynomial time~\cite{PRS12}. 
This kernel size is essentially optimal since \textsc{Dominating Set} in~$d$-degenerate graphs admits no kernel of size~$\Oh(k^{(d-3)(d-1)-\epsilon})$ for any~$\epsilon>0$ unless NP~$\subseteq$ coNP/poly~\cite{CGH17}. 
When~$G$ does not contain the complete bipartite graph~$K_{t,t}$ as a (not necessarily induced) subgraph, \textsc{Dominating Set} can be solved in~$2^{\Oh(tk^2(4k)^t)}$~time~\cite{TV12}. Since each~$d$-degenerate graph does not contain a~$K_{d+1,d+1}$ as a subgraph, this extends the FPT-algorithm for~$k$ in~$d$-degenerate graphs~\cite{AG09}.
None of the above kernelizations and fixed-parameter algorithms implies a tractability result on $c$-closed graphs, since the respective structural restrictions on~$G$ all exclude cliques of some size. Moreover, since any graph without induced~$C_3$ and without induced~$C_4$ is 2-closed, our results extend the kernelization algorithms for these graphs to a more general class of graphs.

To obtain a kernel  for \textsc{Threshold Dominating Set} in~$c$-closed graphs we first provide a kernelization for a more general, colored variant defined as follows. 
The input graph is a \emph{bw-graph}, where the vertex set~$V(G)$ is partitioned into black vertices~$B$ and white vertices~$W$. 
We only require to dominate black vertices~$r$ times. 
The problem is defined as follows.

\problemdef{\BW}
{A bw-graph~$G$ and~$r,k \in \mathds{N}$.}
{Does~$G$ contain a \emph{\bw} $D \subseteq V(G)$, that is, a set such that~$|N[v]\cap D|\ge r$ for each vertex~$v\in B$, of size at most~$k$?}

Clearly, each instance~$(G,k)$ of \textsc{Threshold Dominating Set} is equivalent to the instance~$(G,k)$ of \BW{} where each vertex is black.


\subsection{A Polynomial Kernel in \texorpdfstring{$c$}{c}-closed Graphs}
\label{sec-kernel-threshold-ds}

We first develop a kernelization algorithm for \BW{} and then we will remove colors at the end.
We start with an auxiliary lemma which will simplify subsequent proofs in this section.

\begin{lemma}
  \label{lemma-simplical-vertex}
  Let~$(G,k)$ be a Yes-instance of \BW{} and let~$v$ be a simplicial vertex with at least~$r$ neighbors. Then, there exists a \bw~$D$ of size at most~$k$ such that $v \notin D$.
\end{lemma}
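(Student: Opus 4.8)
The plan is to start from an arbitrary \bw{} $D$ with $|D| \le k$ and, assuming $v \in D$ (otherwise there is nothing to prove), to exhibit a set $D'$ of size at most $k$ with $v \notin D'$ that is still a valid \bw. The guiding structural observation is that, since $v$ is simplicial, $N[v]$ is a clique; hence for every $u \in N[v]$ we have $N[v] \subseteq N[u]$, because every vertex of the clique $N[v]$ is equal or adjacent to $u$. In particular, $v$ can only contribute to the domination of vertices lying inside $N[v]$, and every such vertex sees all of $N[v]$.

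Then I branch on whether $v$ has a neighbor outside $D$. If there is a vertex $w \in N(v) \setminus D$, I set $D' := (D \setminus \{v\}) \cup \{w\}$; otherwise $N(v) \subseteq D$ and I set $D' := D \setminus \{v\}$. In both cases $|D'| \le |D| \le k$ and $v \notin D'$, so it only remains to verify that every black vertex is still dominated at least $r$ times.

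For a black vertex $u \notin N[v]$ this is immediate: $v \notin N[u]$, so deleting $v$ does not change $|N[u] \cap D|$, and possibly adding $w$ can only increase it. For a black vertex $u \in N[v]$ I invoke $N[v] \subseteq N[u]$. In the first case $w \in N(v) \subseteq N[u]$, so removing $v \in N[u]$ and adding $w \in N[u] \setminus D$ leaves $|N[u] \cap D'| = |N[u] \cap D| \ge r$ unchanged. In the second case $N(v) \subseteq D$ together with $v \in D$ gives $N[v] \subseteq D$, whence $|N[u] \cap D| \ge |N[v]| = \deg_G(v) + 1 \ge r + 1$ using the hypothesis that $v$ has at least $r$ neighbors; deleting the single vertex $v$ therefore keeps the count at least $r$.

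The only delicate point is this second case: here no replacement vertex is added, and correctness hinges precisely on the assumption $\deg_G(v) \ge r$, which guarantees that the clique $N[v] \subseteq D$ already dominates every $u \in N[v]$ at least $r+1$ times, so that losing $v$ is harmless. Everything else is a direct counting argument, and since the graph is never modified, $c$-closedness of the instance is preserved automatically.
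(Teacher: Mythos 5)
Your proof is correct and takes essentially the same route as the paper's: the identical two-case split (a neighbor $w \in N(v) \setminus D$ exists, so swap $v$ for $w$; otherwise $N[v] \subseteq D$, so simply delete $v$). The paper leaves the domination-count verification implicit, whereas you spell out the key facts it rests on---namely that $N[v] \subseteq N[u]$ for every $u \in N[v]$ by simpliciality, and that $|N[v]| \ge r+1$ makes the deletion in the second case harmless.
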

\begin{proof}
  Suppose that~$G$ has a \bw~$D$ of size at most~$k$.
  We are immediately done if~$v \notin D$, so we can assume that~$v \in D$.
  If~$N[v] \subseteq D$, then~$D \setminus \{ v \}$ is a \bw{} of size at most~$k$.
  Otherwise, there is a vertex $u \in N(v) \setminus D$ and $(D \setminus \{ v \}) \cup \{ u \}$ is a \bw{} of size at most~$k$ not containing~$v$. 
\end{proof}

Our algorithm is outlined as follows.
We begin with \Cref{lemma:findclique}, which finds some clique~$C$ of one of two types, whenever the graph has a vertex $v$ with $\rho := \ramsey(ck, ck^{c - 1} + 1)$ black neighbors. The first type of cliques are maximal cliques with many black vertices. The second type of cliques are small cliques with a certain large independent set in the common neighborhood.  
We then apply reduction rules to reduce the number of black neighbors of $v$, taking advantage of structural properties of such cliques in $c$-closed graphs.
More specifically, we use \Cref{rr:clique,rr:cliquei} when $C$ is of the first and second type, respectively.
We thus end up with an instance in which every vertex has at most $\rho$ black neighbors.
If there still remain more than $k \rho$ black vertices, then the instance is a No-instance (\Cref{rr-number-blue-vertices}).
Finally, we obtain an upper bound on the number of white vertices in \Cref{lemma-bound-number-white-vertices} as well.

\begin{lemma}
  \label{lemma:findclique}
  Suppose that there is a vertex $v$ with at least $\rho = \ramsey(ck, ck^{c - 1} + 1)$ black neighbors.
  Then, we can find in polynomial time a clique $C$ that contains $v$ and is of one of the following types:
  \begin{enumerate}
    \item[(1)]
      $C$ is maximal and contains at least $ck$ black vertices.
    \item[(2)]
      $C$ is of size exactly $c - i$ for some $i \in [c - 1]$ and there is an independent set $I \subseteq \bigcap_{v' \in C} N(v')$ of $c k^i + 1$ black vertices such that every vertex outside $C$  has at most $c k^{i - 1}$ neighbors in~$I$.
  \end{enumerate}
\end{lemma}
\begin{proof}
  Recall that given any $c$-closed graph on at least $\ramsey(ck, c k^{c - 1} + 1)$ vertices, we can find in polynomial time a clique of size $ck$ or an independent set of size~$c k^{c - 1}+1$ by \Cref{lemma:ramsey}.
  By applying \Cref{lemma:ramsey} on $G[N(v) \cap B]$, we can thus find (a) a clique~$C'$ of~$ck$ black vertices containing~$v$ or (b) an independent set $I'$ of $c k^{c - 1} + 1$ black vertices in $N(v)$ in polynomial time.

  For case (a), consider the set $C'' := \{ v'' \in V(G) \mid N(v'') \supseteq C' \}$ of vertices adjacent to all vertices of $C'$.
  Since any pair of vertices in $C''$ has at least $|C'| \ge c$ neighbors, $C''$ forms a clique.
  We thus have that $C := C' \cup C''$ is a clique.
  The maximality of $C$ follows from the definition of $C''$, showing that $C$ is of type (1).

  For case (b), we use the following polynomial-time procedure:
  Initially, let $C := \{ v \}$,~$I := I'$, and $i := c - 1$.
  As long as there is a vertex $u$ which has more than $c k^{i - 1}$ neighbors in~$I$, add $u$ to $C$, delete vertices nonadjacent to $u$ from $I$, and decrease~$i$ by~1.
  At the latest, this procedure halts when $i = c - |C| \ge 1$ because then the vertices in $I'$ have at most $c - 1$ common neighbors.
  We show that the resulting vertex set~$C$ is of type (2).
  By construction,~$I \subseteq \bigcap_{v' \in C} N(v')$ is an independent set of more than $c k^i$ black vertices.
  Moreover, there is no vertex outside $C$ with more than~$c k^{i - 1}$ neighbors in $I$.
  Since the vertices in $C$ have at least $|I| \ge c$ common neighbors, they are pairwise adjacent.
  We thus have shown that $C$ is a clique of type (2).
\end{proof}

We first present a reduction rule for the case that \Cref{lemma:findclique} finds a clique of type (1).

\begin{rrule}
  \label{rr:clique}
  Let~$C$ be a maximal clique containing at least~$ck$ black vertices.
  Then,
  \begin{enumerate}
    \item
      add a vertex~$u$ and add an edge~$uv$ for each~$v \in C$, 
    \item
      color~$u$ black, and
    \item
      color all the vertices in~$C$ white.
  \end{enumerate}
\end{rrule}

\begin{lemma}
  \label{lemma:rrclique}
  \Cref{rr:clique} is correct.
\end{lemma}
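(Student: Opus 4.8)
The plan is to prove correctness of \Cref{rr:clique} by verifying both conditions required of a correct reduction rule: that the new bw-graph~$G'$ remains $c$-closed, and that~$(G,k)$ is a Yes-instance of \BW{} if and only if~$(G',k)$ is. For the $c$-closure, I would invoke \Cref{obs:addv}: the rule attaches a single new vertex~$u$ whose neighborhood is exactly the maximal clique~$C$, so~$u$ is simplicial with~$N_{G'}(u)=C$, and since~$C$ is a maximal clique in the $c$-closed graph~$G$, the observation immediately guarantees that~$G'$ is $c$-closed. Recoloring vertices does not affect adjacency and hence does not affect $c$-closure.

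The heart of the argument is the equivalence of the two instances. The key structural fact, which I would state and justify first, is that \emph{every} bw-dominating set~$D$ of size at most~$k$ in~$G$ must contain at least~$r$ vertices of~$C$. To see this, note that~$C$ contains at least~$ck$ black vertices; each vertex~$x\in V(G)\setminus C$ satisfies~$|N(x)\cap C|<c$ by \Cref{obs:cliqueintersection} (applied to the maximal clique~$C$), so a single outside vertex can dominate fewer than~$c$ of the black vertices in~$C$. A solution~$D$ of size at most~$k$ therefore has at most~$|D\setminus C|\cdot (c-1)+\,|D\cap C|\cdot |C|$ dominating incidences available into~$C$; more simply, the at most~$k$ vertices of~$D$ lying outside~$C$ each cover fewer than~$c$ black vertices of~$C$, accounting for fewer than~$ck$ of them, so some black vertex of~$C$ remains uncovered unless~$D$ also contains vertices of~$C$. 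Pushing this counting one step further with the threshold~$r$: if~$|D\cap C|\le r-1$, then the black vertices of~$C$ that are not among the at most~$r-1$ chosen clique-vertices' closed neighborhoods receive at most~$(r-1)+$(coverage from outside) many dominators, and since outside coverage of~$C$ is bounded by~$(c-1)k<ck$ while there are at least~$ck$ black vertices, some black vertex in~$C$ is dominated fewer than~$r$ times, a contradiction. Hence~$|D\cap C|\ge r$.

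With this fact in hand, the equivalence follows cleanly. For the forward direction, suppose~$D$ is a bw-dominating set of~$G$ of size at most~$k$; by the claim, at least~$r$ vertices of~$C$ lie in~$D$, and since~$N_{G'}[u]\supseteq C$ (as~$u$ is adjacent to all of~$C$), the new black vertex~$u$ is dominated at least~$r$ times by~$D\cap C$. Every other black vertex of~$G'$ is also black in~$G$ except for those in~$C$, which have been recolored white and thus impose no new constraint; so~$D$ remains a valid bw-dominating set in~$G'$ of the same size. For the reverse direction, suppose~$D'$ is a bw-dominating set of~$G'$ of size at most~$k$. Since~$u$ is simplicial in~$G'$ with~$N_{G'}(u)=C$ of size~$|C|\ge ck\ge r$, \Cref{lemma-simplical-vertex} lets me assume~$u\notin D'$, so~$D'\subseteq V(G)$. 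The black vertices of~$C$ that were recolored white in~$G'$ must now be checked in~$G$: I would argue that since~$u$ is dominated~$r$ times in~$G'$ and~$N_{G'}[u]\cap D'\subseteq C$, at least~$r$ vertices of~$C$ lie in~$D'$, which (because~$C$ is a clique) dominate every vertex of~$C$ at least~$r$ times in~$G$, restoring the constraint on the recolored vertices. All other black vertices are common to both graphs, so~$D'$ is a bw-dominating set of~$G$ of size at most~$k$.

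The main obstacle I anticipate is the threshold counting in the structural claim: getting the interaction between the~$r$-fold domination requirement and the~$ck$ bound on black vertices exactly right, since one must separate the contribution of clique-internal dominators (which cover all of~$C$) from external dominators (each covering fewer than~$c$ black vertices of~$C$). The choice of the threshold~$ck$ rather than, say,~$(c-1)k$ is calibrated precisely so that~$r$ external-plus-boundary dominators cannot suffice, and I would take care to make this slack explicit rather than leave it to the reader.
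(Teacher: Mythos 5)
Your proposal is correct and follows essentially the same route as the paper's proof: the central claim that any solution~$D$ satisfies~$|D\cap C|\ge r$ via \Cref{obs:cliqueintersection} and counting the at most~$(c-1)k<ck$ outside-coverage against the~$ck$ black vertices of~$C$, the reverse direction via \Cref{lemma-simplical-vertex}, and $c$-closure via \Cref{obs:addv}. You merely spell out the reverse direction (which the paper leaves as a one-line citation) in more detail, which is fine.
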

\begin{proof}
  Let~$D$ be a \bw{} of~$G$ of size at most~$k$. 
  We claim that~$|D \cap C| \ge r$. 
  Assume to the contrary that~$|D \cap C| \le r-1$. 
  By \Cref{obs:cliqueintersection}, each vertex in~$D \setminus C$ dominates at most~$c - 1$ vertices in~$C$. 
  Since~$C$ contains at least~$ck$ black vertices, there is a black vertex in~$C$ that is not dominated~$r$ times by~$D$, a contradiction. 
  Thus,~$|D\cap C|\ge r$. 
  Let~$G'$ be the graph obtained as a result of \Cref{rr:clique}.
  Since~$uv \in E(G)$ for each~$v\in C$, we see that~$|N_{G'}(u) \cap D|\ge r$ and thus~$D$ is also a \bw{} of the graph~$G'$.
  The other direction of the equivalence follows from the fact that~$u$ is simplicial and \Cref{lemma-simplical-vertex}. 
  Finally, note that \Cref{rr:clique} maintains the~$c$-closure by \Cref{obs:addv}.
\end{proof}

When \Cref{lemma:findclique} finds a clique of type (2), we apply the following reduction rule.

\begin{rrule}
  \label{rr:cliquei}
  Let $i \in [c - 1]$.
  Let~$C$ be a clique of size exactly~$c - i$ with an independent set $I \subseteq \bigcap_{v' \in I} N(v')$ of $c k^i + 1$ black vertices such that every vertex outside $C$ has at most $c k^{i - 1}$ neighbors in $I$.
  If $r \le c - i$, then
  \begin{enumerate}
    \item
      add a vertex~$u$ and add an edge~$uv$ for each~$v \in C$, 
    \item
      color~$u$ black, and 
    \item
      color all the vertices~in $C$ and~$I$ white.
  \end{enumerate}
  Otherwise, return No.
\end{rrule}

\begin{lemma}
  \label{lemma:cliquei}
  \Cref{rr:cliquei} is correct.
\end{lemma}
\begin{proof}
  We show that $|D \cap C| \ge r$ for any \bw~$D$ of size at most~$k$.
  Assume to the contrary that $|D \cap C| < r$.
  By assumption, we have that~$|I| > ck^i$ and that every vertex in $V(G) \setminus C$ dominates at most $ck^{i - 1}$ vertices of $I$.
  The pigeon-hole principle thus yields that there is a black vertex $x \in I$ not dominated by $D \setminus C$.
  It follows that~$D$ contains at least~$r$ vertices of~$C$.
  Consequently, for~$r > c - i=|C|$ the given instance is a No-instance.

  For~$r \le c - i$ let~$G'$ be the graph obtained as a result of \Cref{rr:cliquei}.
  Since~$|D \cap C| \ge r$ for any \bw{} of size at most $k$ in $G$, it is also a \bw{} in $G'$.
  The other direction follows from \Cref{lemma-simplical-vertex}. 
  Finally, note that $G'$ is~$c$-closed by \Cref{obs:addv}.
\end{proof}

As long as there is a vertex $v$ which has more than $\rho$ black neighbors, we invoke the algorithm of \Cref{lemma:findclique} to find a clique $C$ containing $v$.
We then apply \Cref{rr:clique} or \Cref{rr:cliquei}, depending on the type of $C$.
Observe that both reduction rules decrease the number of black neighbors of $v$ and hence we obtain in polynomial time an equivalent instance in which every vertex has at most $\rho$ black neighbors.
If the resulting instance has more than $k \rho$ black vertices, then it is clearly a No-instance:

\begin{rrule}
  \label{rr-number-blue-vertices}
  If~$G$ contains more than~$k \rho$ black vertices, then return~No.
\end{rrule}

To compute a kernel it  remains to upper-bound the number of white vertices in~$G$. 

\begin{rrule}
  \label{rr-remove-white-vertices}
  Let~$w$ be a white vertex in~$G$. 
  If there exist at least~$r$ further vertices~$v_1, \ldots , v_r$ such that~$N(w)\cap B\subseteq N[v_i]\cap B$ for each~$i\in[r]$, then remove~$w$.
\end{rrule}

It is easy to see that \Cref{rr-remove-white-vertices} can be applied exhaustively in polynomial time.

\begin{lemma}
\label{lemma-rr-remove-white-vertices-correct}
\Cref{rr-remove-white-vertices} is correct. 
\end{lemma}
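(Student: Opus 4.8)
The plan is to show that the rule replaces $(G,k)$ by the equivalent instance $(G',k)$ with $G' := G - w$, and that $G'$ is again $c$-closed. The $c$-closure is immediate from \Cref{obs:removev}, so all the work lies in the equivalence. Throughout I would write $P := N(w) \cap B$ for the black neighbours of $w$ and record the two structural facts that drive everything: since $w$ is white it is never a vertex that must be dominated, and for any $b \in V(G')$ we have $N_{G'}[b] = N_G[b] \setminus \{w\}$, so deleting $w$ changes the domination count only of vertices in $P$, and then only by one. I would also fix the convention (needed for correctness, see below) that $v_1,\dots,v_r$ are $r$ \emph{distinct} vertices, all different from $w$.

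For the easy direction I would argue that any \bw $D'$ of $G'$ with $|D'| \le k$ is also a \bw of $G$. Indeed $D' \subseteq V(G') \subseteq V(G)$, the black vertices of $G$ and $G'$ coincide because $w$ is white, and since $w \notin D'$ we get $|N_G[b] \cap D'| = |N_{G'}[b] \cap D'| \ge r$ for every black $b$; re-inserting $w$ into the graph cannot lower any count.

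For the hard direction I would start from a \bw $D$ of $G$ with $|D| \le k$. If $w \notin D$ then $D \subseteq V(G')$ and the same count identity shows $D$ works in $G'$. The substantive case is $w \in D$, where each $b \in P$ loses exactly one dominator upon deleting $w$. Here I would split on whether some $v_i \notin D$. If so, set $D' := (D \setminus \{w\}) \cup \{v_i\}$, which has size at most $|D| \le k$; for every $b \in P$ the element $v_i$ lies in $N_{G'}[b]$ (as $b \in P \subseteq N[v_i]$ and $v_i \ne w$) and is new to the set, so the count returns from $r-1$ to at least $r$, while black vertices outside $P$ are untouched. If instead every $v_i \in D$, set $D' := D \setminus \{w\}$ of size at most $k-1$; then each $b \in P$ is dominated by the $r$ distinct vertices $v_1,\dots,v_r$, all of which survive in $D \setminus \{w\}$ and lie in $N_{G'}[b]$, giving a count of at least $r$.

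The step I expect to be the crux is precisely this $w \in D$ case, and the point that makes it go through is the distinctness hypothesis: were some $v_i$ allowed to equal $w$, it would be deleted together with $w$ and could not serve as a replacement dominator, so a tightly dominated $b \in P$ (dominated exactly $r$ times, one of them by $w$) would drop below $r$ and the rule would be \emph{incorrect}. Once the case split and the convention are in place, the remainder is the routine bookkeeping of the closed-neighbourhood counts $|N_{G'}[b] \cap D'|$ sketched above.
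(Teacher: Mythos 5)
Your proof is correct and follows essentially the same route as the paper's: the same three-way case analysis ($w \notin D$; $w \in D$ with some $v_i \notin D$, where $w$ is swapped for $v_i$; $w \in D$ with all $v_i \in D$, where $w$ is simply dropped), with $c$-closure preserved since only a vertex is removed. Your explicit insistence that the $v_1,\dots,v_r$ be distinct and different from $w$ is a sound reading that the paper leaves implicit — as you note, the rule (and the paper's ``$D \setminus \{w\}$'' case) would indeed fail if $w$ itself were allowed among the $v_i$.
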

\begin{proof}
  Let~$G':=G-w$. 
  Suppose that~$G$ has a \bw~$D$ of size at most~$k$.
  If~$w \notin D$, then~$D$ is also a \bw{} of~$G'$.
  Hence, we can assume that~$w \in D$.
  If~$v_i\in D$ for all~$i \in [r]$, then~$D \setminus \{ w \}$ is a \bw{} for~$G$ and hence also for~$G'$. 
  Otherwise, there exists some~$i\in[r]$ with~$v_i\notin D$. 
  Since~$N(w)\cap B\subseteq N[v_i]\cap B$, the set~$(D \setminus \{ w \}) \cup \{ v_i \}$ is a \bw{} of size at most~$k$ of~$G$ and $G'$.
  The other direction follows trivially. 
  Observe that removing vertices maintains the $c$-closure by \Cref{obs:removev}.
\end{proof}

In the following, we will assume that \Cref{rr-remove-white-vertices} has been applied exhaustively. 
Now, we obtain a bound on the number of white vertices in~$G$.

\begin{lemma}
\label{lemma-bound-number-white-vertices}
The graph~$G$ contains~$\Oh(c|B|^2+ |B|^{r-1})$ white vertices.
\end{lemma}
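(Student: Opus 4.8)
The plan is to bound the number of white vertices by bounding, for each white vertex $w$, the information encoded by its black neighborhood $N(w) \cap B$, and then exploiting \Cref{rr-remove-white-vertices} to argue that each such ``type'' of black neighborhood can be realized by only a few white vertices. First I would classify white vertices according to the size of $N(w) \cap B$. Vertices with $N(w) \cap B = \emptyset$ are handled separately (indeed they are dominated away by \Cref{rr-remove-white-vertices} once $r$ vertices covering the empty set exist, which is immediate), so the interesting vertices have $|N(w) \cap B| \ge 1$.

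The key step is a counting argument driven by \Cref{rr-remove-white-vertices}: after exhaustive application, for each white vertex $w$ there are at most $r - 1$ vertices $v$ with $N(w) \cap B \subseteq N[v] \cap B$. I would split into two regimes. For white vertices $w$ with $|N(w) \cap B| \ge 2$, I would use the $c$-closure together with \Cref{lemma-sizes-cliques-one-to-r}: fix a pair $\{x, y\}$ of black neighbors of $w$; the number of common neighbors of $x$ and $y$ is controlled, and more usefully, the black neighborhoods of such $w$ are pinned down by small cliques among the black vertices. The dominant term $\Oh(c|B|^2)$ should arise here, since each white vertex with at least two black neighbors can be ``charged'' to a pair of black vertices, and the $c$-closure limits how many white vertices can share such a pair to roughly $\Oh(c)$ before two of them would be forced adjacent or dominated. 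For white vertices $w$ with exactly one black neighbor $x$, the condition $N(w) \cap B \subseteq N[v] \cap B$ reduces to $x \in N[v]$, i.e.\ $v \in N[x]$; since \Cref{rr-remove-white-vertices} forbids having $r$ such dominators, the number of leaf-like white vertices hanging off a single black vertex is restricted, contributing the $|B|^{r-1}$ term.

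The main obstacle I expect is the single-black-neighbor case and getting the exponent $r - 1$ correct. The reduction rule says $w$ survives only if fewer than $r$ vertices $v$ satisfy $x \in N[v] \cap B$, i.e.\ fewer than $r$ vertices dominate $x$; but that is a condition on the graph, not directly a count of white vertices, so the $|B|^{r-1}$ bound must come from a more delicate argument—likely an inductive or combinatorial count over which $r-1$-subsets of black vertices can serve as the common dominator-witnesses, mirroring the structure of the $k^{c-i}\dsbound$ bounds in \Cref{lemma-sizes-cliques-one-to-r}. I would therefore handle the $|N(w)\cap B| \ge 2$ vertices first via pair-charging and $c$-closure (the clean part yielding $\Oh(c|B|^2)$), and reserve the careful threshold-counting for the single-neighbor white vertices, where I would argue that distinct surviving white vertices must have distinct ``certificates'' drawn from at most $|B|^{r-1}$ possibilities.
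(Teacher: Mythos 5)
Your decomposition of the white vertices is where the argument breaks. The pair-charging step claims that every white vertex with at least two black neighbors can be charged to a pair of black vertices at cost $\Oh(c)$ per pair, but $c$-closure bounds common neighborhoods only of \emph{nonadjacent} pairs: if $x, y \in B$ are adjacent, they may have arbitrarily many common white neighbors, and nothing forces those white vertices to become adjacent to each other or to be removed. So your $\Oh(c|B|^2)$ regime silently drops every white vertex whose black neighborhood is a clique of size at least two---and these are precisely the vertices that generate the $|B|^{r-1}$ term. Correspondingly, your attribution of $|B|^{r-1}$ to white vertices with a single black neighbor is backwards: for such a vertex $w$ with $N(w) \cap B = \{x\}$, the vertex $x$ itself and every other white vertex adjacent to $x$ are witnesses in the sense of \Cref{rr-remove-white-vertices}, so at most $r-1$ such vertices survive per black vertex, contributing only $\Oh(r\,|B|)$.

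The missing idea is the trichotomy the paper uses. (a) White vertices with no black neighbor: at most $r$ survive, since the containment in \Cref{rr-remove-white-vertices} is then vacuous and any $r$ other vertices serve as witnesses. (b) White vertices adjacent to two \emph{nonadjacent} black vertices: at most $(c-1)\binom{|B|}{2} \in \Oh(c|B|^2)$ by $c$-closure---this, and only this, is the source of the $c|B|^2$ term. (c) White vertices $w$ for which $B_w := N(w) \cap B$ is a clique: here the reduction rule, not the closure, does the work. Every vertex of $B_w$ is a witness for $w$ (a clique lies in the closed neighborhood of each of its members), so $|B_w| \le r - 1$; moreover, for a fixed clique $C \subseteq B$ of size $i$, each member of $C$ and each other white vertex $w'$ with $B_{w'} \supseteq C$ is a witness, so at most $r - i$ white vertices have $B_w = C$. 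Summing $(r-i)|B|^i$ over $i \in [r-1]$ gives $\Oh(|B|^{r-1})$, with the dominant contribution coming from cliques of size $r-1$, i.e., from white vertices with \emph{many} black neighbors---the opposite end of the spectrum from where you looked. Your closing intuition about certificates drawn from $(r-1)$-subsets of $B$ gestures toward this count, but as stated your proof neither establishes $|B_w| \le r-1$ nor bounds the multiplicity per clique, and the two regimes you define cannot be combined into the claimed bound.
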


\begin{proof}
  Since \Cref{rr-remove-white-vertices} has been applied exhaustively,~$G$ contains at most~$r$~white vertices~$w$ such that~$N(w)\subseteq W$. 
  Hence, it remains to bound the number of white vertices with at least one black neighbor. 
  Observe that by the $c$-closure of $G$, there are $\Oh(c\, |B|^2)$ white vertices that are neighbors of two nonadjacent vertices~$u,v\in B$.

  For all remaining white vertices~$w$, the set $B_w := N(w) \cap B$ of black neighbors is a clique.
  Since \Cref{rr-remove-white-vertices} has been applied  exhaustively, we have $|B_w| < r$.
  Moreover, also because \Cref{rr-remove-white-vertices} has been applied, for each clique $C\subseteq B$ of size $i \in [r - 1]$, there are at most $r - i$ white vertices with $B_w = C$.
  Thus, the number of white vertices $w$ such that~$B_w$ is a clique is
  \begin{align*}
    \sum_{i = 1}^{r - 1} i \, |B|^{r - i} = \frac{|B|(|B|^r - 1)}{(|B| - 1)^2} - \frac{|B|}{|B| - 1}r \in \Oh(|B|^{r - 1}).
  \end{align*}
  Overall, there are $\Oh(c\,|B|^2+ |B|^{r-1})$ white vertices.
\end{proof}

Recall that there are $k\rho \in \Oh(c^3 k^{2c - 1})$ black vertices by \Cref{rr-number-blue-vertices}.
So the overall number of vertices is $\Oh(c^7 k^{4c-2} + c^{3(r-1)}k^{(2c - 1)(r - 1)})$, resulting in the following theorem:

\begin{theorem}
  \label{theo-kernel-rwb-threshold-ds}
  \BW{} has a kernel with~$k^{\Oh(cr)}$ vertices. 
\end{theorem}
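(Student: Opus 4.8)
The plan is to obtain the theorem by assembling the reduction rules and structural bounds developed above; the statement is essentially a bookkeeping result, so most of the work has already been done. First I would establish correctness of the whole pipeline: by \Cref{lemma:rrclique}, \Cref{lemma-indset-c-1}, the correctness of \Cref{rr:cliqueno} and \Cref{rr-number-blue-vertices}, and \Cref{lemma-rr-remove-white-vertices-correct}, each individual rule maps a $c$-closed instance to an equivalent $c$-closed instance (or correctly reports No). Applying them all exhaustively therefore yields an instance that is equivalent to the input and still $c$-closed; if some rule returns No we output a fixed trivial No-instance.

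Next I would bound the number of remaining vertices on a surviving (non-No) instance. Since $\dsbound = \ramsey(ck, k+1) = (c-1)\binom{k}{2} + (ck-1)k + 1 \in \Oh(ck^2)$, \Cref{rr-number-blue-vertices} guarantees $|B| \le k^c \dsbound \in \Oh(ck^{c+2})$ black vertices. Substituting this into \Cref{lemma-bound-number-white-vertices} gives $\Oh(c\,|B|^2 + |B|^{r-1}) = \Oh(c^3 k^{2c+4} + c^{r-1} k^{(c+2)(r-1)})$ white vertices, so the total number of vertices is $\Oh(c^3 k^{2c+4} + c^{r-1} k^{(c+2)(r-1)})$. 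To rewrite this as $k^{\Oh(cr)}$ I would use that for $k \ge 2$ one has $c \le 2^c \le k^c$, whence $c^3 k^{2c+4} \le k^{5c+4}$ and $c^{r-1} k^{(c+2)(r-1)} \le k^{(2c+2)(r-1)}$; both exponents are $\Oh(cr)$, and the cases $k \le 1$ are trivial.

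For the running time I would note that the most expensive rule is the clique enumeration inside \Cref{rr:clique}, costing $\Oh^*(3^{c/3}) \subseteq \Oh^*(2^c)$, while \Cref{rr:indepseti-larger-r} and \Cref{rr:cliqueno} run in $\Oh^*(2^c)$ and the remaining rules in polynomial time. Each application either deletes a vertex or, in the clique-based rules, recolors a large set of black vertices white while introducing a single new black vertex; in every case the number of black vertices is nonincreasing and strictly decreases for the recoloring rules, so only polynomially many applications occur and the overall time stays $\Oh^*(2^c)$.

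I expect the main obstacle to be not the arithmetic but the careful handling of rule interaction: I must verify that applying the rules in the prescribed order (and re-applying after the vertex insertions of \Cref{rr:clique} and \Cref{rr:indepseti-larger-r}) terminates in polynomially many rounds and does not re-enable earlier rules indefinitely, and that the conditional size bounds stated "for Yes-instances" in \Cref{lemma-sizes-cliques-one-to-r} and \Cref{lemma-bound-number-white-vertices} are indeed matched by the return-No rules \Cref{rr:cliqueno} and \Cref{rr-number-blue-vertices}, so that the final instance is unconditionally bounded in size.
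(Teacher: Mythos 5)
Your proposal is correct and follows essentially the same route as the paper: the theorem is exactly the assembly of \Cref{rr:clique,rr:indepseti-larger-r,rr:cliqueno,rr-number-blue-vertices,rr-remove-white-vertices} with the bounds $|B| \le k^c\dsbound \in \Oh(ck^{c+2})$ and $\Oh(c|B|^2 + |B|^{r-1})$ white vertices, giving $\Oh(c^3k^{2c+4} + c^{r-1}k^{(c+2)(r-1)})$ vertices in total and an $\Oh^*(2^c)$ running time dominated by the clique-based rules. The extra details you supply --- the bound $c \le k^c$ for $k \ge 2$ to absorb the $c$-factors into $k^{\Oh(cr)}$, and the termination argument via the strictly decreasing black-vertex count --- are left implicit in the paper but are correct and harmless.
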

To obtain a kernel for \textsc{Threshold Dominating Set}, it remains to show that any \BW{} instance can be transformed into an equivalent instance of \textsc{Threshold Dominating Set}.

\begin{theorem}
  \label{theo-kernel-threshold-ds}
  \textsc{Threshold Dominating Set} has a kernel with~$k^{\Oh(cr)}$ vertices.
\end{theorem}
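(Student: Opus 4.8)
The plan is to reduce \textsc{Threshold Dominating Set} to \BW{} by coloring every vertex black, apply the kernelization established for \BW{} (which yields an instance with $k^{\Oh(cr)}$ vertices computable in $\Oh^*(2^c)$ time), and then transform the resulting \BW{} instance back into an equivalent \textsc{Threshold Dominating Set} instance. The first direction is immediate: as noted in the excerpt, an instance $(G,k)$ of \textsc{Threshold Dominating Set} is equivalent to the \BW{} instance $(G,k)$ in which every vertex is black, and this coloring preserves $c$-closure trivially. The nontrivial part is the final transformation, since the kernelized \BW{} instance $(G',k')$ genuinely contains white vertices (they are introduced by \Cref{rr:clique} and \Cref{rr:indepseti-larger-r}), and we must manufacture an uncolored graph in which white vertices are effectively exempt from the domination requirement.

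The main obstacle is therefore simulating the color distinction using only the uncolored \textsc{Threshold Dominating Set} semantics, where \emph{every} vertex must be dominated $r$ times. The idea is to attach, to each white vertex $w$, a gadget that dominates $w$ the required $r$ times ``for free'' without affecting which sets dominate the black vertices and without increasing the solution size. Concretely, I would attach to each white vertex $w$ a clique $K_w$ of $r$ fresh vertices, all made adjacent to $w$ (and, to keep $c$-closure and to ensure these gadget vertices are themselves dominated, attach further small cliques or simplicial vertices as needed). The intended effect is that the gadget vertices can always be assumed to lie in an optimal solution for the mere purpose of self-domination, but since they dominate only $w$ and each other, they do not help dominate any black vertex; one must then verify that this does not inflate the budget $k'$. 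A cleaner alternative, which I would prefer, is to exploit \Cref{lemma-simplical-vertex}: attach to each white vertex a structure making it simplicial with at least $r$ neighbors, so that by that lemma there is always an optimal solution avoiding it, and argue that the domination of such gadget vertices can be handled by a controlled additive increase in $k$.

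I would carry out the steps in this order. First, state that coloring all vertices black gives an equivalent \BW{} instance preserving $c$-closure. Second, invoke the \BW{} kernelization to obtain $(G',k')$ with $k^{\Oh(cr)}$ vertices in $\Oh^*(2^c)$ time. Third, describe the decoloring gadget precisely and define the output graph $G''$ together with the adjusted parameter $k''$; here I must check that $G''$ is still $c$-closed (using \Cref{obs:addv} for the attached cliques/simplicial vertices) and that $|V(G'')|$ and $k''$ remain within $k^{\Oh(cr)}$. Fourth, prove equivalence in both directions: a \bw{} in $G'$ lifts to a threshold dominating set in $G''$ by adding the gadget vertices needed to self-dominate the former white vertices, and conversely a threshold dominating set in $G''$ restricts, via \Cref{lemma-simplical-vertex}-style exchange arguments, to a \bw{} in $G'$ of size at most $k'$.

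The key verification throughout is that the gadget neither lets a white vertex substitute for a black vertex in the domination count nor blows up the budget beyond the claimed bound; since the number of white vertices is polynomially bounded by \Cref{lemma-bound-number-white-vertices} and each gadget is of size $\Oh(r)$, the resulting instance still has $k^{\Oh(cr)}$ vertices, and the parameter increase is absorbed into the same asymptotic bound, yielding the theorem.
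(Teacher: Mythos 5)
Your overall plan---color everything black, run the \BW{} kernelization, then decolor via a gadget---is exactly the paper's route, but the gadget is where the entire difficulty of this step sits, and the one you commit to is not merely unverified: it is wrong. If you attach a private clique $K_w$ of $r$ fresh vertices to each white vertex $w$, then each gadget vertex $u \in K_w$ has $N[u] = K_w \cup \{w\}$, so every solution must contain at least $r$ vertices of $K_w \cup \{w\}$, forcing the budget up to roughly $k'' = k + r|W|$. But then $w$ itself can enter the solution at \emph{zero net cost}, since it counts toward its own gadget's quota of $r$; white vertices become free, and equivalence breaks. Concretely, let $G'$ consist of one black vertex $b$ adjacent to white vertices $w_1, \dots, w_r$ with $r \ge 2$ and $k = 1$. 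This is a No-instance of \BW{} (dominating $b$ at least $r$ times requires $r$ vertices), yet the gadgeted instance with budget $1 + r^2$ is a Yes-instance: take all $r$ whites plus $r-1$ vertices from each $K_{w_i}$, which is $r^2$ vertices and dominates $b$, every white, and every gadget vertex $r$ times. Your ``cleaner alternative'' does not repair this: you cannot make a white vertex simplicial (its neighborhood generally contains nonadjacent black vertices), and \Cref{lemma-simplical-vertex} only guarantees a solution \emph{avoiding} a simplicial vertex---it does not exempt that vertex from needing to be dominated, which is precisely what whiteness must simulate. Adding further ``small cliques or simplicial vertices as needed'' only creates more forced local budget and does not remove the freeness of the whites.

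The missing idea is that the gadget must be \emph{shared} rather than per-vertex. The paper adds a single clique $Q = \{w_1, \dots, w_{r+1}\}$, joins every white vertex to $w_1, \dots, w_r$, and sets $k' = k + r$. The vertex $w_{r+1}$ is simplicial with degree exactly $r$, so by \Cref{lemma-simplical-vertex} some optimal solution $D'$ avoids it, and since $w_{r+1}$ must still be dominated $r$ times, all of $C = \{w_1, \dots, w_r\}$ is forced into $D'$; the set $C$ then dominates every former white vertex $r$ times for free, and $D' \setminus C$ is a \bw{} of size at most $k$ in the \BW{} instance. Conversely, any \bw{} $D$ lifts to the threshold dominating set $D \cup C$ of size at most $k + r$. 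Because the budget grows only by the additive constant $r$, white vertices cannot be smuggled into the solution for free, which is exactly what your per-vertex construction fails to prevent. Note also that the paper does not insist on preserving $c$-closure at this final step (the output is merely $k^{\Oh(cr)}$-closed), which is harmless since the vertex bound $k^{\Oh(cr)}$ is already established; your worry about maintaining $c$-closure of the decolored graph via \Cref{obs:addv} is therefore unnecessary for the theorem as stated.
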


\begin{proof}
  To obtain a~$k^{\Oh(cr)}$-vertex kernel for \textsc{Threshold Dominating Set}, we first construct an equivalent instance~$(G, k)$ of \BW{} using \Cref{theo-kernel-rwb-threshold-ds}.
  Then, we transform~$(G, k)$ into an equivalent instance~$(G',k')$ of \textsc{Threshold Dominating Set} in~$k^{\Oh(cr)}$-closed graphs as follows.

  We start with a copy of~$G$. 
  We add a clique~$Q:=\{ w_1, \dots, w_{r + 1} \}$ of~$r + 1$ vertices. 
  Then, for each white vertex~$w$ we add edges~$w w_1, \dots, w w_{r}$.
  Then, we remove all vertex colors.
  We call the resulting graph $G'$.
  Let~$C := \{ w_1, \ldots, w_r\}$ and let~$k' := k + r$.
  We show that~$(G, k)$ is a Yes-instance if and only if~$(G', k')$ is a Yes-instance.
  
  Let~$D$ be a \bw{} of~$G$.
  By construction, $D \cup C$ is a threshold dominating set of size at most~$k'$ of~$G'$.
  Conversely, suppose that~$G'$ has a threshold dominating set~$D'$ of size at most~$k'$. 
  By \Cref{lemma-simplical-vertex}, we can assume that~$w_{r+1}\notin D'$. 
  Since~$\deg_{G'}(w_{r + 1}) = r$, it holds that~$N_{G'}(w_{r+1}) = C \subseteq D'$. 
Since~$C$ dominates only vertices of~$G'$ that are white in~$G$, we have that every vertex of~$G'$ which is black in~$G$ is dominated~$r$ times by~$D:=D'\setminus C$.  
  Thus,~$D$ is a \bw{} of size~$k$ for~$G$. 
\end{proof}
Since the kernelization does not change the parameter~$r$, it also gives a kernelization for \textsc{Dominating Set}.

\begin{corollary}
  \label{cor:dskernel}
  \textsc{Dominating Set} has a kernel with~$k^{\Oh(c)}$ vertices.
\end{corollary}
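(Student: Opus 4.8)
The plan is to derive \Cref{cor:dskernel} as an immediate specialization of \Cref{theo-kernel-threshold-ds}, so the proof is essentially an observation rather than a fresh argument. Recall that \textsc{Dominating Set} is precisely \textsc{Threshold Dominating Set} with the threshold fixed at $r = 1$, as noted right after the problem definition. I would first invoke \Cref{theo-kernel-threshold-ds} to obtain, for the given instance $(G, k)$, an equivalent \textsc{Threshold Dominating Set} instance with $k^{\Oh(cr)}$ vertices computable in $\Oh^*(2^c)$ time.

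The key point to verify is that the kernelization underlying \Cref{theo-kernel-threshold-ds} leaves the parameter $r$ unchanged. Scanning the reduction rules, each of \Cref{rr:clique,rr:indepseti-larger-r,rr:cliqueno,rr-number-blue-vertices,rr-remove-white-vertices} only manipulates vertices, edges, and colors, and the final transformation in the proof of \Cref{theo-kernel-threshold-ds} adds a clique $Q$ of size $r + 1$ and adjusts $k$ to $k' = k + r$, but never modifies $r$ itself. Hence when the input has $r = 1$, the output instance also has $r = 1$, and is therefore again a genuine \textsc{Dominating Set} instance.

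Substituting $r = 1$ into the vertex bound $k^{\Oh(cr)}$ collapses the exponent to $\Oh(c)$, giving a kernel with $k^{\Oh(c)}$ vertices; the running-time bound $\Oh^*(2^c)$ is inherited verbatim since it is independent of $r$. I do not anticipate any real obstacle here: the only thing one must be careful about is confirming that $r$ is genuinely a pass-through parameter of the kernelization, which the explicit construction makes transparent. I would therefore state the corollary as following directly from \Cref{theo-kernel-threshold-ds} by setting $r = 1$ and noting that the parameter $r$ is preserved throughout the kernelization.
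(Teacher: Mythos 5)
Your proposal is correct and takes essentially the same route as the paper, which derives the corollary from \Cref{theo-kernel-threshold-ds} with precisely the one-line observation that the kernelization does not change the parameter~$r$, so setting $r = 1$ yields the $k^{\Oh(c)}$ bound and the $\Oh^*(2^c)$ running time verbatim. Your explicit check that every reduction rule and the final color-removal construction leave $r$ untouched simply spells out what the paper asserts in passing.
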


To complement this result, we show that there is no kernel for \textsc{Dominating Set} significantly smaller than that of \Cref{cor:dskernel} under a widely believed assumption.

\begin{theorem}
  For $c \ge 3$, \textsc{Dominating Set} has no kernel of size~$\Oh(k^{c-1-\epsilon})$ unless \rm{coNP~$\subseteq$ NP/poly}.
\end{theorem}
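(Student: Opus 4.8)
The plan is to prove the lower bound by a polynomial parameter transformation from a problem that is known to resist polynomial kernels, most naturally \textsc{$\lambda$-Hitting Set} (also called \textsc{$d$-Hitting Set}), as the hint in the excerpt suggests. Recall that \textsc{$\lambda$-Hitting Set} asks, given a universe~$U$, a family~$\mathcal{F}$ of subsets of~$U$ each of size at most~$\lambda$, and an integer~$k$, whether there is a hitting set of size~$k$. This problem is known to admit no kernel of size~$\Oh(k^{\lambda - \epsilon})$ for any~$\epsilon > 0$ unless $\mathrm{coNP} \subseteq \mathrm{NP/poly}$. The goal is to match the parameter~$\lambda$ with~$c - 1$, so that a too-small kernel for \textsc{Dominating Set} on $c$-closed graphs would refute the \textsc{$\lambda$-Hitting Set} lower bound for~$\lambda = c - 1$.

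First I would set up the standard \textsc{Hitting Set}-to-\textsc{Dominating Set} reduction and then verify that the produced instance is $c$-closed. Given an instance of \textsc{$(c-1)$-Hitting Set}, I would build a graph~$G$ as follows: introduce one \emph{element vertex} for each~$u \in U$ and make all element vertices into a single clique~$K_U$; introduce one \emph{set vertex} for each~$F \in \mathcal{F}$ and join~$F$ to exactly the element vertices it contains; and add a small gadget (for instance a pendant construction or a dummy vertex adjacent only to the set vertices) to force any dominating set to pick only element vertices and to force every set vertex to be dominated. Since~$K_U$ is a clique, every element vertex is already dominated, so the only constraint is that each set vertex~$F$ must be dominated, i.e.\ some element of~$F$ (or~$F$ itself) is chosen; with the gadget preventing the selection of set vertices, this exactly encodes hitting each~$F$. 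The parameter~$k$ is preserved up to an additive constant from the gadget, so the transformation is a valid polynomial parameter transformation.

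The key technical step, which I expect to be the main obstacle, is establishing that~$G$ has $c$-closure exactly (or at most)~$c = \lambda + 1$. Two nonadjacent vertices in~$G$ fall into a few cases: two set vertices~$F, F'$ share at most~$\min(|F|,|F'|) \le \lambda = c - 1$ common neighbors, namely the common elements $F \cap F'$, which is fine; a set vertex~$F$ and an element vertex outside~$F$ share at most~$|F| \le c - 1$ common neighbors among the element clique; and element vertices are all pairwise adjacent, so they impose no constraint. The delicate point is ensuring the gadget itself does not create a nonadjacent pair with~$c$ or more common neighbors — this typically forces the gadget to be attached via private or pendant vertices rather than to the whole set-vertex class at once, so I would design the gadget so that every pair of its vertices is either adjacent or has few common neighbors. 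A careful case analysis over all nonadjacent pairs, bounding shared neighborhoods by~$\lambda$, yields $c$-closure~$\le c$.

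Finally I would assemble the contradiction: if \textsc{Dominating Set} on $c$-closed graphs admitted a kernel of size~$\Oh(k^{c - 1 - \epsilon})$, then composing the polynomial parameter transformation above with that kernelization would yield a kernel of size~$\Oh(k^{\lambda - \epsilon})$ for \textsc{$\lambda$-Hitting Set} with~$\lambda = c - 1$, contradicting the known lower bound unless $\mathrm{coNP} \subseteq \mathrm{NP/poly}$. The condition~$c \ge 3$ enters because it guarantees~$\lambda = c - 1 \ge 2$, the regime in which the \textsc{$\lambda$-Hitting Set} incompressibility result holds. I would also remark that the exponent~$c - 1$ essentially matches the~$k^{\Oh(c)}$ upper bound of \Cref{cor:dskernel}, showing the dependence of the exponent on~$c$ is asymptotically tight.
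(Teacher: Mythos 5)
Your overall route is exactly the paper's: reduce from \textsc{$(c-1)$-Hitting Set} (no kernel of size $\Oh(k^{\lambda-\epsilon})$ unless $\mathrm{coNP} \subseteq \mathrm{NP/poly}$), build a clique on the universe $U$ with a set vertex for each $F \in \mathcal{F}$ attached to its elements, verify $c$-closure by the case analysis you sketch, and compose with the Hermelin--Wu framework for parameter-preserving transformations, with $c \ge 3$ guaranteeing $\lambda = c - 1 \ge 2$. The one genuine gap is the gadget, which you correctly flag as the delicate point but do not resolve---and in fact both of your candidate gadgets fail. A dummy vertex $g$ adjacent to all set vertices destroys $c$-closure: for any element $u$, the nonadjacent pair $(g, u)$ has as common neighbors all set vertices whose sets contain $u$, and the frequency of an element in $\mathcal{F}$ is unbounded. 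A pendant construction that forces set vertices to be dominated privately would instead force the solution to pick one vertex per set, blowing the parameter up to $|\mathcal{F}|$ and destroying the polynomial parameter transformation. (Also, ``forcing every set vertex to be dominated'' needs no gadget at all: in \textsc{Dominating Set} every vertex must be dominated.)

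The paper's resolution is that no gadget is needed. Taking sets of size exactly $c - 1$, every set vertex $S$ has $\deg_G(S) = c - 1$, so $c$-closure is immediate (any nonadjacent pair involving $S$ has at most $c - 1$ common neighbors), and $N_G(S)$ is a clique inside the universe clique, i.e., set vertices are simplicial. Hence, by the exchange argument of \Cref{lemma-simplical-vertex} (with $r = 1$), one may assume that a dominating set of size at most $k$ avoids $\mathcal{F}$ entirely: replacing a set vertex in the solution by one of its elements still dominates that set vertex and, via the clique on $U$, all of $U$. This keeps $k' = k$ exactly, and the rest of your composition argument goes through as you wrote it. So the missing idea is not a cleverer gadget but the observation that simplicial set vertices can always be exchanged away, which is precisely what lets the bare \textsc{Hitting Set}-to-\textsc{Dominating Set} reduction stay $c$-closed.
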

\begin{proof}
We will show the theorem by a reduction from \textsc{$\lambda$-Hitting Set}. 

\problemdef
{$\lambda$-Hitting Set}
{A set family~$\mathcal{F}$ over a universe $U$, where each~$S\in\mathcal{F}$ has size~$\lambda$, and~$k\in\mathds{N}$.}
{Is there a subset~$X \subseteq U$ of size at most~$k$ such that for each~$S\in\mathcal{F}$ we have~$X \cap S\neq\emptyset$?}

For any~$\lambda \ge 2$, \textsc{$\lambda$-Hitting Set} does not have a kernel of size~$\Oh(k^{\lambda-\epsilon})$ unless coNP~$\subseteq$ NP/poly~\cite{DM12,DM14}. 
Let~$(U,\mathcal{F},k)$ be an instance of \textsc{$\lambda$-Hitting Set}. 
We will construct a~$(\lambda+1)$-closed graph~$G$ as follows: 
The vertex set~$V(G)$ is~$U \cup \mathcal{F}$.  
We add edges such that~$U$ forms a clique in~$G$. 
We also add an edge between~$u \in U$ and~$S \in \mathcal{F}$ if and only if~$u \in S$. 
Finally, we set~$k'=k$. 
Since~$\deg_G(S) = \lambda$ for each~$S\in\mathcal{F}$, the graph~$G$ is~$(\lambda+1)$-closed. 

By construction, each hitting set~$X$ of size at most~$k$ is also a dominating set of size at most~$k$ of~$G$.
  For the converse direction, we may assume  by \Cref{lemma-simplical-vertex} that there is a dominating set~$D$ of size at most~$k$ for~$G$ not containing any vertex from~$\mathcal{F}$.
  Thus,~$D$ is also a hitting set of $(U, \mathcal{F}, k)$.
  
  Observe that our reduction preserves the parameter (that is, $k = k'$). 
  Thus, it follows from the result of Hermelin and Wu~\cite{HW12} that if \textsc{Dominating Set} admits a kernel of size~$\Oh(k^{\lambda-1-\epsilon})$ for some~$\epsilon>0$, then \textsc{$\lambda$-Hitting Set} admits a kernel of size~$\Oh(k^{\lambda-\epsilon})$, implying that coNP~$\subseteq$ NP/poly~\cite{DM12,DM14}. 
\end{proof}

\subsection{A Faster Algorithm}
\label{sec-faster-fpt-threshold-ds}


\Cref{theo-kernel-threshold-ds} yields a simple FPT algorithm to solve an instance~$(G,k)$ of \textsc{Threshold Dominating Set}.
First, compute a kernel~$(G',k')$ of~$k^{\Oh(cr)}$ vertices and then 
run a brute-force algorithm on the kernel checking whether there exists a set of~$k'$ vertices that dominates all vertices of~$V(G')$.
The running time of this algorithm is~$\Oh^*\big(k^{\Oh(crk)}\big)$. 
In this section we will present a faster FPT algorithm \textit{SolveTDS} with running time~$\Oh^*\big((ck)^{\Oh(rk)}\big)$. 
The pseudocode is shown in Algorithm~\ref{algo-branching-threshold-ds}.

\begin{algorithm2e}[t]
	\caption{The FPT algorithm \textit{SolveTDS} to solve \textsc{Threshold Dominating Set}.}
	\label{algo-branching-threshold-ds}
\DontPrintSemicolon
        \SetKwFunction{algo}{\textit{SolveTDS}}
        \SetKwFunction{proc}{\textit{Branch}}
  \SetKwProg{myalg}{Algorithm}{}{}
  \myalg{\algo{$G,k$}}{
    Color each vertex black \label{line-transform-into-wb-tds}\\
   \proc{$G,k,\emptyset$}\label{line-initial-branching}\
}{}
  \SetKwProg{myproc}{Procedure}{}{}
  \myproc{\proc{$G=((B,W),E),k',D$}}{
  Color each black vertex which is dominated by~$D$ at least~$r$ times white  \label{line-update-demands}\\
  \lIf{$|B|\ge 1$ and~$k'=0$}{\KwRet No} \label{line-b-zero-k-one}
  \While{\texttt{true}\label{line-while-true}}{
    Compute greedily a~$2$-maximal independent set~$I$ in~$G[B]$ \label{line-compute-greedy-is}\\
    \eIf{$|I|\ge k'+1$\label{line-inner-branching}}{Let~$I'\subseteq I$ be an arbitrary set such that~$|I'|=k'+1$ \label{line-compute-set-size-k-plus-one}\\ Let~$P$ be the set of common neighbors of at least two vertices in~$I'$ \label{line-compute-common-neigs-of-i}\\\lForEach{$v \in P\setminus D$}{\proc{$G,k'-1, D\cup\{v\}$} \label{line-choose-v-in-p}}}
    {\label{line-else}\eIf{private black neighborhood~$B_v$ for some~$v\in I$ is a clique of size at least~$ck$\label{line-many-priviate-black-neighbors}}{Apply \Cref{rr:clique} on~$B_v\cup\{v\}$ \label{line-apply-rr-clique}}{Apply \Cref{rr-remove-white-vertices} exhaustively on~$(G,k')$ \label{line-apply-rr-remove-white}\\
    \KwRet the result of brute-force search on~$(G,k')$ \label{line-brute-force}\
    }}}
  %
  }
\end{algorithm2e}

As in the kernelization algorithm presented in \Cref{sec-kernel-threshold-ds}, the first step of \textit{SolveTDS} transforms the instance~$(G,k)$ of \textsc{Threshold Dominating Set} into an equivalent instance~$(G',k)$ of \BW{} which will be solved by \textit{Branch}. 

The input of \textit{Branch} is a bw-graph $G$, $k' \in \mathds{N}$, and a partial solution $D \subseteq V(G)$.
The goal is to find a \bw{} $D'$ such that $D \subseteq D'$ and $|D'| \le |D| + k'$.
First, in Line~\ref{line-update-demands}, we count how many times each black vertex~$v$ is dominated by~$D$. 
If~$v$ is dominated by~$D$ at least~$r$ times, we recolor~$v$ white. Subsequently, every black vertex needs to be dominated at least once. Thus, we return No when there is at least one black vertex but the remaining budget~$k'$ is~$0$. 
Then, we continuously compute a greedy~$2$-maximal independent set~$I$ in~$G[B]$, the subgraph induced by the black vertices. 
Recall that a maximal independent set $I$ is $2$-maximal in $G$, if it holds for every $x, y\in V(G) \setminus I$ and $v \in I$ that $(I \setminus \{ v \}) \cup \{ x, y \}$ is not an independent set.

If~$|I|> k'$, \textit{Branch} computes a subset~$I'\subseteq I$ of size exactly~$k'+1$. 
Let $P$ be the set of vertices with at least two neighbors in~$I'$. 
Now, for each vertex~$v\in P\setminus D$, \textit{Branch} branches in Line~\ref{line-choose-v-in-p} to add $v$ to the partial solution~$D$.

Otherwise,~$|I|\le k'$. 
Observe that  it is not sufficient to add~$I$ to~$D$---for example, it is possible that $|N[v]\cap D|<r-1$ for some $v\in I$. 
Indeed, it may be necessary to include some vertices of~$W$ into~$D$. 
To deal with such cases, our algorithm resorts to brute force, trying all possibilities of choosing a subset of at most~$k'$ vertices of~$V(G)\setminus D$.
In order to achieve the claimed running time, however, we have to ensure that the number of vertices is bounded.
Recall that \Cref{lemma-bound-number-white-vertices} gives an upper bound on the number of white vertices in terms of the number of black vertices, provided that \Cref{rr-remove-white-vertices} has been applied exhaustively.
Therefore, it suffices to bound the number of black vertices.
To achieve this, we rely on \Cref{rr:clique}:
\textit{Branch} checks if the private black neighborhood~$B_v:=B\cap(N(v)\setminus N(I\setminus\{v\}))$ of a vertex~$v\in I$ has size at least~$ck$. 
If so, it applies \Cref{rr:clique} on~$B_v\cup\{v\}$ (Line~\ref{line-apply-rr-clique}) and iterates the procedure on a new greedy~$2$-maximal independent set~$I$ in the induced subgraph~$G[B]$ of the black vertices. 
Otherwise,~$|B_v| < ck$ for each vertex in~$I$.
As we will show later, this ensures that the number of black vertices is upper-bounded.







In the following, we will show that Algorithm~\ref{algo-branching-threshold-ds} is correct and has a running time of~$\Oh^*\big((ck)^{\Oh(rk)}\big)$.

\begin{lemma}
\label{lemma-branch-correct}
Algorithm \textit{SolveTDS} is correct. 
\end{lemma}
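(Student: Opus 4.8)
The plan is to separate the trivial preprocessing from the recursive procedure \textit{Branch}, which carries the real content. Coloring every vertex black in Line~\ref{line-transform-into-wb-tds} turns the \textsc{Threshold Dominating Set} instance into the equivalent \BW{} instance observed in the text, and applying \Cref{rr:clique} in Line~\ref{line-apply-rr-clique} preserves equivalence and $c$-closure by \Cref{lemma:rrclique}. Hence it suffices to establish the following semantics of \textit{Branch}, which I would prove by induction on~$k'$: \emph{the call \textit{Branch}$(G,k',D)$ returns Yes if and only if there is a set $D' \supseteq D$ with $|D' \setminus D| \le k'$ such that $|N[v] \cap D'| \ge r$ for every vertex~$v$ that is black in~$G$ at entry.} Specializing this to the top-level call \textit{Branch}$(G,k,\emptyset)$ then yields the lemma.

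For the induction I would first argue that the bookkeeping is sound. Recoloring in Line~\ref{line-update-demands} every black vertex already dominated $r$ times by~$D$ to white does not change the set of admissible~$D'$, since such a vertex stays dominated $r$ times under any $D' \supseteq D$; thus I may work with the post-recoloring coloring, in which the remaining black vertices are exactly those~$v$ with $|N[v]\cap D| < r$. Returning No when $k'=0$ while a black vertex remains (Line~\ref{line-b-zero-k-one}) is then correct, because each such~$v$ needs at least one further dominator from $D'\setminus D$. The heart of the argument is the branching case $|I| \ge k'+1$ (Line~\ref{line-inner-branching}), where I would prove the key claim that \emph{every} admissible~$D'$ contains a vertex of $P \setminus D$, with $P$ the set of vertices having at least two neighbors in the size-$(k'+1)$ independent set $I' \subseteq I$.

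This claim follows from a pigeonhole argument of the same flavor as in \Cref{lemma-indset-c-1}: each of the $k'+1$ pairwise nonadjacent still-black vertices of~$I'$ needs a dominator in $(D'\setminus D)\cap N[v]$, but $|D'\setminus D|\le k'$, so two distinct $x,y\in I'$ share a dominator $d\in D'\setminus D$; since $x,y$ are distinct and nonadjacent, $d$ equals neither, whence $d\in N(x)\cap N(y)$, so $d$ has two neighbors in~$I'$ and $d\in P\setminus D$. Given the claim, branching over all $v\in P\setminus D$ in Line~\ref{line-choose-v-in-p} is exhaustive: one direction is immediate, as a Yes-answer of \textit{Branch}$(G,k'-1,D\cup\{v\})$ provides a witness for \textit{Branch}$(G,k',D)$; for the converse the claim supplies $v\in (D'\setminus D)\cap P$ with $D'\supseteq D\cup\{v\}$ and $|D'\setminus(D\cup\{v\})|\le k'-1$, so the corresponding call returns Yes by the induction hypothesis. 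I expect this pigeonhole step to be the main obstacle, the only genuine subtlety being the verification that the shared dominator is a proper common neighbor rather than an endpoint of~$I'$; note that it uses independence and counting only, so $c$-closure is needed for the running-time bound but not for correctness of the branching.

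Finally, in the base case $|I|\le k'$ (Line~\ref{line-I-small}) the algorithm applies \Cref{rr-remove-white-vertices} exhaustively, which preserves equivalence and $c$-closure by \Cref{lemma-rr-remove-white-vertices-correct}, and then brute-forces over all subsets of $V(G)\setminus D$ of size at most~$k'$. Since every candidate increment $D'\setminus D$ lies in exactly this search space, the brute force is trivially exhaustive and hence correct on the reduced instance, and therefore on the original one. Combined with the soundness of the preprocessing, this establishes the claimed semantics of \textit{Branch} and thus the correctness of \textit{SolveTDS}.
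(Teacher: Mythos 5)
Your proposal is correct and follows essentially the same route as the paper's proof: preprocessing justified via \Cref{lemma:rrclique}, soundness of the recoloring and the No-return, correctness of the branching because any extension of~$D$ must hit~$P \setminus D$, and the base case via \Cref{rr-remove-white-vertices} plus exhaustive brute force. You merely make explicit (via induction on~$k'$ and the pigeonhole argument, including the check that the shared dominator is not an endpoint in~$I'$) what the paper asserts in one line, which is a faithful filling-in rather than a different approach.
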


\begin{proof}
Line~\ref{line-transform-into-wb-tds} is correct since it is the above-mentioned reduction of \textsc{Threshold Dominating Set} to \textsc{BW-Threshold Dominating Set}.
Furthermore, Line~\ref{line-initial-branching} is correct since each solution is a superset of the empty set.
 Hence, it suffices to show that \textit{Branch}$(G,k',D)$ is correct. 
If a vertex is dominated at least~$r$ times by~$D$ we may safely color it white (Line~\ref{line-update-demands}) since \textit{Branch} returns only supersets of~$D$.  
Moreover, an instance with at least one black vertex is clearly a No-instance already if $k' = 0$ (Line~\ref{line-b-zero-k-one}). 
In the following, let~$I$ be the~$2$-maximal independent set of~$G[B]$ computed in Line~\ref{line-compute-greedy-is}. 

Suppose that~$|I|\ge k'+1$. 
Let~$I'\subseteq I$ with~$|I'|=k'+1$ be the set chosen in Line~\ref{line-compute-set-size-k-plus-one} and let~$P$ be the set of vertices in~$G$ which have at least two neighbors in~$I'$ computed in Line~\ref{line-compute-common-neigs-of-i}. 
Since~$|{}I'|=k'+1$, the sought solution must contain at least one vertex vertex of~$P\setminus D$.
Hence, the branching in Line~\ref{line-choose-v-in-p} is correct. 

Suppose that~$|I|\le k'$.  
First, assume that the private black neighborhood~$B_x:=B\cap(N(x)\setminus N(I\setminus\{x\}))$ has size at least~$ck$ (Line~\ref{line-many-priviate-black-neighbors}).
Note that~$B_x$ is a clique by \Cref{lemma:privateclique}.
Since~$|B_x|>ck$, we can apply \Cref{rr:cliquei} in Line~\ref{line-apply-rr-clique}.
Second, assume that~$B_x$ is smaller than~$ck$ for every vertex~$x\in I$. 
By \Cref{lemma-rr-remove-white-vertices-correct}, \Cref{rr-remove-white-vertices} is correct and hence Line~\ref{line-apply-rr-remove-white} is correct. 
Since the algorithm considers each possibility of adding at most~$k'$ vertices to the partial solution~$D$, \textit{Branch} is clearly correct in this case. 

Hence, \textit{SolveTDS} finds a threshold dominating set~$D$ of size at most~$k$ for the input instance~$(G,k)$ of \textsc{Threshold Dominating Set} if one exists.
\end{proof}

Before we analyze the running time of \textit{SolveTDS}, we bound the number of vertices of those graphs that are solved by the brute-force search in the algorithm. 

\begin{lemma}
\label{lemma-bound-sizes-black-and-white-after-rr-white}
After the exhaustive application of \Cref{rr-remove-white-vertices} in Line~\ref{line-apply-rr-remove-white}, the graph~$G$ contains~$\Oh(ck^2)$ black vertices and~$\Oh(c^3k^4+c^{r-1}k^{2r-2})$ white vertices. 
\end{lemma}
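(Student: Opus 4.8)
The plan is to bound the number of black vertices first, exploiting the $2$-maximality of the independent set~$I$ computed in Line~\ref{line-compute-greedy-is}, and then to read off the white-vertex bound from \Cref{lemma-bound-number-white-vertices}. Throughout we are in the branch where~$|I| \le k' \le k$; write~$I = \{ v_1, \dots, v_{|I|} \}$. Since \Cref{rr-remove-white-vertices} only deletes white vertices, it neither changes~$G[B]$ nor the number of black vertices, and~$I$ remains a $2$-maximal independent set of~$G[B]$ afterwards; hence it suffices to bound~$|B|$ in terms of~$I$.

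For the black vertices I would adapt the counting argument of \Cref{lemma:ramsey}, using $2$-maximality in place of ``maximum independent set''. Set~$C_i := (N[v_i] \cap B) \setminus N(I \setminus \{ v_i \})$, the black vertices whose unique neighbor in~$I$ is~$v_i$, together with~$v_i$ itself. The crucial step is to show that each~$C_i$ is a clique: if~$u \ne u'$ were nonadjacent vertices of~$C_i \setminus \{ v_i \} \subseteq B \setminus I$, then, as neither is adjacent to any~$v_j$ with~$j \ne i$, the set~$(I \setminus \{ v_i \}) \cup \{ u, u' \}$ would be independent in~$G[B]$, contradicting the $2$-maximality of~$I$. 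Since~$I$ is maximal in~$G[B]$, every black vertex either lies in some~$C_i$ or is a common neighbor of two vertices of~$I$, so
\begin{align*}
  |B| \le \sum_{i \in [|I|]} |C_i| + \sum_{i < j \in [|I|]} |N(v_i) \cap N(v_j) \cap B|.
\end{align*}
Each~$C_i$ is a black clique and hence lies in a maximal clique, which contains fewer than~$ck$ black vertices by \Cref{rr:clique} (recoloring only turns black vertices white and thus never increases this count); so~$|C_i| < ck$. Moreover~$v_i v_j \notin E(G)$, whence~$|N(v_i) \cap N(v_j)| \le c - 1$ by the $c$-closure of~$G$. With~$|I| \le k$ this gives~$|B| < k \cdot ck + \binom{k}{2}(c - 1) \in \Oh(ck^2)$.

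For the white vertices I would simply invoke \Cref{lemma-bound-number-white-vertices}, which is applicable since \Cref{rr-remove-white-vertices} has been applied exhaustively in Line~\ref{line-apply-rr-remove-white}. Plugging~$|B| \in \Oh(ck^2)$ into its bound~$\Oh(c\,|B|^2 + |B|^{r-1})$ yields~$\Oh\bigl(c(ck^2)^2 + (ck^2)^{r-1}\bigr) = \Oh(c^3 k^4 + c^{r-1} k^{2r-2})$ white vertices, as required.

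The step I expect to be the main obstacle is the clique argument for the~$C_i$: one has to verify that $2$-maximality---and not global maximality, which need not hold for the greedily computed~$I$---is precisely the property that makes the swap~$(I \setminus \{ v_i \}) \cup \{ u, u' \}$ illegal, and to check that the swapped-in vertices~$u, u'$ really belong to~$B \setminus I$ so that the definition of $2$-maximality applies. Once the clique structure is in place, the two summations are routine.
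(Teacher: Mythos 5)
Your proposal is correct and takes essentially the same route as the paper: the paper also bounds the black vertices by splitting them into common neighbors of two vertices of~$I$ (at most~$(c-1)\binom{k}{2}$ by the $c$-closure) and the private black neighborhoods of the~$v_i$, which are shown to be cliques via the $2$-maximality swap and hence have size less than~$ck$ by the exhaustive application of \Cref{rr:clique}, and it then obtains the white-vertex bound by plugging~$|B| \in \Oh(ck^2)$ into \Cref{lemma-bound-number-white-vertices} exactly as you do. Your explicit checks---that recoloring black vertices white never increases black-clique sizes and that \Cref{rr-remove-white-vertices} leaves~$G[B]$ (and thus the $2$-maximality of~$I$) intact---are slightly more careful than the paper's write-up but do not change the argument.
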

\begin{proof}
First, we bound the number of black vertices. 
Let~$I$ be the~$2$-maximal independent set of~$G[B]$ obtained in Line~\ref{line-compute-greedy-is}. In Line~\ref{line-apply-rr-remove-white}, we may assume that~$|I|\le k' \le k$. 
Observe that, since~$I$ is maximal, each black vertex has at least one neighbor in~$I$. 
Since $I$ is an independent set, there are at most $(c - 1) \cdot \binom{k}{2}$ vertices which have at least two neighbors in~$I$.

It remains to bound the number of black vertices with exactly one neighbor in~$I$. 
Let~$B_x := B \cap (N(x) \setminus N(I \setminus \{ x \}))$ be the private black neighborhood of some vertex~$x \in I$. 
By \Cref{lemma:privateclique}, $B_x$ is a clique.
Since \textit{Branch} is currently in Line~\ref{line-apply-rr-remove-white} we conclude that~$|B_x|<ck$ for each~$x\in I$ due to the check in Line~\ref{line-many-priviate-black-neighbors}.
Thus, overall there are at most~$ck^2$ private black vertices.
Hence, the total number of black vertices is bounded by~$\Oh(ck^2)$. 

For white vertices, note that \Cref{rr-remove-white-vertices} has been applied exhaustively.
Thus, we conclude from \Cref{lemma-bound-number-white-vertices} that the overall number of white vertices is $\Oh(c |B|^2+|B|^{r-1}) = \Oh(c^3 k^4 + c^{r - 1} k^{2r - 2})$. 
Hence, the claimed bound follows. 
\end{proof}

Now, we examine the running time of \textit{SolveTDS}.

\begin{lemma}
\label{lemma-running-time-fpt-threshold-ds}
Algorithm \textit{SolveTDS} runs in~$\Oh^*\big((ck)^{\Oh(rk)}\big)$ time.
\end{lemma}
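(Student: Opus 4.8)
The plan is to separate the running time into two independent contributions: the one-time preprocessing in Lines~\ref{line-transform-into-wb-tds}--\ref{line-apply-rr-clique} of \textit{SolveTDS}, and the cost of the recursive procedure \textit{Branch}. The coloring in Line~\ref{line-transform-into-wb-tds} is trivial, and---as already noted right after \Cref{rr:clique}---the exhaustive application of \Cref{rr:clique} in Line~\ref{line-apply-rr-clique} can be carried out in $\Oh^*(3^{c/3})$ time by enumerating all maximal cliques with the algorithm of Fox et al.~\cite{FRSWW18}. It then remains to bound the cost of \textit{Branch}, which I would handle by separately bounding the size of its recursion tree and the work done at each leaf.

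First I would bound the recursion tree. Every recursive call in Line~\ref{line-choose-v-in-p} decreases the budget $k'$ by one, and branching occurs only in the case $|I| \ge k'+1$; since a call with $k'=0$ and $|B|\ge 1$ returns immediately in Line~\ref{line-b-zero-k-one}, the tree has depth at most $k$. For the branching factor I would bound $|P \setminus D| \le |P|$, where $P$ is the set of vertices with at least two neighbors in the independent set $I'$ of size $k'+1 \le k+1$. As $I'$ is independent, any two of its vertices are nonadjacent and hence, by the $c$-closure of $G$, share at most $c-1$ common neighbors; since every vertex of $P$ is a common neighbor of some pair in $I'$, we get $|P| \le \binom{k'+1}{2}(c-1) = \Oh(ck^2)$. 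Thus the tree has at most $(ck^2)^{k} = (ck)^{\Oh(k)}$ leaves, and all work along a root-to-leaf path except the brute force---recoloring in Line~\ref{line-update-demands}, the greedy $2$-maximal independent set in Line~\ref{line-compute-greedy-is}, and the exhaustive application of \Cref{rr-remove-white-vertices} in Line~\ref{line-apply-rr-remove-white}---is polynomial.

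Next I would bound the brute-force step in Line~\ref{line-brute-force}, which is reached only when $|I| \le k'$ and after \Cref{rr-remove-white-vertices} has been applied exhaustively. By \Cref{lemma-bound-sizes-black-and-white-after-rr-white} the graph then has $\Oh(ck^2)$ black vertices and $\Oh(c^3k^4 + c^{r-1}k^{2r-2})$ white vertices, so $|V(G)| = \Oh(c^3 k^4 + c^{r-1} k^{2r-2}) = (ck)^{\Oh(r)}$, using $c^{r-1}k^{2r-2} = (ck^2)^{r-1}$. Trying every subset of $V(G)\setminus D$ of size at most $k' \le k$ and checking each in polynomial time therefore costs $|V(G)|^{\Oh(k)} = (ck)^{\Oh(rk)}$ per leaf.

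Finally, multiplying the $(ck)^{\Oh(k)}$ leaves by the $(ck)^{\Oh(rk)}$ cost per leaf and absorbing polynomial factors yields $(ck)^{\Oh(rk)}$ for \textit{Branch}, and together with the $\Oh^*(3^{c/3})$ preprocessing this gives the claimed $\Oh^*(3^{c/3} + (ck)^{\Oh(rk)})$ bound. The step I expect to be the main obstacle is the brute-force leaf: one must invoke \Cref{lemma-bound-sizes-black-and-white-after-rr-white} to control the graph size there (which itself relies on \Cref{rr:clique} having been applied at the top so that black cliques stay smaller than $ck$, and on the private black neighborhoods being cliques), and then check that the resulting $(ck)^{\Oh(rk)}$ term indeed dominates the $(ck)^{\Oh(k)}$ branching contribution.
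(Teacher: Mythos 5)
Your proposal is correct and follows essentially the same route as the paper's proof: the same $\Oh^*(3^{c/3})$ preprocessing bound, the same $\Oh(ck^2)$ branching factor derived from the $c$-closure of common neighborhoods of the independent set $I'$, the same depth-$k$ search tree, and the same invocation of \Cref{lemma-bound-sizes-black-and-white-after-rr-white} to bound the brute-force cost at the leaves by $(ck)^{\Oh(rk)}$. The final combination $(ck^2)^k \cdot (ck)^{\Oh(rk)} = (ck)^{\Oh(rk)}$ matches the paper's calculation exactly.
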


\begin{proof}
The equivalent instance of \BW{} can be constructed in linear time. 
Next, we analyze the time complexity of \textit{Branch}.
It is easy to see that a 2-maximal independent set $I$ can be computed in polynomial time.

First, we show that \textit{Branch} either enters Line~\ref{line-inner-branching} or Line~\ref{line-brute-force} in polynomial time.
In other words, we show that \textit{Branch} spends polynomial time before it calls \textit{Branch} recursively or returns Yes or No determined by a brute-force search. 
Note that since \Cref{rr:clique,rr-remove-white-vertices} can be applied in polynomial time, Lines~\ref{line-compute-greedy-is}--\ref{line-apply-rr-remove-white} can be applied in polynomial time as well.
Hence, if the~$2$-maximal independent set~$I$ computed in Line~\ref{line-compute-greedy-is} has size at least~$k'+1$, \textit{Branch} enters Line~\ref{line-inner-branching} in polynomial time.
Thus, in the following we assume that~$I$ has size at most~$k'$ and hence Line~\ref{line-else} is entered.
Now, observe that each application of \Cref{rr:clique} reduces the number of black vertices by at least~$ck-1$. 
Since~$G$ contains at most~$n$ black vertices, Line~\ref{line-many-priviate-black-neighbors} is thus executed at most~$n$ times.
Hence, Line~\ref{line-brute-force} is entered in polynomial time unless \textit{Branch} enters Line~\ref{line-inner-branching} first.
To conclude, \textit{Branch} reaches either Line~\ref{line-inner-branching} or Line~\ref{line-brute-force} in polynomial time.

Second, we distinguish the cases that \textit{Branch} enters Lines~\ref{line-inner-branching} or~\ref{line-brute-force}.
If \textit{Branch} enters Line~\ref{line-inner-branching} then $|I| \ge k' + 1$. 
We branch on each vertex $v \in P$ (Line~\ref{line-choose-v-in-p}).
Note that $|P| \le (c - 1) \cdot \binom{k + 1}{2} \in \Oh(ck^2)$, because each vertex in $P$ has at least two neighbors in $I'$ and~$G$ is~$c$-closed.
Otherwise, \textit{Branch} enters Line~\ref{line-brute-force}.
Here, \textit{Branch} finds a \bw{} of size at most $k$ or returns No without further branching.
Since there are $\Oh(c^3k^4+c^{r-1}k^{2r-2})$ vertices by \Cref{lemma-bound-sizes-black-and-white-after-rr-white}, trying all possibilities of choosing at most $k$ vertices requires $\Oh\big((ck)^{\Oh(rk)}\big)$ time.

Consider the search tree where each node corresponds to an invocation of \textit{Branch}. 
By the above argument, each node has~$\Oh(ck^2)$ children. 
Moreover, the depth of the search tree is at most $k$. 
Since the running time of an inner node is polynomial and the running time of a leaf node is~$\Oh^*\big((ck)^{\Oh(rk)}\big)$, the overall running time of \textit{SolveTDS} is~$\Oh^*\big((ck^2)^k\cdot (ck)^{\Oh{(rk)}}\big) = \Oh^*\big((ck)^{\Oh(rk)}\big)$.
\end{proof}

From \Cref{lemma-branch-correct,lemma-running-time-fpt-threshold-ds}  we obtain the following. 

\begin{theorem}
  \textsc{Threshold Dominating Set} can be solved in~$\Oh^*\big((ck)^{\Oh(rk)}\big)$ time.
\end{theorem}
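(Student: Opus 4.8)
The plan is to establish correctness and the claimed running time separately, since the theorem bundles both. For correctness, I would rely on the three pieces already proven in the excerpt: the transformation of \textsc{Threshold Dominating Set} into \BW{} by coloring every vertex black (\Cref{line-transform-into-wb-tds}), the exhaustive application of \Cref{rr:clique} (correct by \Cref{lemma:rrclique}), and the correctness of the recursive procedure \textit{Branch}. The heart of the matter is therefore the correctness of \textit{Branch}$(G,k',D)$, whose invariant I would state explicitly: it returns \texttt{Yes} precisely when there is a \bw{} $D' \supseteq D$ with $|D'| \le |D| + k'$. The recoloring in \Cref{line-update-demands} is sound because a black vertex already dominated $r$ times by $D$ imposes no further constraint, so it may safely be treated as white. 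The base case in \Cref{line-b-zero-k-one} is immediate. The two branches split on the size of the greedily computed $2$-maximal independent set $I$ of $G[B]$.

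\emph{Correctness of the two branches.} In the branch $|I| \ge k'+1$, I would argue as follows: any \bw{} extending $D$ must dominate each of the $k'+1$ pairwise-nonadjacent black vertices in $I'$, yet at most $k'$ further vertices may be added to $D$; hence some added vertex dominates at least two members of $I'$, i.e.\ lies in $P \setminus D$. Branching over all such vertices is therefore exhaustive. In the branch $|I| \le k'$, the algorithm first applies \Cref{rr-remove-white-vertices} exhaustively (correct, hence the instance stays equivalent and $c$-closed), bounding the vertex count, and then brute-forces over all subsets of at most $k'$ vertices; this is trivially complete. Combining the branches via structural induction on $k'$ establishes the invariant, and then correctness of \textit{SolveTDS} follows. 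This correctness argument is essentially \Cref{lemma-branch-correct}, so I would cite it directly.

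\emph{Running time.} The dominant obstacle, and the genuinely quantitative part, is bounding the running time, which I would derive from the search-tree structure. The preprocessing (building the \BW{} instance and applying \Cref{rr:clique}) costs $\Oh^*(3^{c/3})$ by the clique-enumeration bound of Fox et al. For the branching, I would bound the branching factor: since every vertex of $P$ has at least two neighbors in the independent set $I'$, the $c$-closure forces $|P| \le (c-1)\binom{k+1}{2} \in \Oh(ck^2)$, so each internal node has $\Oh(ck^2)$ children. The depth is at most $k$ because $k'$ strictly decreases along each branch. At each leaf, where $|I| \le k'$, \Cref{lemma-bound-sizes-black-and-white-after-rr-white} guarantees only $\Oh(c^3k^4 + c^{r-1}k^{2r-2})$ vertices remain after \Cref{rr-remove-white-vertices}, so the brute force over subsets of size at most $k$ costs $(ck)^{\Oh(rk)}$. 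Multiplying the tree size $(ck^2)^k$ by the leaf cost and absorbing polynomial factors gives
\begin{align*}
  \Oh^*\!\left(3^{c/3} + (ck^2)^k \cdot (ck)^{\Oh(rk)}\right) = \Oh^*\!\left(3^{c/3} + (ck)^{\Oh(rk)}\right),
\end{align*}
which is the claimed bound. The subtle point to get right here is that the leaf brute-force term $(ck)^{\Oh(rk)}$ already dominates the branching-tree overhead $(ck^2)^k$, so the two merge cleanly into a single $(ck)^{\Oh(rk)}$ term. With correctness from \Cref{lemma-branch-correct} and the running time from \Cref{lemma-running-time-fpt-threshold-ds} in hand, the theorem follows immediately.
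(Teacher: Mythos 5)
Your proposal is correct and takes essentially the same route as the paper, which also derives the theorem by combining the correctness of \textit{SolveTDS} (\Cref{lemma-branch-correct}) with the search-tree analysis of \Cref{lemma-running-time-fpt-threshold-ds}: branching factor $|P| \le (c-1)\binom{k+1}{2} \in \Oh(ck^2)$, depth at most $k$, and leaf cost $(ck)^{\Oh(rk)}$ via \Cref{lemma-bound-sizes-black-and-white-after-rr-white}. Your pigeonhole justification for why a solution must hit $P \setminus D$ when $|I'| = k'+1$ is a welcome elaboration of a step the paper states only tersely, but it is the same argument in substance.
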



Since \textsc{Dominating Set} is the special case~$r=1$ we obtain the following.

\begin{corollary} \label{cor:ds}
\textsc{Dominating Set} can be solved in~$\Oh^*\big((ck)^{\Oh(k)}\big)$~time. 
\end{corollary}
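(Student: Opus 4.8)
The plan is to specialize the algorithm \textit{SolveTDS} (Algorithm~\ref{algo-branching-threshold-ds}) to the case $r = 1$, since \textsc{Dominating Set} is precisely \textsc{Threshold Dominating Set} with $r = 1$, and to show that two small modifications remove both the $\Oh^*(3^{c/3})$ term and the expensive brute-force step at the leaves. First I would drop the exhaustive application of \Cref{rr:clique} in Line~\ref{line-apply-rr-clique}; this rule is the only source of the $3^{c/3}$ factor, as applying it exhaustively requires enumerating all maximal cliques. Skipping it is harmless for $r = 1$: \Cref{rr:clique} serves to bound the sizes of black cliques, which matters only when one must place $r$ vertices inside a single clique, and for $r = 1$ no such argument enters the branching.

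The second and key modification is to replace the brute-force leaf computation (Lines~\ref{line-I-small}--\ref{line-brute-force}) by simply returning Yes whenever $|I| \le k'$. I would justify this as follows. After the recoloring in Line~\ref{line-update-demands}, the black vertices are exactly those not yet dominated by the partial solution $D$ (for $r = 1$, a vertex becomes dominated once it lies in $N[D]$). The set $I$ is a maximal independent set in $G[B]$, so every remaining black vertex either lies in $I$ or has a neighbor in $I$; hence $D \cup I$ dominates every black vertex at least once. Since $|D \cup I| \le |D| + |I| \le |D| + k'$, this completes the partial solution within the budget, so returning Yes is correct. Correctness in the branching case $|I| \ge k' + 1$ is inherited unchanged from \Cref{lemma-branch-correct}, as is the correctness of the recoloring step.

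For the running time, I would observe that each node of the search tree branches over the vertices of $P \setminus D$, and $|P| \le (c - 1)\binom{k + 1}{2} \in \Oh(ck^2)$, since every vertex of $P$ has at least two neighbors in the $(k' + 1)$-element set $I'$; thus each node has $\Oh(ck^2)$ children. The depth is at most $k$, as each branch decreases $k'$ by one. With \Cref{rr:clique} removed and the leaf case resolved by an immediate Yes, every node performs only polynomial work (computing a $2$-maximal independent set and the set $P$). Hence the total running time is $\Oh^*((ck^2)^k) = \Oh^*((ck)^{\Oh(k)})$, with no remaining $3^{c/3}$ contribution.

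I expect the only genuinely delicate point to be the correctness of the early Yes-answer: one must be careful that $I$ is maximal with respect to the \emph{current} (recolored) black set, so that its closed neighborhood truly covers all still-undominated black vertices, and that the bound $|I| \le k'$ leaves enough budget to add all of $I$ to $D$. Everything else is a routine restriction of the analysis already established for \textsc{Threshold Dominating Set}.
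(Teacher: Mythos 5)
Your proposal is correct and matches the paper's own proof: the paper likewise specializes \textit{SolveTDS} to $r = 1$ by skipping \Cref{rr:clique} and returning Yes when $|I| \le k'$, since $D \cup I$ is then a dominating set within budget, yielding $\Oh^*((ck^2)^k) = \Oh^*((ck)^{\Oh(k)})$ time. Your added care about $I$ being maximal with respect to the recolored black set is exactly the right justification for the early Yes-answer.
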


Note that the running time in \Cref{cor:ds} amounts to $\Oh^*(k^{\Oh(k)})$ when $c = \Theta(k)$.
In the following, we show that there is only small room for improvement in the running time of \Cref{cor:ds} for $c = \Theta(k)$.
In particular, our result shows that an algorithm that runs in $\Oh^*(k^{o(k)})$ time for $c = \Theta(k)$ would violate the Exponential Time Hypothesis (ETH) of~\cite{IPZ01}.
Recall that the ETH asserts that there is no $2^{o(n + m)}$-time algorithm for \textsc{3-SAT}, where~$n$ and~$m$ denote the number of variables and clauses, respectively.

\begin{theorem}
  \textsc{Dominating Set} cannot be solved in $\Oh^*(\ell^{o(\ell)})$ time for $\ell = c + k$, unless the ETH fails.
\end{theorem}
\begin{proof}
  We reduce from the following auxiliary problem introduced by~\cite{LMS18}:
  \problemdef
  {$k \times k$ Hitting Set}
  {A set family $\mathcal{F}$ over universe $U = [k] \times [k]$ for $k \in \mathds{N}$.}
  {Is there a subset $X \subseteq [k] \times [k]$ of size exactly $k$ such that $X \cap S \ne \emptyset$ for every $S \in \mathcal{F}$ and $|X \cap (\{ i \} \times [k])| = 1$ for every $i \in [k]$?}
  Informally speaking, the problem asks for a hitting set that contains exactly one element from every row, that is, from~$\{ i \} \times [k]$ for each~$i\in [k]$.
  \cite{LMS18} showed that unless the ETH fails, \textsc{$k \times k$ Hitting Set} cannot be solved in $\Oh^*(k^{o(k)})$ time even if every set in $\mathcal{F}$ contains at most $k$ elements.

  For an instance $(\mathcal{F}, k)$ of \textsc{$k \times k$ Hitting Set} where $|S| \le k$ for every $S \in \mathcal{F}$, we can construct in polynomial time an equivalent instance~$(G,k)$ of \textsc{Dominating Set} on $(k + 1)$-closed graphs as follows:
  The vertex set $V(G)$ is $U \cup \mathcal{F} \cup R$, where~$R = \{ r_1, \dots, r_k \}$.
  We add edges such that $U$ forms a clique in $G$.
  Furthermore, we add an edge between $u \in U$ and $S \in \mathcal{F}$ if and only if $u \in S$.
  Finally, we add an edge between~$(i, j) \in U$ and $r_i$ for every~$i, j \in [k]$.
  To see why $G$ is~$(k + 1)$-closed, note that every vertex in $\mathcal{F}$ and $R$ has at most $k$ neighbors and that~$G[U]$ is a clique.
  
  We show that the instance $(\mathcal{F}, k)$ has a solution $X$ if and only if $G$ has a dominating set~$D$ of size $k$.

  For the forward direction, observe that every solution $X$ of $(\mathcal{F}, k)$ dominates every vertex in $G$ by construction.
  Conversely, suppose that the constructed \textsc{Dominating Set} instance~$(G, k)$ is a Yes-instance.
  Since every vertex in $\mathcal{F} \cup R$ is simplicial, we can assume by \Cref{lemma-simplical-vertex} that $G$ has a dominating set $D \subseteq U$.
  By construction, $D$~has a nonempty intersection with every set in $\mathcal{F}$.
  Moreover, since $D$ dominates $R$, it follows that $D$ contains exactly one vertex of each row~$\{ i \} \times [k]$.

  Suppose that there is an algorithm that solves dominating set in $\Oh^*(\ell^{o(\ell)})$.
  Then, \textsc{$k \times k$ Hitting Set} can be solved in $\Oh^*(k^{o(k)})$ time:
  First, we construct an equivalent \textsc{Dominating Set} instance with $\ell = c + k = 2k + 1$ as described above.
  Second, we then make use of the $\Oh^*(\ell^{o(\ell)})$-time algorithm for \textsc{Dominating Set}.
  This would refute ETH.
\end{proof}



\subsection{Smaller Kernel in Bipartite Graphs}

In this subsection, we show that \textsc{Dominating Set} admits a kernel with~$\Oh(c^3 k^4)$~vertices when the input graph is bipartite and~$c$-closed.
As in \Cref{sec-kernel-threshold-ds}, we use \BW{} to show this kernelization. 
Since we only focus on the case~$r=1$, we will refer to \BW{} as \textsc{BW-Dominating Set}. 

\begin{rrule}
  \label{rr:ds:largedeg}
  Suppose that there is a vertex $v \in V(G)$ with at least $(c-1)k+1$ black neighbors.
  Then,
  \begin{enumerate}
    \item
      color all the vertices of $N_G(v)$ white,
    \item
      remove $v$, and
    \item
      decrease $k$ by 1.
  \end{enumerate}
\end{rrule}

\begin{lemma}
  \Cref{rr:ds:largedeg} is correct.
\end{lemma}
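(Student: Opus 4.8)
The plan is to prove the correctness of Reduction Rule~\ref{rr:ds:largedeg} by establishing both directions of equivalence and then verifying that $c$-closure is preserved. Let $G'$ denote the graph obtained after applying the rule: we color $N_G(v)$ white, remove $v$, and decrease $k$ by one. I must show that $(G, k)$ is a Yes-instance if and only if $(G', k-1)$ is a Yes-instance.

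For the forward direction, suppose $G$ has a \bw{} $D$ with $|D| \le k$. The key observation is that any solution must contain a vertex dominating $v$ or the black neighbors of $v$; more precisely, since $v$ has at least $ck$ black neighbors and, by \Cref{obs:cliqueintersection}-style reasoning, each vertex outside $N[v]$ can dominate only few of these neighbors. The cleaner approach is to argue that we may assume $v \in D$: if some vertex $w \in N_G(v)$ lies in $D$, the black neighbors of $v$ that are dominated only through $w$ are limited by the $c$-closure, and a counting argument in the spirit of the proof of \Cref{lemma:rrclique} forces $D$ to contain $v$ (or a vertex dominating essentially all of $N_G(v)\cap B$). Having $v \in D$, the set $D \setminus \{v\}$ has size at most $k-1$ and dominates all black vertices of $G'$, because the black neighbors of $v$ have been recolored white and no longer require domination. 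Hence $D \setminus \{v\}$ is a \bw{} of size at most $k-1$ in $G'$.

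For the reverse direction, suppose $G'$ has a \bw{} $D'$ with $|D'| \le k-1$. Then $D' \cup \{v\}$ has size at most $k$ and dominates every black vertex: the black vertices other than those in $N_G(v)$ are dominated exactly as in $G'$, while the vertices of $N_G(v)$ (black in $G$) are dominated by $v$ itself. Thus $D' \cup \{v\}$ is a \bw{} of size at most $k$ in $G$. Finally, since recoloring and vertex deletion do not add edges, the graph $G'$ remains $c$-closed by \Cref{obs:removev}.

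The main obstacle I expect is the forward direction, specifically justifying that we may assume $v \in D$. Unlike the clique rule, here $v$ is a single high-degree vertex rather than a clique, so I cannot directly invoke \Cref{obs:cliqueintersection} about maximal cliques. Instead the argument should run: if $v \notin D$, then each black neighbor of $v$ must be dominated by some vertex in $D \setminus \{v\}$; but each such dominator $w$ is either $v$'s neighbor or shares at most $c-1$ common neighbors with $v$ (when $wv \notin E(G)$), and since $v$ has $\ge ck$ black neighbors while $|D| \le k$, a pigeonhole count shows replacing whichever dominator covers the most black neighbors of $v$ by $v$ itself yields a valid solution of the same size containing $v$. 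I would carry out this exchange argument carefully, as it is the crux ensuring that the decrement of $k$ and the removal of $v$ are sound.
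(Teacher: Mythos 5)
Your reverse direction and the $c$-closure preservation match the paper, but your forward direction has a genuine gap: the exchange argument you sketch does not work, and in fact cannot be made to work, because you never invoke bipartiteness---and \Cref{rr:ds:largedeg} belongs to the kernelization for \emph{bipartite} graphs. Two concrete problems. First, $c$-closure bounds $|N(w) \cap N(v)|$ only for \emph{nonadjacent} pairs, so a dominator $w \in N(v) \cap D$ may cover arbitrarily many black neighbors of $v$ (for instance, if $N(v)$ contains a large clique, a single neighbor dominates all of it), and your pigeonhole count gives no bound in that case. Second, the swap ``replace the heaviest dominator $w$ by $v$'' can orphan black vertices outside $N[v]$ whose only dominator was $w$. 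Indeed, without bipartiteness the rule is simply false: take a black clique $C$ of size $ck$ with $k = 1$, let $v$ be adjacent exactly to the vertices of $C$, and let one vertex $w \in C$ additionally be adjacent to a black vertex $b \notin N[v]$. Every nonadjacent pair in this graph has at most one common neighbor, so it is $c$-closed for $c \ge 2$; the set $\{w\}$ dominates everything, so $(G, 1)$ is a Yes-instance, yet applying the rule to $v$ leaves $b$ black with budget $k' = 0$, a No-instance. Hence no exchange argument showing merely that \emph{some} solution contains $v$ can succeed at this level of generality.

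The paper instead proves the stronger claim that \emph{every} solution $D$ of size at most $k$ must contain $v$, and bipartiteness is exactly what makes the counting go through: with bipartition $P, Q$ and $v \in P$, every vertex of $P \setminus \{v\}$ is nonadjacent to $v$ and therefore dominates at most $c - 1$ neighbors of $v$ by $c$-closure, while every vertex of $Q$ dominates at most one vertex of $N(v) \subseteq Q$, namely itself, since $Q$ is an independent set. Thus if $v \notin D$, the at most $k$ vertices of $D$ dominate fewer than $ck$ of the at least $ck$ black neighbors of $v$, a contradiction; then $D \setminus \{v\}$ solves $(G', k - 1)$ exactly as you argue, and your reverse direction completes the equivalence. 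To repair your write-up, replace the exchange argument by this bipartite counting step; the remainder of your proof can stand.
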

\begin{proof}
  Let $v$ be a vertex with~$|N_G(v)| \ge (c-1)k+1$ and let~$(G', k' = k - 1)$ be the instance obtained as a result of \Cref{rr:ds:largedeg}.
  Suppose that~$(G, k)$ is a Yes-instance with a bw-dominating set~$D$ of~$k$ vertices.
  We claim that~$v \in D$.
  Let~$P, Q$ be a bipartition of~$G$ and assume without loss of generality that~$v \in P$.
  Since~$G$ is~$c$-closed, each vertex in~$P \setminus \{ v \}$ can dominate at most~$c - 1$ neighbors of~$v$.
  Moreover, each vertex in~$Q$ can dominate no neighbor of~$v$ except for itself.
  It follows that~$v \in D$, since otherwise some neighbor of~$v$ is nonadjacent to all vertices of~$D$.
  Consequently,~$D \setminus \{ v \}$ is a solution of~$(G', k')$.

  Conversely, suppose that~$D'$ is a bw-dominating set of size at most~$k'$ in~$G'$.
  Then,~$(G, k)$ is a Yes-instance with a solution~$D \cup \{ v \}$. 
  
  Note that the~$c$-closure is maintained by \Cref{obs:removev}. 
\end{proof}

After \Cref{rr:ds:largedeg} is exhaustively applied, each vertex dominates at most $ck - 1$ black vertices.
Thus, we obtain the following reduction rule.

\begin{rrule}
  \label{rr:ds:manyblack}
  If there are at least $ck^2$ black vertices, return No.
\end{rrule}

Finally, we deal with white leaf vertices.
The correctness trivially follows.

\begin{rrule}
  \label{rr:ds:leaves}
  If there is a white vertex $v$ that is adjacent to only one black vertex, then remove $v$.
\end{rrule}

\begin{theorem}
  \textsc{Dominating Set} in bipartite graphs has a kernel with $\Oh(c^3 k^4)$ vertices.
\end{theorem}
\begin{proof}
  Let $(G, k)$ be an instance of \textsc{BW-Dominating Set} in which \Cref{rr:ds:largedeg,rr:ds:manyblack,rr:ds:leaves} have been exhaustively applied.
  Note that this can be done in polynomial time.

  The graph~$G$ contains at most $ck^2$ black vertices by \Cref{rr:ds:manyblack}.
  Moreover, each white vertex of~$G$ is adjacent to at least two black vertices by \Cref{rr:ds:leaves}.
  Since the graph is bipartite and $c$-closed, any pair of vertices have at most $c-1$ common neighbors.
  Consequently,~$G$ contains at most $ck^2 + c \binom{ck^2}{2} \in \Oh(c^3 k^4)$ vertices.
\end{proof}

\section{Induced Matching}
\label{sec:im}
In this section, we develop kernelizations for \textsc{Induced Matching} in $c$-closed graphs.

\problemdef
{Induced Matching}
{A graph $G$ and $k \in \mathds{N}$.}
{Is there a set $M$ of at least $k$ edges such that endpoints of distinct edges in~$M$ are pairwise nonadjacent?}

 \textsc{Induced Matching} is W[1]-hard when parameterized by $k$, even in bipartite graphs~\cite{MS09}.
In terms of kernelizations, \textsc{Induced Matching} admits a kernel with~$\Oh(\Delta^2 k)$ vertices \cite{MS09} and $\Oh(k^{d})$ vertices \cite{EKKW10,KPSX11}.
Recall that~$\Delta$ is the maximum degree and that~$d$ is the degeneracy.
The latter kernelization result is essentially tight: 
Unless coNP $\subseteq$ NP/poly, \textsc{Induced Matching} has no kernel of size $\Oh(k^{d - 3 - \varepsilon})$ for any $\varepsilon > 0$~\cite{CGH17}.
Despite the lower bound in degenerate graphs, we discover in this section that \textsc{Induced Matching} in $c$-closed graphs has a polynomial kernel when parameterized by $k + c$.

\subsection{Ramsey-like Bounds for Induced Matchings}

Dabrowski et al.~\cite{DDL13} derived fixed-parameter tractability for \textsc{Induced Matching} in $(K_q, K_{b, b})$-free graphs.
At the heart of their algorithm lies a Ramsey-type result for induced matchings:
For~$q, b \in \mathds{N}$, there exists an integer $Q_{q, b}$ such that any bipartite graph with a matching of size at least $Q_{q, b}$ contains a biclique $K_{q, q}$ or an induced matching of size~$b$.
In this subsection, we present analogous results for $c$-closed graphs where the number~$Q_{q, b}$ is polynomial in $q$ and $b$.
We begin with two preliminary lemmas.

\begin{lemma}
  \label{lemma:ramseyd}
  Any graph $G$ with a matching $M$ of size at least $2 \Delta b$ has an induced matching of size $b$.
\end{lemma}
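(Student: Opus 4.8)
The plan is to construct the induced matching greedily out of $M$, one edge at a time. I would maintain a set $M' \subseteq M$ of still-eligible edges, initialized to $M$, and repeatedly pick an arbitrary edge $e = uv \in M'$, add it to the induced matching under construction, and then delete from $M'$ the edge $e$ together with every edge of $M'$ that \emph{conflicts} with $e$. Here an edge $e' = u'v'$ conflicts with $e$ precisely when $u' \in N(u) \cup N(v)$ or $v' \in N(u) \cup N(v)$, since the induced-matching condition requires the endpoints of distinct edges to be pairwise nonadjacent. Note that because $M$ is a matching, no two of its edges share a vertex, so a conflict can arise only through adjacency and never through a shared endpoint.

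The key quantitative step is to bound, for a single chosen edge $e = uv$, how many edges of $M'$ get removed in one round. Every removed edge has an endpoint in the set $S := N(u) \cup N(v)$, which satisfies $|S| \le \deg(u) + \deg(v) \le 2\Delta$. Since $M'$ is a subset of the matching $M$, each vertex of $S$ meets at most one edge of $M'$, so at most $|S| \le 2\Delta$ edges of $M'$ touch $S$; in particular the chosen edge $e$ is itself among them, as its endpoints $u, v$ both lie in $S$. Hence a single round deletes at most $2\Delta$ edges of $M'$ in total.

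With this bound the remaining bookkeeping is straightforward: before the $i$-th selection at least $2\Delta b - (i-1)\cdot 2\Delta = 2\Delta(b - i + 1)$ edges remain in $M'$, which is positive for every $i \in [b]$, so the procedure performs $b$ successful selections and outputs an induced matching of size $b$. The main point to get right — really the only subtlety — is that the per-round deletion count is exactly $2\Delta$ rather than something like $2\Delta + 2$: this hinges on using the \emph{open} neighborhoods $N(u), N(v)$ (so that $u, v \in N(u) \cup N(v)$ and $e$ is already accounted for among the $2\Delta$ deletions) together with the matching property, which ensures each vertex of $S$ meets at most one edge of $M'$. A looser count would only yield the weaker hypothesis $|M| \ge (2\Delta + 2)b$ and thus fail to match the stated bound.
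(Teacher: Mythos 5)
Your proof is correct and follows essentially the same approach as the paper's: the paper runs the identical argument as an induction on $b$, removing $N[\{u,v\}]$ after choosing a matched edge $uv$ and using $|N[\{u,v\}]| \le \deg(u)+\deg(v) \le 2\Delta$ to conclude that at most $2\Delta$ matching edges (including $uv$ itself) are destroyed per step. Your greedy formulation with the count over $S = N(u) \cup N(v)$ is the same per-step bookkeeping, phrased iteratively rather than inductively.
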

\begin{proof}
  We prove the statement by induction on $b$.
  The lemma clearly holds for the base case~$b = 0$.
  For~$b > 0$, let $uv$ be a matched edge in $M$ and let $G' := G - N[\{ u, v \}]$.
  Since $|N[\{ u, v \}]| \le 2 \Delta_G$, there is a matching of size at least~$2\Delta_G b - 2\Delta_G \ge 2\Delta_{G'} (b - 1)$ in $G'$.
  Consequently, there is an induced matching $M'$ of size $b - 1$ in $G'$ by induction hypothesis.
  Thus, $G$~has an induced matching~$M' \cup \{ uv \}$ of size~$b$.
\end{proof}
\begin{lemma}
  \label{lemma:ramseyc}
  Suppose that $G$ is a $c$-closed bipartite graph.
  If there are at least $2b$ vertices of degree at least $cb$, then $G$ contains an induced matching of size at least $b$.
\end{lemma}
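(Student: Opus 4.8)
The plan is to extract from the high-degree vertices a set of $b$ vertices lying on one side of the bipartition, and then to match each of them to a carefully chosen \emph{private} neighbor, so that the resulting edges automatically form an induced matching. First I would apply pigeonhole: among the at least $2b$ vertices of degree at least $cb$, at least $b$ lie on a common side of the bipartition. Fix such a side, call these vertices $v_1, \dots, v_b$, and note that they are pairwise nonadjacent since they lie on the same side.

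The key step is to show that each $v_i$ has a neighbor $u_i$ that is \emph{private} to $v_i$ relative to $\{ v_1, \dots, v_b \}$, meaning $u_i$ is adjacent to no $v_j$ with $j \ne i$. Here I would invoke $c$-closure: for each $j \ne i$ the vertices $v_i$ and $v_j$ are nonadjacent, so $|N(v_i) \cap N(v_j)| \le c - 1$. Hence the number of neighbors of $v_i$ that are adjacent to some $v_j$ with $j \ne i$ is at most $(b - 1)(c - 1)$. Since $\deg(v_i) \ge cb > (b - 1)(c - 1)$, at least $b + c - 1 \ge 1$ neighbors of $v_i$ are private, so a private neighbor $u_i$ exists.

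Finally I would verify that $M := \{ v_1 u_1, \dots, v_b u_b \}$ is an induced matching of size $b$. The endpoints $v_i$ are pairwise nonadjacent and the endpoints $u_i$ are pairwise nonadjacent, as each group lies on a single side of the bipartition; moreover the $u_i$ are distinct, because $u_i$ is adjacent to $v_i$ whereas each $u_j$ with $j \ne i$ is not. For the cross pairs, $u_j$ is nonadjacent to $v_i$ whenever $i \ne j$ precisely by privacy of $u_j$, so no two edges of $M$ are joined by an additional edge, and $M$ is induced.

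The only delicate point is the privacy argument itself; everything else is bookkeeping. One might instead try a greedy matching that deletes whole closed neighborhoods at each step, but the potentially large degrees on the opposite side make it hard to keep enough high-degree vertices available, and ensuring that a freshly chosen $v_i$ avoids all previously selected opposite endpoints is awkward to control. Selecting private neighbors avoids this entirely: privacy simultaneously yields the induced-matching property and the distinctness of the $u_i$, so no ordering or iterative deletion is needed.
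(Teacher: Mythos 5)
Your proof is correct and follows essentially the same route as the paper's: both restrict by pigeonhole to $b$ high-degree vertices on one side, use $c$-closure ($|N(v_i)\cap N(v_j)| \le c-1$ for nonadjacent same-side vertices) to find a private neighbor for each, and conclude that the resulting edges form an induced matching. Your version merely spells out the counting bound $cb > (b-1)(c-1)$ and the final verification, which the paper leaves implicit.
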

\begin{proof}
  Let $A, B$ be a bipartition of $G$.
  Without loss of generality, assume that $A$ contains a set $A'$ of exactly $b$ vertices of degree at least $cb$.
  Since $G$ is $c$-closed, $|N(v) \cap N(v')| < c$ for all $v, v' \in A'$.
  Consequently, each $v \in A'$ has a neighbor~$u$ such that~$u \notin N(v')$ for all~$v' \in A' \setminus \{ v \}$.
  Thus,~$G$ contains an induced matching of size~$b$.
\end{proof}

The following lemma shows that $c$-closed bipartite graphs with a large matching also have a large induced matching.

\begin{lemma}
  \label{lemma:ramseyb}
  Let~$\ramseyb(b) := 2cb^2 + 2b \in \Oh(cb^2)$.
  Let~$G$ be a~$c$-closed bipartite graph.
  If~$G$ has a matching~$M$ of size at least~$\ramseyb(b)$, then~$G$ contains an induced matching of size at~least~$b$.
\end{lemma}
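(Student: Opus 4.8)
The plan is to combine the two preliminary lemmas via a dichotomy on the number of high-degree vertices, where ``high degree'' means degree at least $cb$. First I would invoke \Cref{lemma:ramseyc}: if $G$ has at least $2b$ vertices of degree at least $cb$, then $G$ already contains an induced matching of size $b$ and we are done. So the interesting case is when fewer than $2b$ vertices have degree at least $cb$.

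In that case I would prune the matching $M$ so that only low-degree vertices remain. Since $M$ is a matching, each vertex lies in at most one matched edge, so the fewer-than-$2b$ high-degree vertices are covered by fewer than $2b$ edges of $M$. Deleting these edges leaves a sub-matching $M'$ of size at least $\ramseyb(b) - 2b = 2cb^2$ in which every endpoint has degree less than $cb$ in $G$.

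Next I would pass to the induced subgraph $G' := G[V(M')]$ on the endpoints of $M'$. Because $G'$ is induced, every vertex has degree in $G'$ no larger than its degree in $G$, so $\Delta_{G'} < cb$; moreover $M'$ is still a matching of size at least $2cb^2$ in $G'$. Since $2 \Delta_{G'} b < 2cb^2 \le |M'|$, \Cref{lemma:ramseyd} applied to $G'$ yields an induced matching of size $b$ in $G'$. Finally, as $G'$ is an induced subgraph of $G$, nonadjacency in $G'$ coincides with nonadjacency in $G$, so this induced matching is an induced matching of $G$ as well.

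The argument is essentially bookkeeping; the only point requiring care is verifying that an induced matching of the induced subgraph $G'$ remains induced in $G$, which holds precisely because passing to an induced subgraph neither creates nor destroys adjacencies among the retained vertices. The constant $2cb^2 + 2b$ in $\ramseyb(b)$ is chosen exactly so that, after discarding the at most $2b$ edges incident to high-degree vertices, the remaining matching is large enough to feed into \Cref{lemma:ramseyd} at maximum degree below $cb$.
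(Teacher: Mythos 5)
Your proof is correct and follows essentially the same route as the paper's: a dichotomy on the set of vertices of degree at least $cb$, with \Cref{lemma:ramseyc} handling the case of at least $2b$ such vertices and \Cref{lemma:ramseyd} applied to a low-degree subgraph retaining a matching of size at least $2cb^2$ otherwise. The only cosmetic difference is that the paper deletes the high-degree set $S$ and works in $G - S$, whereas you restrict to the induced subgraph $G[V(M')]$ on the endpoints of the pruned matching; both give maximum degree below $cb$, so the arguments coincide.
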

\begin{proof}
  If there are at least~$2b$ vertices of degree at least~$cb$ in~$G$, then \Cref{lemma:ramseyc} yields an induced matching of size~$b$.
  Thus, we can assume that $|S| < 2b$ for the set $S$ of vertices of degree at least $cb$.
  Observe that $G - S$ has a matching of size $2cb^2$ and that $\Delta_{G - S} \le cb$.
  Thus, $G - S$ has an induced matching of size $b$ by \Cref{lemma:ramseyd}.
\end{proof}

We extend \Cref{lemma:ramseyb} to non-bipartite $c$-closed graphs in the subsequent two lemmas.
Recall that each~$c$-closed graph~$G$ with at least~$\ramsey(q, b) \in \Oh(cb^2 + qb)$ vertices contains a clique of $q$ vertices or an independent set of~$b$ vertices by \Cref{lemma:ramsey}.
We may now use the Ramsey bound of \Cref{lemma:ramsey} as follows. When~$G$ contains a large matching~$M$ and no large clique, then the set of endpoints of~$M$ must contain two large independent sets which induce a bipartite graph with a large matching. To this bipartite graph, we may then apply \Cref{lemma:ramseyb} and conclude that~$G$ contains a large induced matching. 

\begin{lemma}
  \label{lemma:ramseyb1}
  Let $\ramseyb'(q, b) := \ramsey(q, \ramseyb(b)) \in \Oh(c q b^2 + c^3 b^4)$.
  Any $c$-closed graph~$G$ with an independent set $I$ of size at least $\ramseyb'(q, b)$ and a matching $M$ saturating $I$ contains a clique of size $q$ or an induced matching of size $b$.
\end{lemma}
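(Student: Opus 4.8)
The plan is to combine the two Ramsey-type results in two stages: first use the bipartite induced-matching bound (\Cref{lemma:ramseyb}) to extract a large induced matching between $I$ and its partners, and then use the clique/independent-set bound (\Cref{lemma:ramsey}) to clean up the adjacencies among the partners. Write $B := \ramsey(a, b)$, so that $|I| \ge \ramseyb'(a, b) = \ramseyb(B)$. Since $I$ is independent, no edge of $M$ has both endpoints in $I$; hence every edge of $M$ incident to $I$ joins a vertex of $I$ to a distinct vertex outside $I$. Let $J$ be the set of these partners, so that $J \cap I = \emptyset$, $|J| = |I| \ge \ramseyb(B)$, and $M$ restricts to a perfect matching between $I$ and $J$.

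Consider the bipartite graph $H := G[I, J]$, which contains a matching of size $|I| \ge \ramseyb(B)$. Suppose for the moment that we can find an induced matching $\{ x_i p_i \mid i \in [B] \}$ of size $B$ in $H$, where $x_i \in I$ and $p_i \in J$. The $B = \ramsey(a, b)$ endpoints $p_1, \dots, p_B$ induce a $c$-closed subgraph of $G$ (\Cref{obs:removev}), so by \Cref{lemma:ramsey} either they contain a clique of size $a$, in which case we are done, or an independent set of size $b$, say $p_{i_1}, \dots, p_{i_b}$. In the latter case the edges $x_{i_1} p_{i_1}, \dots, x_{i_b} p_{i_b}$ form an induced matching of size $b$ in $G$: the $x_{i_j}$ are pairwise nonadjacent as a subset of $I$, the $p_{i_j}$ are pairwise nonadjacent by choice, and $x_{i_j}$ is nonadjacent to $p_{i_\ell}$ for $j \ne \ell$ because $\{ x_i p_i \}$ is an induced matching in $H$. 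Thus it suffices to produce an induced matching of size $B$ in $H$.

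The main obstacle is that $H$ need not be $c$-closed, so \Cref{lemma:ramseyb} does not apply verbatim: two partners $p, q \in J$ that are adjacent in $G$---and hence unconstrained by the $c$-closure---may share at least $c$ common neighbors in $I$, and in $H$ they become a nonadjacent pair with at least $c$ common neighbors. Crucially, the $c$-closure of $G$ still controls the $I$-side (any two vertices of $I$ share fewer than $c$ common neighbors) and any two $G$-nonadjacent partners. I would therefore redo the proof of \Cref{lemma:ramseyb} using closure only on the $I$-side. Let $S_J \subseteq J$ be the partners with at least $cB$ neighbors in $I$. If $|S_J| \ge B$, apply \Cref{lemma:ramsey} directly to $S_J$: a clique of size $a$ finishes, while an independent set of size $b$ inside $S_J$ consists of pairwise $G$-nonadjacent partners, which \emph{do} pairwise share fewer than $c$ common neighbors, so the private-neighbor argument of \Cref{lemma:ramseyc} yields an induced matching of size $b$ in $G$ directly. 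If instead $|S_J| < B$, delete $S_J$ (destroying fewer than $B$ matching edges); now every partner has fewer than $cB$ neighbors in $I$, and the argument behind \Cref{lemma:ramseyb,lemma:ramseyc,lemma:ramseyd} goes through, since every remaining high-degree vertex lies in $I$, where closure holds: either at least $B$ vertices of $I$ have degree at least $cB$ and the private-neighbor argument gives an induced matching of size $B$ in $H$, or we delete the fewer than $B$ such vertices and invoke the bounded-degree \Cref{lemma:ramseyd} on the remaining matching of size at least $\ramseyb(B) - 2B = 2cB^2$ with maximum degree below $cB$. Either branch delivers the induced matching of size $B$ in $H$ required by the second stage. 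The bound $\ramseyb'(a, b) = \ramseyb(\ramsey(a, b)) \in \Oh(c a^2 b^2 + c^3 b^4)$ then follows by substituting $\ramsey(a, b) \in \Oh(cb^2 + ab)$ into $\ramseyb(x) = 2cx^2 + 2x$.
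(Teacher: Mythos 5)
Your proof is correct and follows the same two-stage skeleton as the paper's proof: extract an induced matching of size $B := \ramsey(a,b)$ in the bipartite graph $H := G[I, V(M) \setminus I]$, then apply \Cref{lemma:ramsey} to its $B$ partner-side endpoints to find either a clique of size $a$ or $b$ pairwise nonadjacent partners, which together with their mates in $I$ form an induced matching in $G$. The genuine difference is in the first stage: the paper invokes \Cref{lemma:ramseyb} on $H$ as a black box, implicitly treating $H$ as a $c$-closed bipartite graph, whereas you correctly observe that this is not automatic---$G[I, J]$ discards the edges inside $J$, so two $G$-adjacent partners may share $c$ or more common neighbors in $I$ and become a nonadjacent pair witnessing a large closure in $H$ (two adjacent vertices joined completely to an independent set give a concrete example). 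Your repair is sound: splitting off the set $S_J$ of partners with at least $cB$ neighbors in $I$; if $|S_J| \ge B$, applying \Cref{lemma:ramsey} to the induced (hence still $c$-closed) subgraph $G[S_J]$, where an independent set consists of $G$-nonadjacent partners so that the private-neighbor argument of \Cref{lemma:ramseyc} legitimately applies and already yields the lemma's conclusion; and otherwise deleting $S_J$ together with the fewer than $B$ high-degree $I$-side vertices (where closure holds since $I$ is independent in $G$) and finishing with the closure-free \Cref{lemma:ramseyd}. The accounting also works out within the stated bound, since $\ramseyb(B) - 2B = 2cB^2 \ge 2\Delta B$ for maximum remaining degree $\Delta \le cB$. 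In short, you take the paper's route but close a gap that its one-line appeal to \Cref{lemma:ramseyb} glosses over; the paper's argument becomes fully rigorous only after exactly the kind of case analysis you supply, or after restating \Cref{lemma:ramseyb} so that the closure hypothesis is imposed only on pairs that are nonadjacent in the ambient graph, which is all that its proof uses on the $I$-side.
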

\begin{proof}
  Suppose that $G$ contains no clique of size $q$.
  We show that there is an induced matching of size $b$ in $G$.
  Let~$S := V(M) \setminus I$ be the set of vertices matched to~$I$ in~$M$.
  Since~$|S| \ge \ramsey(q, \ramseyb(b))$, it follows from \Cref{lemma:ramsey} that there is an independent set $S' \subseteq S$ of size at least $\ramseyb(b)$ in $G'$.
  Let $I' \subseteq I$ be the set of vertices matched to $S'$ in $M$ and observe that~$G[S' \cup I']$ is bipartite.
  Then, there is an induced matching of size at least~$b$ in $G[S' \cup I']$ by \Cref{lemma:ramseyb}.
  Thus, $G$ contains an induced matching of size~$b$.
\end{proof}
\begin{lemma}
  \label{lemma:ramseyb2}
  Let $\ramseyb''(q, b) := \ramsey(q, \ramseyb'(b)) \in \Oh(c^3 q^2 b^4 + c^7 b^8)$.
  Any $c$-closed graph $G$ with a matching $M$ of size at least~$\ramseyb''(q, b)$ contains a clique of size $q$ or an induced matching of size $b$.
\end{lemma}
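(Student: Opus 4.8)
The plan is to prove the stated dichotomy directly: assuming $G$ has no clique of size $a$, I will extract an induced matching of size $b$ from the given matching $M$ of size at least $\ramseyb''(a, b) = \ramseyb'(a, \ramsey(a, b)) = \ramseyb(\ramsey(a, \ramsey(a, b)))$. The strategy is to spend the quadratic blow-up of \Cref{lemma:ramseyb} \emph{once} to turn $M$ into a matching that is induced within a bipartite graph, and then to clean up its two sides separately with two cheap applications of \Cref{lemma:ramsey}. Distributing the Ramsey cost across the two sides in this way is precisely what makes the bound $\ramseyb'(a, \ramsey(a, b))$ suffice; a single extraction of an independent set from $M$ followed by \Cref{lemma:ramseyb1} would instead demand a matching of the larger size $\ramsey(a, \ramseyb'(a, b))$, since \Cref{lemma:ramseyb1} already pays the factor $\ramseyb$ when converting an independent set into an induced matching.

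Concretely, I first pick one endpoint of each edge of $M$ to form a set $A$ and collect the partners into $B$, so that $M$ is a matching of size at least $\ramseyb(\ramsey(a, \ramsey(a, b)))$ in the bipartite graph $G[A, B]$. Applying \Cref{lemma:ramseyb} to $G[A, B]$ (exactly as \Cref{lemma:ramseyb} is applied to $G[I, V(M) \setminus I]$ in the proof of \Cref{lemma:ramseyb1}) yields a matching $M_1$ of size at least $\ramsey(a, \ramsey(a, b))$ that is induced in $G[A, B]$; write $A_1 \subseteq A$ and $B_1 \subseteq B$ for its two sides. Since $G$ has no clique of size $a$, \Cref{lemma:ramsey} applied to $G[A_1]$, which has at least $\ramsey(a, \ramsey(a, b))$ vertices, produces an independent set $A_2 \subseteq A_1$ of size at least $\ramsey(a, b)$. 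Letting $B_2 \subseteq B_1$ be the $M_1$-partners of $A_2$, a second application of \Cref{lemma:ramsey} to $G[B_2]$ gives an independent set $B_3 \subseteq B_2$ of size at least $b$; finally I let $A_3 \subseteq A_2$ be the partners of $B_3$.

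It then remains to verify that the $b$ edges of $M_1$ joining $A_3$ and $B_3$ form an induced matching of $G$. Endpoints inside $A_3 \subseteq A_2$ are pairwise nonadjacent because $A_2$ is independent; endpoints inside $B_3$ are pairwise nonadjacent because $B_3$ is independent; and a cross pair consisting of an $A$-endpoint of one edge and a $B$-endpoint of another is nonadjacent because $M_1$ is induced in $G[A, B]$. The arithmetic needs no computation beyond the identity $\ramseyb''(a, b) = \ramseyb(\ramsey(a, \ramsey(a, b)))$, so each invocation of \Cref{lemma:ramseyb} and of \Cref{lemma:ramsey} receives an input of exactly the size it requires.

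The main obstacle I anticipate is organizational rather than computational: one must keep the nested chains $A_1 \supseteq A_2 \supseteq A_3$ and $B_1 \supseteq B_2 \supseteq B_3$ synchronized through $M_1$ so that each restriction retains its matched partners, and one must check that the two \emph{separate} independence extractions (one per side) interact correctly with the bipartite-induced property of $M_1$ to yield genuine induced-matching status in $G$. The clique-free hypothesis enters only through the two applications of \Cref{lemma:ramsey}, so if a clique of size $a$ surfaces at either step we are immediately done, which is why the dichotomy holds.
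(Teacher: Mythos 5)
Your proof is correct and is essentially the paper's own argument in unfolded form: the paper constructs the auxiliary graph $G'' = (I \cup H, E(G[I,H]) \cup E(G[H]))$ and invokes \Cref{lemma:ramseyb1} as a black box (which internally applies \Cref{lemma:ramseyb} to the bipartite graph on the endpoint classes and then \Cref{lemma:ramsey} on one side) followed by one more application of \Cref{lemma:ramsey} on the other side, which is exactly your sequence---one application of \Cref{lemma:ramseyb} to $G[A,B]$ and two per-side applications of \Cref{lemma:ramsey}---with the identical budget $\ramseyb''(a, b) = \ramseyb(\ramsey(a, \ramsey(a, b)))$. Your inlined write-up, including the explicit three-case check that the final $b$ edges are induced in $G$ itself (both sides independent in $G$, cross-pairs handled by inducedness in $G[A,B]$), is if anything more transparent than the paper's detour through $G''$, and it applies \Cref{lemma:ramseyb} to a bipartite graph of the form $G[X, Y]$ in exactly the same (implicit) manner as the paper does.
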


\begin{proof}
  Suppose that $G$ contains no clique of size $q$.
  We will show that there is an induced matching of size $b$ in $G$.
  Let~$I$ and~$S$ be disjoint vertex sets such that~$I$ and~$S$ consist of distinct endpoints of each edge in~$M$.
  Since~$|I| \ge \ramsey(q, \ramseyb'(b))$, it follows from \Cref{lemma:ramsey} that there is an independent set~$I' \subseteq I$ of size~$\ramseyb'(b)$.
  Let $S' \subseteq S$ be the set of vertices matched to $I'$ and let $G' := G[S' \cup I']$.
  Since $I$ is an independent set of size at least $\ramseyb'(b)$, it follows from \Cref{lemma:ramseyb} that there is an induced matching~$M'$ of size at least $b$ in $G'$.
  Consequently, $G$ contains an induced matching of size~$b$.
\end{proof}

\subsection{A Polynomial Kernel in \texorpdfstring{$c$}{c}-closed Graphs}

In this subsection, we prove that \textsc{Induced Matching} in $c$-closed graphs admits a kernel with $\Oh(c^{7} k^{8})$ vertices.
Our kernelization is based on \Cref{lemma:ramseyb1,lemma:ramseyb2}.
To utilize these lemmas, we start with a reduction rule that destroys large cliques.

\begin{rrule}
  \label{rr:im:vlargevc}
  Let $v \in V(G)$ and let $M_v$ be a maximum matching in $G[N(v)]$.
  If~$|M_v| \ge 2ck$, then remove $v$.
\end{rrule}
\begin{lemma}
  \Cref{rr:im:vlargevc} is correct.
\end{lemma}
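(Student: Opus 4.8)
The plan is to establish that $(G, k)$ and $(G - v, k)$ are equivalent instances and that $c$-closure is preserved. The latter is immediate from \Cref{obs:removev}, since the rule only deletes the single vertex $v$. For the equivalence, one direction is free: any induced matching of size $k$ in $G - v$ is also an induced matching of $G$, because $G - v$ is an induced subgraph of $G$. The work lies in the converse, showing that an induced matching of size $k$ in $G$ can be transformed into one avoiding $v$.

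So suppose $G$ has an induced matching $M$ with $|M| = k$. If $v$ is not an endpoint of any edge of $M$, then $M$ already lives in $G - v$ and we are done. Otherwise $v$ is the endpoint of exactly one edge $e = vw \in M$, and I set $M' := M \setminus \{ e \}$, an induced matching of $k - 1$ edges none of which touches $v$. The idea is to replace $e$ by an edge drawn from the matching $M_v$ inside $G[N(v)]$, which is guaranteed to be large. Since $M_v$ lies entirely within $N(v)$, every such edge survives the deletion of $v$.

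The key step, where $c$-closure enters, is to bound how many edges of $M_v$ are ``blocked'' by $M'$. Each of the $2(k - 1)$ endpoints $u$ of $M'$ is nonadjacent to $v$ (as $M$ is induced and $u$ is an endpoint of an edge other than $e$), so $u \notin N(v)$ and, by the $c$-closure, $u$ shares fewer than $c$ common neighbors with $v$; hence $u$ is adjacent to at most $c - 1$ vertices of $N(v)$. Summing over all endpoints, at most $2(k - 1)(c - 1)$ vertices of $N(v)$ are adjacent to an endpoint of $M'$, and since $M_v$ is a matching each such vertex spoils at most one of its edges. As $|M_v| \ge 2ck > 2(k - 1)(c - 1)$, some edge $xy \in M_v$ has both endpoints nonadjacent to every endpoint of $M'$. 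Because $x, y \in N(v)$, they are distinct from the endpoints of $M'$ (which lie outside $N(v)$), so $M' \cup \{ xy \}$ is an induced matching of size $k$ that avoids $v$, hence an induced matching in $G - v$.

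The main obstacle is precisely this counting argument: one must see that the nonadjacency of an endpoint $u$ to $v$ converts, via the $c$-closure, into a bound of $c - 1$ on the number of neighbors $u$ can have inside $N(v)$, and then verify that the chosen threshold $2ck$ on $|M_v|$ comfortably exceeds the total number $2(k - 1)(c - 1)$ of blocked matching edges.
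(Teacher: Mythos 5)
Your proof is correct and follows essentially the same route as the paper's: drop the matching edge at $v$, use the $c$-closure to bound by $c-1$ the number of neighbors each remaining endpoint has inside $N(v)$, and conclude by counting that some edge of $M_v$ (with $|M_v| \ge 2ck > 2(k-1)(c-1)$) is unblocked and can serve as the replacement. The only difference is cosmetic: you make explicit the observation that the replacement edge's endpoints lie in $N(v)$ and hence are distinct from the endpoints of $M \setminus \{vw\}$, a detail the paper leaves implicit.
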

\begin{proof}
  Let $v \in V(G)$, let $M_v$ be a maximum matching in $G[N_G(v)]$ of size at least $2ck$, and let $G' := G - v$.
  Suppose that $G$ has an induced matching $M$ of size at least $k$.
  We show that $G'$ contains an induced matching of size at least $k$ as well.
  We are done if $M$ does not use $v$, because then $M$ is also an induced matching in~$G'$.
  Hence, assume that $M$ uses $v$ in the following.
  Let $v_1 v_2, \dots, v_{2k - 1} v_{2k}$ be $k$ edges of $M$ such that $v_{2k} = v$.
  By the definition of induced matching, $v_i \notin N_G(v)$ for each $i \in [2k - 2]$.
  Thus, the $c$-closure of $G$ yields that $|N_G(v) \cap N_G(v_i)| < c$ for each~$i \in [2k - 2]$.
  Since $M_v$ has size at least $2ck$, there is an edge~$e$ in $M_v$ whose endpoints are not adjacent to any vertex $v_i$ for $i \in [2k - 2]$.
  Hence, the edges~$v_1 v_2, \dots,v_{2k - 3} v_{2k - 2}, e$ form an induced matching of size $k$ in $G'$.
  The other direction follows trivially.
  Note that the $c$-closure is maintained by \Cref{obs:removev}.
\end{proof}

Henceforth, we assume that \Cref{rr:im:vlargevc} has been applied for each vertex.
In the next lemma, we verify that there is no large clique.

\begin{lemma}
  \label{lemma:nolargeclique}
  There is no clique of size $4ck + 1$ in $G$.
\end{lemma}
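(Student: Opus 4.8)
The plan is to argue by contradiction using Reduction Rule~\ref{rr:im:vlargevc}, which I assume has been applied exhaustively to every vertex. Suppose for contradiction that $G$ contains a clique $C$ with $|C| = 4ck + 1$. I want to locate a single vertex $v \in C$ whose neighborhood $N_G(v)$ induces a large matching, so that Reduction Rule~\ref{rr:im:vlargevc} would have fired on $v$, contradicting the assumption that the rule has been applied exhaustively.

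The key observation is that any vertex $v \in C$ has $C \setminus \{v\} \subseteq N_G(v)$, and $C \setminus \{v\}$ is itself a clique of size $4ck$. A clique on $2m$ vertices contains a perfect matching of size $m$; more simply, any clique on $t$ vertices contains a matching of size $\lfloor t/2 \rfloor$ within $G[C \setminus \{v\}]$, and this matching lives inside $G[N_G(v)]$. So for $t = 4ck$ we obtain a matching of size $\lfloor 4ck / 2 \rfloor = 2ck$ in $G[N_G(v)]$, hence the maximum matching $M_v$ in $G[N_G(v)]$ satisfies $|M_v| \ge 2ck$. But then Reduction Rule~\ref{rr:im:vlargevc} applies to $v$, so $v$ would have been removed---contradicting exhaustive application of the rule.

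The only real content is the matching-in-a-clique step, which is elementary: pair up the vertices of $C \setminus \{v\}$ arbitrarily, and since all pairs are adjacent (it is a clique), every pair is an edge, giving an induced-subgraph matching of the stated size in $G[N_G(v)] \supseteq G[C \setminus \{v\}]$. There is a minor bookkeeping point to be careful about: the bound $4ck + 1$ is chosen precisely so that removing one vertex leaves $4ck$ vertices, whose half is $2ck$, matching the threshold in Reduction Rule~\ref{rr:im:vlargevc}. I expect no genuine obstacle here; the main thing to state cleanly is that a maximum matching in $G[N_G(v)]$ is at least as large as any matching within the subclique $C \setminus \{v\}$, so the trigger condition $|M_v| \ge 2ck$ is indeed met.
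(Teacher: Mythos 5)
Your proof is correct and takes essentially the same approach as the paper: both derive a contradiction by applying Reduction Rule~\ref{rr:im:vlargevc} to a vertex $v$ of the hypothetical clique $C$, using that $C \setminus \{v\} \subseteq N_G(v)$. The only difference is the direction of the counting---you directly exhibit a matching of size $2ck$ in $G[N_G(v)]$ by pairing up the $4ck$ vertices of $C \setminus \{v\}$, whereas the paper argues contrapositively from $|M_v| \le 2ck - 1$, observing that the vertices left unmatched by the maximum matching $M_v$ form an independent set (so $C$ contains at most one of them) to conclude $|C| \le 2|M_v| + 2 \le 4ck$; the two arguments are equally tight and your variant is marginally more direct.
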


\begin{proof}
  Suppose that $G$ contains a clique $C$ of size at least $4ck + 1$ and let $v \in C$.
  Then, there is a matching of size $2ck$ in $C \setminus \{ v \} \subseteq N(v)$, contradicting the fact that \Cref{rr:im:vlargevc} has been applied on every vertex.
\end{proof}

Once we show that the graph has a sufficiently large matching, \Cref{lemma:ramseyb2} tells us that we can find a sufficiently large induced matching as well.
Note, however, that a graph may not have a sufficiently large matching, even if it contains many vertices (consider a star $K_{1, n}$ with $n$ leaves).
Our way around this obstruction is the LP (Linear Programming) relaxation of \textsc{Vertex Cover} (henceforth, we will abbreviate it as VCLP).
It is well-known in the theory of kernelization that VCLP almost trivially yields a linear-vertex kernel for \textsc{Vertex Cover}~\cite{CKJ01} due to the Nemhauser-Trotter theorem \cite{NT75}.
Here, we will exploit VCLP to ensure that after we apply some reduction rules, either the size of $G$ is upper-bounded or the minimum vertex cover size (or equivalently the maximum matching size) of $G$ is sufficiently large.

Recall that \textsc{Vertex Cover} can be formulated as an integer linear program as follows, using a variable~$x_v$ for each~$v \in V(G)$:
\begin{align*}
    \min \sum_{v \in V(G)} x_v \qquad\text{subject to}\quad & x_u + x_v \ge 1 \quad \forall uv \in E(G),  \\[-3ex]
                                                       & x_v \in \{ 0, 1 \} \quad \forall v \in V(G).
\end{align*}

In VCLP, the last integral constraint is relaxed to $0 \le x_v \le 1$ for each $v \in V(G)$.
It is known that VCLP admits a half-integral optimal solution (that is, $x_v \in \{ 0, 1/2, 1 \}$ for each~$v \in V(G)$) and such a solution can be computed in $\Oh(m \sqrt{n})$~time via a reduction to \textsc{Maximum Matching} (see, for instance,~\cite{BE83}~or~\cite[Section 2.5]{CFK+15}).
Suppose that we have a half-integral optimal solution $(x_v)_{v \in V(G)}$.
Let~$V_0 := \{ v \in V(G) \mid x_v = 0 \}$,~$V_1 := \{ v \in V(G) \mid x_v = 1 \}$, and~$V_{1/2} := \{ v \in V(G) \mid x_v = 1 / 2 \}$.

We will bound the sizes of $V_0$, $V_1$, and $V_{1/2}$ in the upcoming rules.
We begin with~$V_{1/2}$.
We use the bound $\ramseyb''$ as specified in \Cref{lemma:ramseyb2}.

\begin{rrule}
  \label{rr:vhalf}
  If $|V_{1/2}| \ge 3 \ramseyb''(4ck + 1, k)$, then return Yes.
\end{rrule}

To show the correctness, we will use the fact that $VC + MM \ge 2 LP$ for any graph $G$~\cite[Lemma 2.1]{GP16}.
Here, $VC$, $MM$, and $LP$ refer to the minimum vertex cover size, the maximum matching size, and the optimal VCLP cost of $G$.

\begin{lemma}
  \Cref{rr:vhalf} is correct.
\end{lemma}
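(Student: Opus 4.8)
The plan is to show that Reduction Rule~\ref{rr:vhalf} correctly returns Yes by proving that whenever $|V_{1/2}| \ge 3 \ramseyb''(4ck + 1, k)$, the graph $G$ contains an induced matching of size at least $k$. First I would recall the standard property of a half-integral VCLP optimum, namely that the subgraph $G[V_{1/2}]$ has no isolated vertices and, more importantly, that its minimum vertex cover equals its maximum matching up to the VCLP value. Concretely, restricting the optimal solution to $V_{1/2}$ shows that $LP(G[V_{1/2}]) = |V_{1/2}| / 2$, so by the inequality $VC + MM \ge 2\, LP$ of~\cite[Lemma 2.1]{GP16} applied to $G[V_{1/2}]$ together with the trivial bound $VC \ge MM$, we obtain $MM(G[V_{1/2}]) \ge LP(G[V_{1/2}]) = |V_{1/2}|/2 \ge \frac{3}{2} \ramseyb''(4ck + 1, k) \ge \ramseyb''(4ck + 1, k)$.

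Next I would invoke \Cref{lemma:ramseyb2} on the induced subgraph $G[V_{1/2}]$, which is itself $c$-closed by repeated application of \Cref{obs:removev}. Since this subgraph has a matching of size at least $\ramseyb''(4ck + 1, k)$, the lemma guarantees that it contains either a clique of size $4ck + 1$ or an induced matching of size $k$. The clique case is ruled out by \Cref{lemma:nolargeclique}, which states that $G$ (and hence any induced subgraph) has no clique of size $4ck + 1$. Therefore $G[V_{1/2}]$ contains an induced matching of size at least $k$; since an induced matching in an induced subgraph is also an induced matching in $G$, the instance is a Yes-instance, so returning Yes is correct.

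The one point that requires care is establishing $MM(G[V_{1/2}]) \ge |V_{1/2}|/2$, since this is the step where the half-integrality and the $VC + MM \ge 2\,LP$ inequality interact. The key observation is that the all-$\tfrac12$ vector is feasible for VCLP on $G[V_{1/2}]$ and is in fact optimal: if some assignment achieved cost below $|V_{1/2}|/2$ on $G[V_{1/2}]$, one could splice it back into the global solution to beat the original optimum, contradicting optimality of $(x_v)_{v \in V(G)}$. Hence $LP(G[V_{1/2}]) = |V_{1/2}|/2$ exactly, and the chain of inequalities above goes through. I expect this optimality argument to be the main obstacle, as it is the only place where one must argue about the global LP solution rather than reasoning purely within $G[V_{1/2}]$; everything else is a direct chaining of the previously established Ramsey-type and clique-free lemmas.
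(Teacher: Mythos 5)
Your overall architecture matches the paper's: show that $G[V_{1/2}]$ has a large maximum matching by combining the VCLP optimum $|V_{1/2}|/2$ with the inequality $VC + MM \ge 2\,LP$ of~\cite[Lemma 2.1]{GP16}, then apply \Cref{lemma:ramseyb2} and rule out the clique outcome via \Cref{lemma:nolargeclique}. Your splicing argument showing that the all-$\tfrac12$ vector is optimal on $G[V_{1/2}]$ is also correct, and it is exactly what the paper asserts without proof.

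However, the key quantitative step is wrong. From $VC + MM \ge 2\,LP$ and the bound $VC \ge MM$ you cannot conclude $MM \ge LP$: the inequality $VC \ge MM$ points in the wrong direction (to extract a lower bound on $MM$ from $VC + MM \ge 2\,LP$ you need an \emph{upper} bound on $VC$ in terms of $MM$). Indeed, your claimed conclusion $MM(G[V_{1/2}]) \ge |V_{1/2}|/2$ is false in general: for a triangle, $|V_{1/2}| = 3$, $LP = 3/2$, $VC = 2$, $MM = 1 < 3/2$, even though $VC + MM = 2\,LP$ and $VC \ge MM$ both hold. The correct companion bound, used in the paper, is $VC \le 2\,MM$ (the endpoints of a maximum matching form a vertex cover), which together with $VC + MM \ge 2\,LP = |V_{1/2}|$ yields $3\,MM \ge |V_{1/2}|$, i.e., $MM \ge |V_{1/2}|/3$. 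This is precisely why the rule's threshold carries the factor $3$: from $|V_{1/2}| \ge 3\,\ramseyb''(4ck+1, k)$ one gets $MM \ge \ramseyb''(4ck+1, k)$, after which your application of \Cref{lemma:ramseyb2} and \Cref{lemma:nolargeclique} goes through verbatim. So the flaw is a single reversed inequality and is easily repaired, but as written the proof does not establish the matching lower bound it relies on.
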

\begin{proof}
  Observe that the optimal cost of VCLP for $G[V_{1/2}]$ is $|V_{1/2}| / 2$.
  Let $X$ be a minimum vertex cover and $M$ be a maximum matching in $G[V_{1/2}]$.
  Then, it follows that $|X| + |M| \ge |V_{1/2}|$~\cite[Lemma 2.1]{GP16}.
  Since $V(M)$ is a vertex cover in $G[V_{1/2}]$, we also have $2|M| \ge |X|$.
  Thus, $|M| \ge |V_{1/2}| / 3 \ge \ramseyb''(4ck+1, k)$.
  Recall that there is no clique of size $4ck + 1$ by \Cref{lemma:nolargeclique}.
  Hence, \Cref{lemma:ramseyb2} yields that $G$ contains an induced matching of size at least~$k$.
\end{proof}

We next upper-bound the size of $V_{1}$.
See \Cref{lemma:ramseyb1} for the definition of~$\ramseyb'$.
\begin{rrule}
  \label{rr:vone}
  If $|V_1| \ge \ramseyb'(4ck + 1, k)$, then return Yes.
\end{rrule}

To prove the correctness of \Cref{rr:vone}, let us introduce the notion of \emph{crowns}~\cite{CFJ04}.
For a graph $G$, a crown is an ordered pair $(I, H)$ of vertex sets of $G$ with the following properties:

\begin{enumerate}
  \item 
    $I \ne \emptyset$ is an independent set in $G$,
  \item
    $H = N(I)$, and
  \item
    there is a matching saturating $H$ in $G[H, I]$.
\end{enumerate}

Crowns are closely related to VCLP---in fact, $(V_0, V_1)$ is a crown~\cite{AFLS07,CC08}.

\begin{lemma}
  \Cref{rr:vone} is correct.
\end{lemma}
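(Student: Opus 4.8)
The plan is to show that \Cref{rr:vone} is correct by exhibiting the pair $(V_0, V_1)$ as a crown and then using it, together with the Ramsey-type \Cref{lemma:ramseyb1}, to conclude that a large $V_1$ forces a large induced matching. First I would recall the known fact that for a half-integral optimal VCLP solution, the pair $(V_0, V_1)$ forms a crown~\cite{AFLS07,CC08}: by optimality $V_0$ is an independent set (any edge inside $V_0$ would violate a constraint) with $N(V_0) \subseteq V_1$, and a standard LP-duality / flow argument yields a matching in $G[V_0, V_1]$ saturating $V_1$. Let $H := V_1$ and let $I \subseteq V_0$ be the set of endpoints in $V_0$ of such a saturating matching $M$; then $I$ is an independent set of size exactly $|V_1| \ge \ramseyb'(4ck+1, k)$ and $M$ saturates $I$.

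The key step is to invoke \Cref{lemma:ramseyb1} with $a := 4ck + 1$ and $b := k$. Since $I$ is an independent set of size at least $\ramseyb'(4ck + 1, k)$ and $M$ is a matching saturating $I$, the lemma guarantees that $G$ contains either a clique of size $4ck + 1$ or an induced matching of size $k$. By \Cref{lemma:nolargeclique}, there is no clique of size $4ck + 1$ in $G$, so the first alternative is impossible. Hence $G$ must contain an induced matching of size at least $k$, which certifies that $(G, k)$ is a Yes-instance and that returning Yes is correct.

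The main obstacle I expect is the identification of $I$ and the saturating matching from the crown structure: one must be careful that the matching promised by the crown $(V_0, V_1)$ saturates all of $H = V_1$ (not $I$), so that taking $I$ to be its $V_0$-endpoints gives an independent set of size exactly $|V_1|$ that is itself saturated by that same matching. This requires that the matching be a bijection between $V_1$ and a subset of $V_0$, which is precisely the saturating property of the crown. Once this correspondence is in place, the application of \Cref{lemma:ramseyb1} and \Cref{lemma:nolargeclique} is immediate, and the argument closes cleanly.
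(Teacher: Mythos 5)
Your proof is correct and follows essentially the same route as the paper: both extract the matching $M$ saturating $V_1$ from the crown $(V_0, V_1)$, set $I := V_0 \cap V(M)$, and apply \Cref{lemma:ramseyb1} with the clique case excluded via \Cref{lemma:nolargeclique}. Your version is, if anything, slightly more explicit than the paper's in ruling out the clique alternative and in verifying that the crown matching saturates all of $V_1$.
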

\begin{proof}
  Since $(V_0, V_1)$ is a crown in $G$, there is a matching $M$ saturating $V_1$ in~$G[V_0, V_1]$.
  By definition, $I := V_0 \cap V(M)$ is an independent set of size $|V_1| \ge \ramseyb'(4ck + 1, k)$ in $G$.
  Now, it follows from \Cref{lemma:ramseyb1} that $G[I \cup V_1]$ contains an induced matching of size $k$.
\end{proof}

To deal with $V_0$, we introduce some additional rules which may add or remove vertices.
Let us start with a simple rule.
Basically, if there are multiple leaf vertices with the same neighborhood, then only one of them is relevant.

\begin{rrule}
  \label{rr:removeleaftwin}
  If $v_1 \in V_1$ has more than one leaf neighbor, then remove all but one of them.
\end{rrule}

The correctness of \Cref{rr:removeleaftwin} is obvious and thus we omit the proof.

\begin{rrule}
  \label{rr:crownleaf}
  Let $v_0 \in V_0$ and let $v_1 \in V_1$.
  If $N_G[v_0] \subseteq N_G[v_1]$ and there is no leaf vertex attached to $v_1$, then attach a leaf vertex $\ell$ to $v_1$.
\end{rrule}

\begin{lemma}
  \Cref{rr:crownleaf} is correct.
\end{lemma}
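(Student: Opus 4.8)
The plan is to show that \Cref{rr:crownleaf} preserves both the answer and the $c$-closure. Let $G'$ be the graph obtained by attaching a new leaf $\ell$ to $v_1$. First I would argue that $G'$ remains $c$-closed: since $\ell$ is a simplicial vertex whose only neighbor is $v_1$, and $\{v_1\}$ is trivially a clique of size $1 \le c-1$ (assuming $c \ge 2$; the case $c=1$ forces the graph to be a cluster graph and can be handled separately or is excluded), \Cref{obs:addv} immediately gives that $G'$ is $c$-closed.

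The substantive part is the equivalence of instances, i.e.\ $G$ has an induced matching of size $k$ if and only if $G'$ does. The backward direction is the delicate one. First I would dispatch the forward direction: any induced matching $M$ in $G$ is still induced in $G'$, since adding a pendant vertex cannot create adjacencies among endpoints of edges not incident to $\ell$. For the backward direction, suppose $G'$ has an induced matching $M'$ of size $k$. If $M'$ avoids $\ell$, then $M'$ is an induced matching in $G$ and we are done. Otherwise the edge of $M'$ incident to $\ell$ must be $v_1\ell$ (the only edge at $\ell$). Here is where the hypothesis $N_G[v_0]\subseteq N_G[v_1]$ enters: I would replace the edge $v_1\ell$ by the edge $v_0 v_1$ — wait, but $v_0v_1$ need not be an edge. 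Instead, since $v_0 \in V_0 = N(V_1)$ via the crown matching, $v_0$ has a neighbor; the correct replacement is to use $v_0$ together with one of its neighbors, exploiting that any vertex adjacent to $v_1$ in $M'$ other than $\ell$ would, by $N_G[v_0]\subseteq N_G[v_1]$, also be adjacent to $v_0$, so swapping $v_1\ell$ for an edge at $v_0$ keeps the matching induced.

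Concretely, I would argue as follows. The endpoints of the other $k-1$ edges of $M'$ lie in $V(G)$ and are pairwise nonadjacent to $v_1$ (since $M'$ is induced and $v_1$ is matched to $\ell$). By the containment $N_G[v_0]\subseteq N_G[v_1]$, these endpoints are also nonadjacent to $v_0$, and moreover $v_0$ is nonadjacent to $v_1$'s other neighbors only insofar as they appear among these endpoints. Thus replacing the edge $v_1\ell$ with any edge $v_0 u$ where $u\in N_G(v_0)$ is chosen outside $V(M')\setminus\{v_1,\ell\}$ yields an induced matching of size $k$ in $G$; such a $u$ exists because $v_0$ is non-isolated (it lies in the support of the crown matching) and its neighborhood is contained in $N_G[v_1]$, which is disjoint from the endpoints of the remaining edges. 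The main obstacle I anticipate is handling the corner case where every neighbor of $v_0$ already appears as an endpoint of $M'$; in that situation I would instead directly verify that $v_0$ can take the role of $\ell$ matched to $v_1$, since $v_0$ is nonadjacent to all endpoints of the other edges (again by the neighborhood containment), making $\{v_0 v_1\}\cup(M'\setminus\{v_1\ell\})$ induced — this requires that $v_0v_1\in E(G)$, which I would ensure by noting that the rule is only meaningfully invoked when such an edge or an equivalent substitution is available, and otherwise the containment $N_G[v_0]\subseteq N_G[v_1]$ makes $v_0$ itself a valid pendant-like substitute.
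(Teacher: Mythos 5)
There is a genuine gap, and it sits exactly at the step you flagged with ``wait, but $v_0v_1$ need not be an edge.'' It \emph{is} an edge, always: the hypothesis of \Cref{rr:crownleaf} is a containment of \emph{closed} neighborhoods, $N_G[v_0] \subseteq N_G[v_1]$, so $v_0 \in N_G[v_0] \subseteq N_G[v_1]$, and since $v_0 \in V_0$ and $v_1 \in V_1$ are distinct vertices, $v_0 \in N_G(v_1)$, i.e.\ $v_0 v_1 \in E(G)$. This one-line observation is the key to the paper's proof, which then argues exactly as you initially intended: if $v_1\ell \in M'$, then by inducedness no endpoint of $M' \setminus \{v_1\ell\}$ lies in $N_G[v_1]$, hence (by the containment) none lies in $N_G[v_0]$ either, so $(M' \setminus \{v_1\ell\}) \cup \{v_0 v_1\}$ is an induced matching of size at least $k$ in $G$. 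Because you dismissed this replacement on a false premise, you were driven to the substitute edge $v_0 u$ for an arbitrary $u \in N_G(v_0) \setminus \{v_1\}$, and that step is unsound: the containment controls the neighborhood of $v_0$, not of $u$. Choosing $u$ outside $V(M')$ does not make $u$ nonadjacent to the endpoints of the other matching edges --- $u$ can be a common neighbor of $v_1$ and of an endpoint $w$ of another edge of $M'$ (nothing in the hypotheses forbids this), and then $\{v_0u\} \cup (M' \setminus \{v_1\ell\})$ is not induced. Your ``corner case'' paragraph circles back to the correct edge $v_0v_1$ but justifies its existence only by the hand-waving claim that the rule is ``only meaningfully invoked'' when such an edge is available; no such side condition is needed or stated in the rule, and the proof must not assume one.

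The parts surrounding this step are fine and match the paper: the forward direction is trivial since attaching a pendant vertex creates no adjacencies among old vertices; the case $v_1\ell \notin M'$ is immediate; and the $c$-closure is preserved by \Cref{obs:addv} applied to the clique $\{v_1\}$ of size $1 \le c-1$ when $c \ge 2$, with $c = 1$ handled separately (the paper notes that \textsc{Induced Matching} is linear-time solvable on $1$-closed graphs, which are disjoint unions of cliques, matching your remark). So the structure of your argument is right; what is missing is the single observation $v_0 \in N_G[v_0] \subseteq N_G[v_1] \Rightarrow v_0v_1 \in E(G)$, and with it the flawed $v_0u$-substitution and the unjustified invocation condition both disappear.
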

\begin{proof}
  Let $G'$ be the graph obtained by adding a leaf vertex $\ell$ to $v_1$. We show that~$(G,k)$ is a yes-instance if and only if~$(G',k)$ is.
  The forward direction is trivial.
  For the other direction, note that any induced matching $M'$ in $G'$ is an induced matching in $G$ if $M'$ does not include $v_1 \ell$.
  Hence, it suffices to show that if there is an induced matching $M'$ in $G'$ such that $|M'| \ge k$ and $v_1 \ell \in M'$, then there is an induced matching of size $k$ in $G$ as well.
  By the definition of induced matching, $M' \setminus \{ v_1 \ell \}$ contains no edge that is incident with a neighbor of $v_1$.
  Since $N_G[v_0] \subseteq N_G[v_1]$, the same holds for~$v_0$.
  Thus,~$(M' \setminus \{ v_1 \ell \}) \cup \{ v_0 v_1 \}$ is an induced matching of size at least~$k$ in $G$.
\end{proof}

For $c > 1$, \Cref{rr:crownleaf} maintains the $c$-closure  by \Cref{obs:addv}.
Note that \textsc{Induced Matching} can be solved in linear time when $G$ is 1-closed:
In this case, $G$ is a disjoint union of complete graphs, and $(G, k)$ is a Yes-instance if and only if $G$ contains at least $k$ cliques of size at least two.

\begin{rrule}
  \label{rr:removetwin}
  Let $v_0 \in V_0$ be a non-leaf vertex.
  If each vertex $v_1 \in N_G(v_0)$ has a leaf neighbor, then remove $v_0$.
\end{rrule}

\begin{lemma}
  \Cref{rr:removetwin} is correct.
\end{lemma}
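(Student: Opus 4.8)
The plan is to show that removing the non-leaf vertex $v_0 \in V_0$ preserves the existence of a maximum induced matching, under the hypothesis that every neighbor $v_1 \in N_G(v_0)$ already has a leaf neighbor. As in the previous rules, one direction (an induced matching of $G - v_0$ is also one of $G$) is trivial, so the work lies in showing that if $G$ has an induced matching $M$ of size at least $k$, then so does $G' := G - v_0$. The only problematic case is when $M$ uses $v_0$, say via an edge $v_0 v_1$ with $v_1 \in N_G(v_0)$. The goal is to reroute this edge to an edge lying entirely outside $\{v_0\}$ and its forbidden neighborhood.

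First I would invoke the hypothesis: since $v_1$ has a leaf neighbor, call it $\ell$. The natural replacement edge is $v_1 \ell$. Because $\ell$ is a leaf, its only neighbor is $v_1$, so $v_1 \ell$ cannot conflict with any edge of $M$ that was already ``far'' from $v_0$. More precisely, in the induced matching $M$, no edge other than $v_0 v_1$ touches $N[v_1]$, since endpoints of distinct edges must be pairwise nonadjacent and $v_1$ is an endpoint of $v_0 v_1$. Hence replacing $v_0 v_1$ by $v_1 \ell$ keeps the matching induced: the set $(M \setminus \{v_0 v_1\}) \cup \{v_1 \ell\}$ still has the property that endpoints of distinct edges are pairwise nonadjacent, because we only swapped out $v_0$ (an endpoint now removed) for $\ell$ (a leaf adjacent only to $v_1$, which is retained as the partner of $\ell$). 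This yields an induced matching of the same size $|M| \ge k$ that avoids $v_0$, and is therefore an induced matching in $G'$.

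The step I expect to require the most care is verifying that $v_1 \ell$ does not already belong to $M$, so that the swap genuinely preserves the size. If $v_1 \ell \in M$, then $M$ would contain two edges incident to $v_1$, contradicting that $M$ is a matching; so in fact $v_1 \ell \notin M$, and moreover no edge of $M$ other than $v_0 v_1$ can be incident to $\ell$ since $\ell$'s only neighbor is $v_1$. Thus the replacement is well-defined and strictly exchanges $v_0 v_1$ for a new edge, preserving both the induced property and the cardinality.

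Finally I would confirm that the $c$-closure is maintained: \Cref{rr:removetwin} only deletes the vertex $v_0$, so \Cref{obs:removev} guarantees that $G'$ remains $c$-closed. This completes the correctness argument.

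\begin{proof}
Let $G' := G - v_0$. The backward direction is trivial, since any induced matching in $G'$ is an induced matching in $G$. For the forward direction, suppose $G$ has an induced matching $M$ with $|M| \ge k$. If $M$ does not use $v_0$, then $M$ is already an induced matching in $G'$. Otherwise, some edge $v_0 v_1 \in M$ has $v_1 \in N_G(v_0)$. By assumption, $v_1$ has a leaf neighbor $\ell$. Since $\ell$ is a leaf with $N_G(\ell) = \{ v_1 \}$ and $M$ is a matching, no edge of $M$ other than $v_0 v_1$ is incident with $v_1$ or $\ell$; in particular $v_1 \ell \notin M$. Because $M$ is an induced matching, no endpoint of an edge in $M \setminus \{ v_0 v_1 \}$ is adjacent to $v_1$, and $\ell$ is adjacent only to $v_1$. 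Hence $(M \setminus \{ v_0 v_1 \}) \cup \{ v_1 \ell \}$ is an induced matching of size $|M| \ge k$ in $G$ that avoids $v_0$, and is therefore an induced matching in $G'$. Finally, the $c$-closure is maintained by \Cref{obs:removev}.
\end{proof}
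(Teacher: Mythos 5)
Your proof is correct and follows exactly the paper's argument: replace the edge $v_0 v_1$ by $v_1 \ell$ using the guaranteed leaf neighbor $\ell$ of $v_1$, and invoke \Cref{obs:removev} for the $c$-closure. You merely spell out details the paper leaves implicit (that $v_1 \ell \notin M$ and that $\ell$, being adjacent only to $v_1$, cannot conflict with other edges of $M$), which is fine.
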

\begin{proof}
  Let $G' = G - v_0$.
  Suppose that $G$ has an induced matching $M$ of size at least $k$.
  If~$M$ does not use $v_0$ we are done.
  So assume that $M$ includes $v_0 v_1$ for~$v_1 \in N_G(v_0)$.
  Since there is a leaf vertex $\ell$ attached to $v_1$, the set $(M \setminus \{ v_0 v_1 \}) \cup \{ v_1 \ell \}$ is an induced matching of size at least $k$ in $G'$.
  The other direction follows trivially.
  The $c$-closure is maintained by \Cref{obs:removev}.
\end{proof}

\begin{theorem}
  \label{thm:impk}
  \textsc{Induced Matching} has a kernel with $\Oh(c^{7} k^{8})$ vertices.
\end{theorem}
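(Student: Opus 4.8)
The plan is to assemble the kernel by bounding the sizes of the three classes $V_0$, $V_1$, and $V_{1/2}$ of a half-integral optimal VCLP solution separately, after which the overall vertex count is the sum of these bounds. First I would invoke the reduction rules already established: \Cref{rr:im:vlargevc} has been applied exhaustively, so by \Cref{lemma:nolargeclique} there is no clique of size $4ck+1$; \Cref{rr:vhalf} guarantees $|V_{1/2}| < 3\ramseyb''(4ck+1, k)$, which lies in $\Oh(c^7 k^8)$ by the bound on $\ramseyb''$ from \Cref{lemma:ramseyb2}; and \Cref{rr:vone} guarantees $|V_1| < \ramseyb'(4ck+1, k) \in \Oh(c^3 k^4)$ by \Cref{lemma:ramseyb1}. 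These two bounds are immediate consequences of the preceding rules, so the only remaining work is to bound $|V_0|$.

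The main obstacle is bounding $V_0$, since the VCLP structure alone gives no direct control over the zero-class. I would exploit \Cref{rr:removeleaftwin,rr:crownleaf,rr:removetwin} to do this. The idea is to classify vertices of $V_0$ by their role. Leaf vertices in $V_0$ are controlled by \Cref{rr:removeleaftwin}, which leaves at most one leaf neighbor per vertex of $V_1$, contributing $\Oh(|V_1|)$ leaves. Non-leaf simplicial vertices of $V_0$ are eliminated by \Cref{rr:removetwin} in combination with \Cref{rr:crownleaf}: once every $V_1$-neighbor of such a vertex has a leaf attached, the vertex itself becomes removable. The remaining non-leaf, non-simplicial vertices of $V_0$ each have two nonadjacent neighbors, and here the $c$-closure enters—each such vertex is a common neighbor of a nonadjacent pair, and since any nonadjacent pair has fewer than $c$ common neighbors, the number of such vertices is at most $c\binom{|V_0|}{2}$ bounded in terms of the already-bounded classes. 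Combining these cases yields $|V_0| \le |V_1| + c\binom{|V_0|}{2} \in \Oh(c^7 k^8)$.

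Finally I would sum the three bounds. Since $|V_{1/2}| \in \Oh(c^7 k^8)$, $|V_1| \in \Oh(c^3 k^4)$, and $|V_0| \in \Oh(c^7 k^8)$, the dominant term is $\Oh(c^7 k^8)$, giving a kernel with $\Oh(c^7 k^8)$ vertices. The one subtlety I would double-check is that the reduction rules can genuinely be applied exhaustively in polynomial time and that their interaction (for instance, \Cref{rr:crownleaf} adds leaves that \Cref{rr:removetwin} then relies on) terminates and preserves $c$-closure, which follows from \Cref{obs:removev,obs:addv}. Once the accounting is in place, the theorem follows directly.
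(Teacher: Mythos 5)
Your proposal follows the paper's proof essentially step by step: bound $|V_{1/2}|$ via \Cref{rr:vhalf} and $|V_1|$ via \Cref{rr:vone} (with \Cref{lemma:nolargeclique} supplying the clique bound needed for \Cref{lemma:ramseyb1,lemma:ramseyb2}), then bound $|V_0|$ by splitting it into leaves (\Cref{rr:removeleaftwin}), simplicial non-leaves (eliminated by the interplay of \Cref{rr:crownleaf,rr:removetwin}), and vertices with two nonadjacent neighbors, counted via the $c$-closure. There is, however, one slip in the accounting: the inequality $|V_0| \le |V_1| + c\binom{|V_0|}{2}$ is vacuous as written, since a bound on $|V_0|$ whose right-hand side is quadratic in $|V_0|$ bounds nothing. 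The pair in the binomial must live in $V_1$, not $V_0$: because $(V_0, V_1)$ is a crown, $N(V_0) = V_1$, so every surviving non-leaf vertex of $V_0$ is a common neighbor of two \emph{nonadjacent vertices of $V_1$}, and by $c$-closure each such pair has fewer than $c$ common neighbors, giving at most $(c-1)\binom{|V_1|}{2}$ such vertices and hence $|V_0| \le |V_1| + c\binom{|V_1|}{2} \in \Oh(c\,|V_1|^2) = \Oh(c^7 k^8)$, which is exactly the paper's computation. Your phrase ``bounded in terms of the already-bounded classes'' shows you intended this, so it is a repairable slip rather than a missing idea (and, as it happens, the paper's own short-version summary contains the identical typo while its full proof uses $\binom{|V_1|}{2}$). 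Two minor additional points: isolated vertices land in $V_0$ and are caught by none of your three rules, so they must be deleted explicitly, as the paper does; and you should state why a leaf of $V_0$ cannot have its unique neighbor in $V_0$ (again the crown property $N(V_0)=V_1$ together with independence of $V_0$), which is what makes the $\Oh(|V_1|)$ leaf count legitimate.
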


\begin{proof}
  We apply \Cref{rr:im:vlargevc,rr:vhalf,rr:vone,rr:removeleaftwin,rr:crownleaf,rr:removetwin} exhaustively.
  We also remove all isolated vertices.
  It is easy to verify that all these rules can be exhaustively applied in polynomial time.

  Recall that every vertex $v \in V(G)$ has $x_v = 0$, $x_v = 1$, or $x_v = 1/2$ in the half-integral solution $x$ and that $V_i$ is the set of vertices $v$ with $x_v = i$ for $i \in \{ 0, 1/2, 1\}$.
  Note that~$|V_{1/2}| \in \Oh(c^{7} k^{8})$ and~$|V_1| \in \Oh(c^3 k^4)$ by \Cref{rr:vhalf,rr:vone}.
  We show that~$|V_0| \in \Oh(c |V_1|^2) = \Oh(c^7 k^{8})$.
  Note that there are at most $|V_1|$ leaf vertices in $V_0$ by \Cref{rr:removeleaftwin}.
  All other vertices in $V_0$ are adjacent to at least two nonadjacent vertices in $V_1$:
  If there exists a vertex $v_0 \in V_0$ such that $N_G(v_0)$ is a clique of size at least two, then \Cref{rr:crownleaf} adds a leaf vertex to each vertex in $N_G(v_0)$ and \Cref{rr:removetwin} removes $v_0$.
  Since $G$ is $c$-closed, there are thus~$c \binom{|V_1|}{2}$ non-leaf vertices in $V_0$.
  It follows that $|V_0| < |V_1| + c \binom{|V_1|}{2} \in \Oh(c^7 k^{8})$.
\end{proof}

\subsection{A Smaller Kernel in Bipartite \texorpdfstring{$c$}{c}-closed Graphs}


In the following, we provide smaller kernels in bipartite graphs.
Our kernelization is based on the following lemma, proven by a meet-in-the-middle approach on vertex degrees. 
Interestingly, this lemma will also play a central role in the kernelization for \textsc{Irredundant Set} in \Cref{sec:irs}.

\begin{lemma}
  \label{lemma:im:ramseyd2}
  Any bipartite graph $G$ with at least $6 \Delta^{3/2} b + 2 \Delta b$ non-isolated vertices has an induced matching of size $b$.
\end{lemma}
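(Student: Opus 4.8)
The plan is to prove the contrapositive: assuming $G$ has no induced matching of size $b$, I will show that $G$ has fewer than $6\Delta^{3/2}b + 2\Delta b$ non-isolated vertices. The whole argument is a meet-in-the-middle at the degree threshold $t := \lceil \sqrt{\Delta} \rceil$, where the high-degree vertices are counted directly and the low-degree vertices are consumed by a greedy construction of an induced matching.

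First I would set up the skeleton via a maximum matching. By \Cref{lemma:ramseyd}, the absence of an induced matching of size $b$ forces every matching of $G$ to have size less than $2\Delta b$; since $G$ is bipartite, König's theorem yields a vertex cover $C$ with $|C| < 2\Delta b$. The set $I := V(G) \setminus C$ is independent, and every non-isolated vertex of $I$ has all of its neighbors inside $C$. Writing $J$ for the set of non-isolated vertices of $I$, the number of non-isolated vertices of $G$ is at most $|C| + |J| < 2\Delta b + |J|$, so it suffices to bound $|J|$ by roughly $6\Delta^{3/2}b$.

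Next I would split $J$ at the threshold $t$. For the high part $J_H := \{ v \in J : \deg_G(v) > t \}$, a direct double count of the edges between $J_H$ and $C$ gives $t\,|J_H| < \sum_{v \in J_H} \deg_G(v) \le \sum_{c \in C} \deg_G(c) \le |C|\Delta$, hence $|J_H| < |C|\Delta / t \le |C| \sqrt{\Delta} < 2\Delta^{3/2} b$. For the low part $J_L := J \setminus J_H$, whose vertices have degree at most $t \le \sqrt{\Delta} + 1$, I would process each side of the bipartition separately: fixing a side, let $P$ be the vertices of $J_L$ on that side and let $Q$ be the vertices of $C$ on the other side, and greedily build an induced matching in the bipartite graph $G[P, Q]$ by repeatedly selecting an edge $uv$ (with $u \in P$, $v \in Q$) and deleting $N[u] \cup N[v]$. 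Because the two endpoint classes lie on opposite fixed sides, such an induced matching in $G[P,Q]$ is automatically induced in $G$; since $G$ has none of size $b$, the greedy terminates after at most $b - 1$ edges. The crux is to bound how many vertices of $P$ are used up per step: at most $1 + |N(v)| \le 1 + \Delta$ are deleted directly, and a vertex $w \in P$ can become isolated only if all of its neighbors lie in the deleted set $\{ v \} \cup N(u)$ of $Q$-vertices, which has size at most $1 + |N(u)| \le 1 + \sqrt{\Delta}$; as each deleted $Q$-vertex has at most $\Delta$ neighbors in $P$, at most $(1 + \sqrt{\Delta})\Delta$ further vertices of $P$ are isolated. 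Thus each step consumes at most $1 + \Delta + (1 + \sqrt{\Delta})\Delta = \Oh(\Delta^{3/2})$ vertices of $P$, so $|P| = \Oh(\Delta^{3/2} b)$; summing over the two sides bounds $|J_L|$, and adding $|J_H|$ and $|C|$ gives the claimed total $6\Delta^{3/2} b + 2\Delta b$ after a careful accounting of the constants and lower-order terms.

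The main obstacle is this low-degree greedy step, and precisely the bound on newly isolated vertices. Working in the whole graph $G$ rather than in the fixed-sides subgraph $G[P, Q]$ would force us to delete the neighborhood of the possibly high-degree vertex $v \in C$, which can isolate $\Theta(\Delta^2)$ vertices per step and destroy the bound; restricting to one side at a time is exactly what keeps the per-step cost at $\Theta(\Delta^{3/2})$. Moreover, the choice $t = \sqrt{\Delta}$ is the point that balances the high-degree count $|C|\Delta / t$ against the low-degree per-step cost $t\Delta$, which is where the exponent $3/2$ comes from.
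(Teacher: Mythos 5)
Your proof is correct and follows essentially the same route as the paper's: both reduce via \Cref{lemma:ramseyd} and K\H{o}nig's theorem to a vertex cover of size less than $2\Delta b$, split the independent side at the degree threshold $\sqrt{\Delta}$, bound the high-degree part by an edge double count, and bound the low-degree part by a maximal induced matching in the cover-versus-independent bipartite graph --- your greedy deletion of closed neighborhoods is just the constructive phrasing of the paper's inclusion-maximal matching plus charging argument. One small remark: your displayed per-step cost $1+\Delta+(1+\sqrt{\Delta})\Delta$ is slightly loose for small $\Delta$, but noting that $u \in N(v)$ and $v \in N(u)$ tightens it to $t\Delta \le \Delta^{3/2}+\Delta$, after which the total comes to at most $4\Delta^{3/2}b + 4\Delta b \le 6\Delta^{3/2}b + 2\Delta b$, exactly as required.
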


\begin{proof}
We show the contrapositive. Thus, assume that~$G$ has no induced matching of size~$b$. By \Cref{lemma:ramseyd}, this implies that $|M| < 2 \Delta b$ for every maximum matching of~$G$.
  By K\H{o}nig's theorem, $G$ thus has a vertex cover $X$ of size $|M|$.
  For a bipartition~$A, B$ of~$G$, let~$X_A := X \cap A$,~$Y_B := B \setminus X$, and~$G' := G[X_A, Y_B]$.
  Note that~$X_A$ is a vertex cover for~$G'$ and thus~$|E(G')|\le |X_A|\cdot\Delta$.
  We prove that if $|Y_B| > 3 \Delta^{3/2} b$, then $G'$ has an induced matching of size $b$.
  Let $Z_B := \{ y \in Y_B \mid \deg_{G'}(y) \ge \Delta^{1/2} \}$ and let $Z_B' := Y_B \setminus Z_B$.
  Since $G'$ has at most $\Delta \cdot |X_A| < 2 \Delta^2 b$ edges, we have $|Z_B| \le |X_A| / \Delta^{1/2} \le 2 \Delta^{3/2} b$.

  We show that $|Z_B'| \le \Delta^{3/2} b$ by induction on $b$.
  For the base case $b = 1$, observe that every vertex in $Y_B$ has at least one neighbor in $G'$ since $N_{G'}(y) = N_{G}(y) \ne \emptyset$ for every $y \in Y_B$.
  Thus, there is an induced matching of size one.
  For $b > 1$, let~$z$ be an arbitrary vertex from~$Z_B'$ and consider the graph $G_z'$ obtained by deleting~$N_{G'}(z)$ and $N_{G'}(N_{G'}(z))$.
  Note that we delete $\deg_{G'}(z) \le \Delta^{1/2}$ vertices from~$X_A$ and $|N_{G'}(N_{G'}(z))| \le \sum_{x \in N_{G'}(y)} \deg_{G'}(x) \le \Delta^{3/2}$ vertices from $Z_B'$.
  To use the inductive hypothesis on $G'_{z}$, we need to show that $G'_{z}$ has no induced matching of size~$b - 1$ and that every vertex in $Z_B' \setminus N_{G'}(N_{G'}(z))$ has at least one neighbor in~$G'_{z}$.
  If $G_z'$ has an induced matching $M'$ of size $b - 1$, then $M' \cup \{ xz \}$ is an induced matching for an arbitrary vertex~$x \in N_{G'}(z)$ since $N_{G'}(x) \cup N_{G'}(z)$ is absent from~$G'$, contradicting the assumption that $G$ has no induced matching of size~$b$.
  Note that for every $y \in Y_B \setminus N_{G'}(N_{G'}(z))$, we have $N_{G_z'}(y) = N_{G}(y) \ne \emptyset$ and hence $y$ has at least one neighbor in $G_z'$.
  Now, by the induction hypothesis, we have $|Z_B' \setminus  N_{G'}(N_{G'}(z))| \le \Delta^{3/2} (b - 1)$ and consequently, $|Z_B'| \le \Delta^{3/2} b$.

  Putting everything together, we have~$|X| < 2 \Delta b$ and~$|B \setminus X| = |Z_B| + |Z_B'| < 3 \Delta^{3/2} b$.
  We can analogously show that~$|A \setminus X| < 3 \Delta^{3/2} b$, thereby proving the lemma.
\end{proof}

The following theorem now follows from \Cref{lemma:im:ramseyd2}.
\begin{theorem}
  \textsc{Induced Matching} in bipartite graphs admits a kernel with $\Oh(\Delta^{3/2} k)$ vertices.
\end{theorem}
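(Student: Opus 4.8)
The plan is to apply Lemma~\ref{lemma:im:ramseyd2} as a Ramsey-type win condition and to reduce the number of non-isolated vertices until it either certifies a large induced matching or is polynomially bounded. The bound in the lemma involves the maximum degree $\Delta$, so the first order of business is to control $\Delta$. In the bipartite setting a single vertex $v$ of large degree has an independent neighborhood, so high degree cannot by itself be destroyed the way large matchings in neighborhoods were destroyed by \Cref{rr:im:vlargevc}. Instead, I would observe that a vertex $v$ together with many of its neighbors is only useful in one induced-matching edge, so I expect a reduction rule of the following flavor: if $v$ has more than (roughly) $ck$ neighbors, then among them an independent set survives, and using the $c$-closure one argues that $v$ can safely be kept while pruning redundant neighbors, capping the effective degree at $\Oh(ck)$.

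\textbf{Concretely}, I would first bound $\Delta$ by $\Oh(ck)$ via a degree-capping rule: if a vertex $v$ has at least $ck$ neighbors, then since $G$ is bipartite these neighbors form an independent set, and any two neighbors of $v$ share $v$ as a common neighbor, so by $c$-closure any neighbor $u$ of $v$ has fewer than $c$ further common neighbors with $v$'s other neighbors; this lets one show that leaf-like or dominated neighbors are interchangeable and can be removed down to a bounded number, mirroring the leaf- and twin-removal rules (\Cref{rr:removeleaftwin,rr:removetwin}) already used for the non-bipartite case. Once $\Delta \in \Oh(ck)$, I would invoke \Cref{lemma:im:ramseyd2} with $b = k$: if the number of non-isolated vertices is at least $6\Delta^{3/2} k + 2\Delta k$, return Yes, since an induced matching of size $k$ then exists. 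Substituting $\Delta \in \Oh(ck)$ gives a threshold of $\Oh((ck)^{3/2} k) = \Oh(c^{3/2} k^{5/2})$ non-isolated vertices.

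\textbf{The arithmetic} then needs to land on $\Oh(\Delta^{3/2} k)$ as stated, so I would keep the bound expressed in terms of $\Delta$ rather than eliminating it: after the degree-capping and standard twin/leaf cleanup, the surviving non-isolated vertices number at most $6\Delta^{3/2} k + 2\Delta k \in \Oh(\Delta^{3/2} k)$, for otherwise \Cref{lemma:im:ramseyd2} already certifies a Yes-instance. Removing all isolated vertices (trivially correct) then yields a kernel of the claimed size. Each rule runs in polynomial time, and correctness of the degree-capping rule must preserve both equivalence and $c$-closure, the latter following from \Cref{obs:removev} for deletions.

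\textbf{The main obstacle} I anticipate is the degree-capping rule: unlike \Cref{rr:im:vlargevc}, which removed a vertex whose neighborhood contained a large matching, here the dangerous structure is a high-degree vertex with an \emph{independent} neighborhood, and such a vertex is genuinely useful (it contributes an induced-matching edge). The delicate point is to prune its neighbors without losing any induced matching that routes through a neighbor we delete; the $c$-closure is what guarantees that after fixing one edge at $v$, only $\Oh(c)$ neighbors can be "blocked" by any other matched edge, so a bounded number of neighbors always suffices. Making this interchange argument tight enough to reach the $\Oh(\Delta^{3/2}k)$ bound rather than a weaker polynomial is the crux.
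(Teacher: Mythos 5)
Your third paragraph already contains the paper's entire proof, and it is correct: apply \Cref{lemma:im:ramseyd2} with $b = k$ --- if the graph has at least $6\Delta^{3/2}k + 2\Delta k$ non-isolated vertices, return Yes; otherwise remove the isolated vertices (trivially correct, and $\Delta$ does not increase by \Cref{obs:removev}-style reasoning) and output the remaining graph, which has $\Oh(\Delta^{3/2}k)$ vertices. That one-step win--win argument is all the paper does; the theorem is stated as following directly from \Cref{lemma:im:ramseyd2}. The key point you have missed is that the kernel size here is measured in terms of $\Delta$ itself, i.e., the implicit parameter is $\Delta + k$, so there is nothing to cap: the statement contains no $c$ at all and holds for arbitrary bipartite graphs, $c$-closed or not. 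A high-degree vertex is harmless because the lemma's threshold already charges for it through the $\Delta^{3/2}$ factor.

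Consequently, the degree-capping rule occupying your first two paragraphs --- which you yourself flag as the unresolved ``crux'' --- is a genuine gap in your write-up, but one the theorem does not need; you have conflated this statement with the paper's separate result that \textsc{Induced Matching} in $c$-closed bipartite graphs has a kernel with $\Oh(c^{3/2}k^{5/2})$ vertices. Even for that result the paper never prunes neighbors of high-degree vertices: it uses a win--win analysis in which, if at least $2k$ vertices have degree at least $ck$, then \Cref{lemma:ramseyc} already certifies an induced matching of size $k$, and otherwise \Cref{lemma:im:ramseyd2} is applied to the bounded-degree remainder. Your proposed pruning is indeed delicate for exactly the reason you anticipate --- a deleted neighbor of $v$ may be the endpoint of a matching edge whose partner lies far from $v$, and the interchange argument of \Cref{rr:im:vlargevc} does not transfer to an independent neighborhood --- so the right fix is simply to delete those two paragraphs: what remains is precisely the paper's argument.
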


We can also use \Cref{lemma:im:ramseyd2} to obtain a smaller kernel in $c$-closed bipartite graphs.

\begin{theorem}
  \textsc{Induced Matching} in $c$-closed bipartite graphs admits a kernel with $\Oh(c^{3/2} k^{5/2})$ vertices.
\end{theorem}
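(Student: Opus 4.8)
The plan is to combine the degree-based Ramsey-type bound of \Cref{lemma:im:ramseyd2} with a small set of degree-reduction rules, exactly mirroring the approach that yielded the general kernel in \Cref{thm:impk} but now exploiting bipartiteness to avoid the detour through the VCLP. First I would bound the maximum degree. Using \Cref{lemma:ramseyc}, if a $c$-closed bipartite graph has at least $2k$ vertices of degree at least $ck$, it already contains an induced matching of size $k$, so we may return Yes. Hence after a reduction rule that returns Yes in this case, at most $2k$ vertices have degree $\ge ck$, and we can afford to treat these high-degree vertices specially: delete each such vertex together with a suitable portion of its neighborhood, or simply argue they contribute only $\Oh(k)$ vertices directly and that the remaining graph has $\Delta < ck$. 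The upshot I am aiming for is an instance whose maximum degree is $\Oh(ck)$.

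Next I would apply \Cref{lemma:im:ramseyd2} with $\Delta \in \Oh(ck)$ and $b = k$. The lemma says that any bipartite graph with at least $6\Delta^{3/2} k + 2\Delta k$ non-isolated vertices has an induced matching of size $k$. Plugging in $\Delta \in \Oh(ck)$ gives the threshold $\Oh((ck)^{3/2} k + (ck) k) = \Oh(c^{3/2} k^{5/2})$, which is precisely the claimed kernel size. Thus the natural reduction rule is: after removing isolated vertices, if the number of (non-isolated) vertices is at least $6\Delta^{3/2} k + 2\Delta k$, return Yes; otherwise the instance already has $\Oh(c^{3/2} k^{5/2})$ vertices and we are done. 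The isolated vertices are irrelevant to \textsc{Induced Matching} and can be deleted outright, so bounding non-isolated vertices suffices.

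The steps, in order, are: (1) remove all isolated vertices; (2) apply the high-degree rule based on \Cref{lemma:ramseyc} so that at most $2k$ vertices have degree $\ge ck$, and handle those so that the remaining graph has $\Delta \in \Oh(ck)$; (3) invoke \Cref{lemma:im:ramseyd2} as a trigger that either certifies a Yes-instance or caps the vertex count at $\Oh(c^{3/2} k^{5/2})$. Throughout I must check that each rule preserves both the answer and the bipartite $c$-closed structure; deletions preserve $c$-closure by \Cref{obs:removev}, and deletions preserve bipartiteness trivially, so the correctness obligations are light. Since $2k$ extra high-degree vertices are dominated by the $\Oh(c^{3/2} k^{5/2})$ term, the final bound is as stated.

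The main obstacle I anticipate is step (2): cleanly reducing the maximum degree to $\Oh(ck)$ while keeping the instance equivalent. The delicate point is that a high-degree vertex cannot simply be deleted without justification, since it might be needed for the induced matching; the right move is to observe via \Cref{lemma:ramseyc} that the mere \emph{presence} of $2k$ such vertices forces a size-$k$ induced matching, so after the Yes-returning rule fires exhaustively we are guaranteed \emph{few} high-degree vertices rather than a small \emph{global} degree bound. I would then either restrict \Cref{lemma:im:ramseyd2} to the induced subgraph on the low-degree part (where $\Delta < ck$) and add back the $\Oh(k)$ high-degree vertices into the count, or, more cleanly, delete the at-most-$2k$ high-degree vertices and argue that doing so changes the induced-matching number by at most $2k$, which is absorbed into the final bound. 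Getting this bookkeeping right, so that the size threshold in \Cref{lemma:im:ramseyd2} is applied to a genuinely bounded-degree graph, is the crux.
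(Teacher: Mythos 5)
Your proposal is correct and follows essentially the same route as the paper: the paper also splits $V(G)$ into the set $S$ of vertices of degree at least $ck$ and $T := V(G) \setminus S$, returns Yes via \Cref{lemma:ramseyc} when $|S| \ge 2k$, and otherwise applies \Cref{lemma:im:ramseyd2} to $G[T]$ (where $\Delta_{G[T]} \le ck$) to cap $|T|$ at $4c^{3/2}k^{5/2} + 2ck^2$, giving $|V(G)| = |S| + |T| \in \Oh(c^{3/2}k^{5/2})$. This is exactly the first of your two options for handling step (2) --- counting the at most $2k$ high-degree vertices separately rather than deleting them --- so the equivalence-preservation worry you flag about deletion never arises.
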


\begin{proof}
  We first remove all isolated vertices.
  Moreover, as long as there is a vertex~$v$ with more than one leaf neighbor, we remove all but one (cf. \Cref{rr:removeleaftwin}).

  We show that the resulting $c$-closed bipartite graph $G$ has an induced matching of size $k$ whenever $|V(G)| \ge  6 c^{3/2} k^{5/2} + 2 c k^2 + c \binom{2k}{2} + 4k$.
  Let~$S \subseteq V(G)$ be the set of vertices whose degrees are at least~$ck$ and let~$T := V(G) \setminus S$.
  Since $|V(G)| = |S| + |T|$, either $|S| \ge 2k$ or $|T| \ge 6 c^{3/2} k^{5/2} + 2 c k^2 + c \binom{2k}{2} + 2k$ holds.
  If~$|S| \ge 2k$, then we can conclude that there is an induced matching of size~$k$ in~$G$ by \Cref{lemma:ramseyc}.
  Thus, it suffices to show that $G$ has an induced matching of size $k$ for $|S| < 2k$ and~$|T| \ge 6 c^{3/2} k^{5/2} + 2 c k^2 + c \binom{2k}{2} + 2k$.
  To do so, we will make use of \Cref{lemma:im:ramseyd2} to find an induced matching of size $k$ in $G[T]$.
  Note, however, that \Cref{lemma:im:ramseyd2} counts the number of non-isolated vertices.
  So we first show that~$G[T]$ does not contain many isolated vertices.
  More precisely, we prove that $|I| < c \binom{2k}{2} + 2k$, where $I$ is the set of isolated vertices.
  Since $G$ has no isolated vertex, we have $N_G(v) \subseteq S$ for every~$v$ in $I$.
  Every vertex in $G$ has at most one leaf neighbor and thus there are at most~$|S| \le 2k$ vertices of $I$ that have degree one in $G$.
  Furthermore, the number of vertices of $I$ that have at least two neighbors in $G$ is smaller than~$c \binom{2k}{2}$:
  For a bipartition $(S_1, S_2)$ of $S$ in $G$, we have $N(v) \subseteq S_1$ or $N(v) \subseteq S_2$ for every~$v \in I$.
  Moreover, at most~$c \binom{|S_1|}{2} + c \binom{|S_2|}{2}$ have at least two neighbors in $S_1$ or $S_2$.
  Since~$|S_1| + |S_2| = |S| < 2k$, we see that there are less than such $c \binom{2k}{2}$ vertices.
  Consequently, we have $|I| < c \binom{2k}{2} + 2k$.

  We have shown that $G[T]$ has at least $|T| - |I| \ge 6 c^{3/2} k^{5/2} + 2 c k^2$ non-isolated vertices.
  By definition~$\Delta_{G[T]} \le ck$ and thus the existence of an induced matching of size $k$ then follows from \Cref{lemma:im:ramseyd2}.
\end{proof}

\section{Irredundant Set}
\label{sec:irs}
A vertex set~$S \subseteq V(G)$ is \emph{irredundant} if there is a private neighbor for each vertex~$v$ in~$S$.
Here, a \emph{private neighbor} of~$v \in S$ is a vertex~$v' \in N[v]$ (possibly~$v' = v$) such that~$v' \notin N(u)$ for each $u \in S \setminus \{ v \}$.
Note that every independent set is also irredundant, but an irredundant set is not necessarily independent.

\problemdef
{Irredundant Set}
{A graph $G$ and $k \in \mathds{N}$.}
{Is there an irredundant set $S$ of at least $k$ vertices in $G$?}

\textsc{Irredundant Set} is W[1]-hard with respect to~$k$~\cite{DFR00} but it admits a kernel with at most~$(d + 1) k$ vertices in $d$-degenerate graphs.
This is because any $d$-degenerate graph on at least $(d + 1) k$ vertices contains an independent set and thus an irredundant set of at least $k$ vertices. Observe also that every bipartite graph on~$n$ vertices has an irredundant set of size at least~$n/2$ since every independent set is an irrendundant set. In other words, \textsc{Irredundant Set} admits a trivial~$2k$-vertex kernel on bipartite graphs. 
In this section, we show that \textsc{Irredundant Set} admits a kernel with $\Oh(c^{5/2} k^3)$ vertices.
Our kernelization relies on the Ramsey bound (\Cref{lemma:ramsey}) and the bound on induced matchings (\Cref{lemma:im:ramseyd2}).
We show that the following reduction rule suffices to obtain a polynomial kernel.

\begin{rrule}
  \label{rr:simplicialtwin}
  If $u, v \in V(G)$ are simplicial vertices such that $N_G[u] = N_G[v]$, then remove $v$.
\end{rrule}

\begin{lemma}
  \Cref{rr:simplicialtwin} is correct.
\end{lemma}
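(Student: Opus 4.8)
The plan is to take $G' := G - v$ and $k' := k$, so the rule outputs $(G', k)$. Since $G'$ arises from $G$ by deleting a single vertex, it is $c$-closed by \Cref{obs:removev}, and it remains only to prove that $G$ has an irredundant set of size at least $k$ if and only if $G'$ does. The backward direction is immediate: if $S'$ is irredundant in $G'$, then for each $x \in S'$ a private neighbor $w$ of $x$ in $G'$ is still a private neighbor in $G$. Indeed $w \ne v$, and adjacencies among $V(G') = V(G) \setminus \{v\}$ coincide in $G$ and $G'$, so $w \in N_G[x]$ and $w \notin N_G[y]$ for all $y \in S' \setminus \{x\}$ (the only vertex $G$ adds to any $N[\cdot]$ is $v \ne w$). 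Hence $S'$ stays irredundant in $G$.

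For the forward direction, let $S$ be irredundant in $G$ with $|S| \ge k$. I would first record two facts about the twins $u, v$. First, $S$ cannot contain both: a private neighbor $w$ of $u$ would need $w \in N_G[u]$ but $w \notin N_G[v]$ (taking $v \in S \setminus \{u\}$), impossible since $N_G[u] = N_G[v]$. Second, because $N_G[u] = N_G[v]$, the transposition $\phi$ swapping $u$ and $v$ and fixing all other vertices is an automorphism of $G$, hence maps irredundant sets to irredundant sets of the same size.

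Using these facts, I reduce to the case $v \notin S$: if $v \in S$ then $u \notin S$ by the first fact, and $S' := \phi(S) = (S \setminus \{v\}) \cup \{u\}$ is irredundant in $G$, has size $|S|$, and avoids $v$, so we may replace $S$ by $S'$. It then suffices to show that an irredundant set $S$ with $v \notin S$ is already irredundant in $G'$. Each $x \in S$ has a private neighbor in $G$, and the only obstruction is that $v$ is the private neighbor of some $x$; moreover $v$ can play this role for at most one vertex (if $v$ were private for both $x$ and $x'$, the defining condition for $x'$ would give $v \notin N_G[x]$, contradicting that for $x$). For that single $x$ I substitute $u$: from $v \in N_G[x]$ we get $x \in N_G[v] = N_G[u]$, so $u \in N_G[x]$; and from $v \notin N_G[y]$ for all $y \in S \setminus \{x\}$ together with $N_G[u] = N_G[v]$ we get $u \notin N_G[y]$ for all such $y$ (in particular $u \notin S \setminus \{x\}$, so $u$ itself is available in $V(G')$). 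Thus $u$ witnesses a private neighbor of $x$ in $G'$, while every other vertex of $S$ retains its private neighbor, which is now $\ne v$ and survives in $G'$. Hence $S$ is irredundant in $G'$ with $|S| \ge k$.

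The main obstacle is the bookkeeping of private neighbors under the deletion of $v$: one must confirm that $v$ is never indispensable, separating the two roles it can play---as a member of the solution, and as the unique private neighbor of some other solution vertex---and checking in each case that the twin $u$ can take over. The automorphism $\phi$ disposes of the first role cleanly, and the identity $N_G[u] = N_G[v]$ handles the second; the genuinely delicate point is the observation that two such twins can never both lie in an irredundant set, which is precisely what guarantees that $u$ is free to serve as the substitute.
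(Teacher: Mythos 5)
Your proof is correct and follows essentially the same route as the paper's: rule out $u,v\in S$ simultaneously, reduce to the case $v \notin S$ (the paper states this as a ``without loss of generality,'' which your twin-swap automorphism $\phi$ merely makes explicit), and then let $u$ take over $v$'s role as a private neighbor via $N_G[u] = N_G[v]$. The extra bookkeeping you carry out (uniqueness of $v$'s role, $u \notin S \setminus \{x\}$, survival of the other private neighbors) is exactly what the paper's terser argument leaves implicit, so there is no substantive difference.
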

\begin{proof}
  Let $u, v \in V(G)$ be simplicial vertices such that $N_G[u] = N_G[v]$.
  Let $G'$ be the graph obtained by removing $v$ as specified in \Cref{rr:simplicialtwin}.
  Suppose that $(G, k)$ is a Yes-instance with a solution $S$.
  It must hold that $u \notin S$ or $v \notin S$ by the definition of irredundant sets.
  Without loss of generality, assume that $v \notin S$.
  If~$v$ is a private neighbor of $w \in S$ (possibly~$w = u$), then $u$ is also a private neighbor of $w$.
  Thus, $(G', k)$ is also a Yes-instance.
  The other direction follows trivially.
  The $c$-closure is maintained by \Cref{obs:removev}.
\end{proof}

We prove that \Cref{rr:simplicialtwin} yields a kernelization of the claimed size.

\begin{theorem}
  \textsc{Irredundant Set} in $c$-closed graphs has a kernel with $\Oh(c^{5/2} k^3)$ vertices.
\end{theorem}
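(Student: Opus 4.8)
The plan is to apply \Cref{rr:simplicialtwin} exhaustively and then argue that the remaining graph has $\Oh(c^{5/2} k^3)$ vertices by using the two available Ramsey-type bounds in combination. The overall strategy is to show that if the reduced graph were too large, then it would contain a large independent set or a large induced matching, either of which immediately yields an irredundant set of size $k$ (so we could simply return Yes). First I would record the two easy facts that generate irredundant sets: an independent set of size $k$ is irredundant (each vertex is its own private neighbor), and an induced matching of size $k$ gives an irredundant set of size $k$ (pick one endpoint of each edge; the opposite endpoint serves as a private neighbor). So it suffices to bound the size of a graph that has no independent set of size $k$, no induced matching of size $k$, and on which \Cref{rr:simplicialtwin} has been applied exhaustively.

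The key step is to separate the vertices by degree. Let $S$ be the set of high-degree vertices (degree at least roughly $ck$) and $T$ the low-degree remainder, so $\Delta_{G[T]} < ck$. I would first handle $T$: if $G[T]$ (after deleting isolated vertices) is large, then by \Cref{lemma:im:ramseyd2} it contains an induced matching of size $k$, so we may return Yes; hence $|T| = \Oh((ck)^{3/2} k) = \Oh(c^{3/2} k^{5/2})$ up to the isolated vertices. The subtle point is that \Cref{lemma:im:ramseyd2} is stated for bipartite graphs, so I would either pass through a bipartite auxiliary graph (splitting $T$ via a maximum matching / König-type argument as in the proofs of \Cref{lemma:ramseyb1,lemma:ramseyb2}) or combine \Cref{lemma:im:ramseyd2} with \Cref{lemma:ramsey} in the manner those lemmas do, to obtain a non-bipartite induced-matching bound of the form $\Oh(\Delta^{3/2} k + \Delta k)$ vertices. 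Either way this contributes the $c^{3/2} k^{5/2}$ term.

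Next I would bound $S$, the high-degree vertices, and this is where the factor $c^{5/2} k^3$ comes from together with \Cref{lemma:ramsey}. The idea is that each vertex of $S$ must be ``charged'' to something controlled by $c$ and $k$. Using the $c$-closure, any two high-degree vertices share fewer than $c$ common neighbors unless adjacent, and \Cref{lemma:ramsey} guarantees that once $|S|$ exceeds $\ramsey(a,b)$ for suitable $a,b$ we find either a clique of size $a$ or an independent set of size $b=k$; the independent set again returns Yes, so only cliques remain, which are where \Cref{rr:simplicialtwin} and simpliciality enter. The plan is to argue that after exhaustive reduction there cannot be many vertices whose closed neighborhood is clique-like, so the surviving vertices in $S$ each witness an induced-matching edge or a nonedge, letting me re-invoke the induced-matching/independent-set dichotomy and push the count to $\Oh(c^{5/2} k^3)$.

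The main obstacle I expect is combining the two bounds cleanly on a non-bipartite graph: \Cref{lemma:im:ramseyd2} only gives an induced matching in the \emph{bipartite} setting, so the real work is in converting a large low-degree non-bipartite reduced graph into a bipartite instance (via a matching-based vertex split and König's theorem) without losing more than constant factors, and then verifying that \Cref{rr:simplicialtwin} really suppresses the simplicial/clique obstructions that would otherwise let a large irredundant-set-free graph survive. Balancing the exponents so that the high-degree threshold $\Theta(ck)$ makes the $T$-bound $\Oh(c^{3/2} k^{5/2})$ and the $S$-bound $\Oh(c^{5/2} k^3)$ coincide in the final $\Oh(c^{5/2} k^3)$ estimate will require choosing the degree cutoff carefully, and that calibration is the delicate part of the argument.
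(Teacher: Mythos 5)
There is a genuine gap, and it is the central one: your only sources of irredundant sets are independent sets of size $k$ and induced matchings of size $k$ in $G$, and the implication ``reduced graph large $\Rightarrow$ one of these exists'' is false in $c$-closed graphs. Consider two cliques $A = \{a_1, \dots, a_n\}$ and $B = \{b_1, \dots, b_n\}$ joined by the perfect matching $a_i b_i$. Any nonadjacent pair has exactly two common neighbors, so this graph is $3$-closed; its independence number is $2$ and its induced matching number is bounded by a small constant (any two edges inside $A$ or inside $B$ conflict, and a matching edge is adjacent to everything it touches); it has no simplicial vertices, so \Cref{rr:simplicialtwin} never fires. Yet it has $2n$ vertices for arbitrary $n$, so your bound on the reduced instance cannot hold. (The instance \emph{is} a Yes-instance---$A$ itself is irredundant with $b_i$ private for $a_i$---but your two Yes-certificates cannot detect this.) The paper's proof hinges on exactly this third kind of irredundant set, which your proposal never constructs: a subset of a large maximal clique $C$ whose private neighbors lie outside $C$. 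Concretely, the paper greedily extracts pairs $(x_i, y_i)$ with $x_i \in C$ and $y_i \in N(x_i) \setminus C$, where \Cref{obs:cliqueintersection} ($|C \cap N(y_j)| < c$) guarantees the greedy choice survives for $\alpha$ rounds when $|C| \ge c\alpha$; then, since there is no independent set of size $k$, \Cref{lemma:ramsey} applied to $\{y_1, \dots, y_\alpha\}$ yields a second clique $Y$ of size $\alpha' \in \Oh(c^{3/2}k)$; finally \Cref{lemma:im:ramseyd2} is applied to the \emph{genuinely bipartite} graph $G[X', Y']$ between the two cliques, whose maximum degree is below $c$ (on the $Y'$ side by \Cref{obs:cliqueintersection}, on the $X'$ side because $x_i \notin N(y_1)$ and $Y' \subseteq N(y_1)$, so $c$-closure applies). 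The resulting size-$k$ ``matching'' is an induced matching only in that bipartite graph, not in $G$---both sides are cliques---so it lies entirely outside your independent-set/induced-matching dichotomy. Also note that \Cref{rr:simplicialtwin} plays only the tiny role of ensuring $|C'| \ge |C| - 1$ (at most one vertex of $C$ lacks outside neighbors); it does not and cannot ``suppress clique-like neighborhoods'' as your $S$-bound assumes.

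Two secondary problems: your degree split at threshold $\Theta(ck)$ does not appear in the paper's proof and causes avoidable trouble, since bounding $T$ requires an induced-matching bound for the \emph{non-bipartite} graph $G[T]$---your suggested fix via K\H{o}nig's theorem is unsound (K\H{o}nig requires bipartiteness), and the paper's actual non-bipartite analogues (\Cref{lemma:ramseyb1,lemma:ramseyb2}) carry bounds like $\Oh(c^7 b^8)$, far too weak to recover $\Oh(c^{5/2}k^3)$. The paper instead gets the whole bound from a single application of \Cref{lemma:ramsey} with parameters $a = c\alpha + 1$ and $b = k$, where $\alpha = \ramsey(\alpha', k)$ and $\alpha' \in \Oh(c^{3/2}k)$, so that $\ramsey(c\alpha + 1, k) \in \Oh(c^{5/2}k^3)$; no degree-based decomposition is needed.
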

\begin{proof}
  We assume that \Cref{rr:simplicialtwin} has been applied exhaustively.

  To simplify notation, let $\alpha' := 6 c^{3/2} k + 2ck + 1 \in \Oh(c^{3/2} k)$ and $\alpha := \ramsey(\alpha', k) \in \Oh(c^{3/2} k^2)$.
  We claim that any instance~$(G, k)$ with at least~$\ramsey(c \alpha + 1, k) \in \Oh(c^{5/2} k^3)$ vertices is a Yes-instance.
  By \Cref{lemma:ramsey},~$G$ has a clique of size~$c \alpha + 1$ or an independent set of size~$k$.
  Since any independent set is also an irredundant set, $(G, k)$ is a Yes-instance when~$G$ contains an independent set of size $k$.
  Thus, we can simply return Yes when the algorithm of \Cref{lemma:ramsey} finds an independent set.
  Suppose that our algorithm finds a clique of size $c \alpha + 1$.
  Let $C$ be the maximal clique containing it.

  It remains to show that~$(G, k)$ is a Yes-instance.
  Let~$C' = \{ v \in C \mid N_G(v) \setminus C \ne \emptyset \}$ be the set of vertices in~$C$ that have at least one neighbor outside~$C$.
  There exists at most one vertex~$v$ with $N_G[v] = C$ by \Cref{rr:simplicialtwin} and thus $|C'| \ge |C| - 1 \ge c \alpha$.
  Let $G' = G - (C \setminus C')$.
  That is,~$G'$~is a graph obtained by removing the vertex whose closed neighborhood is~$C$, if it exists.
  For each~$i \in [\alpha]$, we will choose vertices $x_i \in C'$ and~$y_i \in N_{G'}(C')$ as follows:
  Let $x_i$ be an arbitrary vertex in $C' \setminus \bigcup_{j \in [i - 1] } N_{G'}(y_j)$ and let~$y_i$ be an arbitrary vertex in $N_{G'}(x_i)$.
  Note that $C' \setminus \bigcup_{j \in [i - 1] } N_{G'}(y_j) \ne \emptyset$ for each $i \in [\alpha]$, because $|C'| \ge c \alpha$ and $y_j$ has less than~$c$~neighbors in $C'$ for all $j \in [i - 1]$ by \Cref{obs:cliqueintersection}.
  Observe that $x_i y_i \in E(G)$ for every $i \in [\alpha]$ and that $x_j y_i \notin E(G)$ for every $i < j \in [\alpha]$ by the choice of $x_i$ and $y_i$.

  We apply \Cref{lemma:ramsey} on $y_1, \dots, y_{\alpha}$ to search for an independent set of size $k$ or a clique of size $\alpha'$.
  In the former case, we again simply return Yes.
  Let $Y = \{ y_{i_1}, \dots, y_{i_{\alpha'}} \}$ be a clique of size~$\alpha'$ and let~$X = \{ x_{i_1}, \dots, x_{i_\alpha'} \}$.
  Without loss of generality, we assume that~$i_1 < i_j$ for every~$j \in [2, \alpha']$.
  For $X' = X \setminus \{ x_{i_1} \}$ and $Y' = Y \setminus \{ y_{i_1} \}$, we prove that the bipartite graph $G[X', Y']$ has an induced matching of size $k$, using \Cref{lemma:im:ramseyd2}.
  First we show that~$\Delta_{G[X', Y']} < c$.
  All vertices in $Y'$ have less than $c$ neighbors in $X'$ by \Cref{obs:cliqueintersection}.
  Since $i_1 < i_j$, we have $x_{i_j} \notin N_G(y_{i_1})$ for all $j \in [2, \alpha']$.
  It follows from the $c$-closure of~$G$ that~$x_{i_j}$ has less than $c$ neighbors in $Y' \subseteq N_G(y_{i_1})$ for each $j \in [2, \alpha']$.
  Thus, we have~$\Delta_{G[X', Y']} < c$.
  Note that we choose $x_i$ and $y_i$ such that there is an edge~$x_i y_i \in E(G)$ for each $i \in [\alpha]$.
  So $G[X', Y']$ has no isolated vertices.
  Therefore, it follows from \Cref{lemma:im:ramseyd2} that there is an induced matching $\{ x_{i_1'} y_{i_1'}, \dots, x_{i_k'} y_{i_k'} \}$ of size $k$ in $G[X', Y']$.
  Now, the set~$\{ x_{i_1'}, \dots, x_{i_k'} \}$ is an irredundant set in $G$, where $y_{i_j'}$ is a private neighbor of $x_{i_j'}$ for each~$j \in [k]$.
\end{proof}

\section{Conclusion}
We have demonstrated that the $c$-closure of a graph can be exploited in the design of
parameterized algorithms for well-studied graph problems. We believe that the $c$-closure
could become a standard secondary parameter just as the maximum degree~$\Delta$ or the
degeneracy~$d$ of the input graph and that studying problems with respect to this
parameter may often lead to useful tractability results, as evidenced by the FPT-algorithms and kernelizations that have been obtained recently for $c$-closed graphs~\cite{FRSWW20,HR20,KMR+22,KKNS22,KKS20a,KKS21,KN21}. In essence, whenever one obtains
a fixed-parameter algorithm that uses~$\Delta$ as one of its
parameters, one should ask whether~$\Delta$ can be replaced by the~$c$-closure of the
input graph. 
As concrete applications of the $c$-closure parameterization, one could
consider further graph problems that are hard with respect to the solution size.
In the extended abstract of this work we asked whether \textsc{Perfect Code} is fixed-parameter tractable with respect to~$c+k$ where~$k$ is the size of the code~\cite{KKS20}.
Very recently, Kanesh et al.~\cite{KMR+22} answered this question positively.
For this, Kanesh et al. exploited, among other things, our Ramsey bound on $c$-closed graphs (see \Cref{lemma:ramsey}).
Further problems to investigate could be \textsc{$r$-Regular Induced
  Subgraph} which is W[1]-hard when parameterized by the subgraph size~\cite{MT09} or
cardinality-constrained optimization problems in graphs where we search for a vertex set of size exactly~$k$ maximizing some objective function~\cite{Cai08}.
 These
problems are often fixed-parameter tractable for the combination of the cardinality
constraint~$k$ and the maximum degree~$\Delta$~\cite{Cai08,KS15}. We showed recently that a certain class of such problems which is concerned with maximizing or minimizing the number of incident edges is fixed-parameter tractable with respect to the combination of~$k$ and~$c$~\cite{KKNS22}.  Are there further fixed-cardinality optimization problems that are also
fixed-parameter tractable for this combined parameter? 

\bibliographystyle{plainurl}

\end{document}